\title{Parameterized Complexity of Independent Set in H-Free Graphs}
\author{\'{E}douard Bonnet}{Univ Lyon, CNRS, ENS de Lyon, Université Claude Bernard Lyon 1, LIP UMR5668, France}{}{}{}
\author{Nicolas Bousquet}{CNRS, G-SCOP laboratory, Grenoble-INP, France}{}{}{}
\author{Pierre Charbit}{Université Paris Diderot, IRIF, France}{}{}{}
\author{St\'ephan Thomass\'e}{Univ Lyon, CNRS, ENS de Lyon, Université Claude Bernard Lyon 1, LIP UMR5668, France\\
Institut Universitaire de France}{}{}{}
\author{R\'emi Watrigant}{Univ Lyon, CNRS, ENS de Lyon, Université Claude Bernard Lyon 1, LIP UMR5668, France}{}{}{}
\authorrunning{\'E. Bonnet, N. Bousquet, P. Charbit, S. Thomass\'e, R. Watrigant}
\subjclass{Theory of computation $\rightarrow$ Fixed parameter tractability}
\keywords{Parameterized Algorithms, Independent Set, H-Free Graphs}
\newcommand\footnoteref[1]{\protected@xdef\@thefnmark{\ref{#1}}\@footnotemark}
\newcommand{\indepset}{\textsc{Maximum Independent Set}\xspace}
\newcommand{\shortindepset}{\textsc{MIS}\xspace}
\newcommand{\windepset}{\textsc{Max Weighted Independent Set}\xspace}
\theoremstyle{plain}
\newtheorem{conjecture}[theorem]{Conjecture}
\newtheorem{proposition}[theorem]{Proposition}
\newenvironment{claim}[1]{\medskip\par\noindent\textit{\textbf{Claim:}}\space#1}{}
\newenvironment{claimproof}[1]{\par\noindent\textit{Proof of claim:}\space#1}{\hfill $\triangleleft$\medskip}
\newcommand{\ie}{\textit{i.e.}\xspace}
\newcommand{\etal}{\textit{et al.}\xspace}
\newcommand{\wrt}{\textit{w.r.t.}\xspace}
\newcommand{\C}{\mathcal{C}}
\renewcommand{\S}{\mathcal{S}}
\newcommand{\X}{\mathcal{X}}
\newcommand{\G}{\mathcal{G}}
\newcommand{\B}{\mathcal{B}}
\newcommand{\N}{\mathbb{N}}
\newcommand{\Ramsey}{Ram}
\begin{document}

\maketitle

\begin{abstract}
In this paper, we investigate the complexity of \indepset (\shortindepset) in the class of $H$-free graphs, that is, graphs excluding a fixed graph as an induced subgraph. 
Given that the problem remains $NP$-hard for most graphs $H$, we study its fixed-parameter tractability and make progress towards a dichotomy between $FPT$ and $W[1]$-hard cases.
We first show that \shortindepset remains $W[1]$-hard in graphs forbidding simultaneously $K_{1, 4}$, any finite set of cycles of length at least $4$, and any finite set of trees with at least two branching vertices.
In particular, this answers an open question of Dabrowski \etal concerning $C_4$-free graphs.
Then we extend the polynomial algorithm of Alekseev when $H$ is a disjoint union of edges to an $FPT$ algorithm when $H$ is a disjoint union of cliques.
We also provide a framework for solving several other cases, which is a generalization of the concept of \emph{iterative expansion} accompanied by the extraction of a particular structure using Ramsey's theorem. Iterative expansion is a maximization version of the so-called \emph{iterative compression}.
We believe that our framework can be of independent interest for solving other similar graph problems.
Finally, we present positive and negative results on the existence of polynomial (Turing) kernels for several graphs $H$.
\end{abstract}

\section{Introduction}\label{sec:intro}

Given a simple graph $G$, a set of vertices $S \subseteq V(G)$ is an \emph{independent set} if the vertices of this set are all pairwise non-adjacent. Finding an independent set with maximum cardinality is a fundamental problem in algorithmic graph theory, and is known as the \shortindepset problem (\shortindepset, for short)~\cite{GJ79}. In general graphs, it is not only $NP$-hard, but also not approximable within $O(n^{1-\epsilon})$ for any $\epsilon > 0$ unless $P=NP$~\cite{Zuk07}, and $W[1]$-hard~\cite{DF13} (unless otherwise stated, $n$ always denotes the number of vertices of the input graph). 
Thus, it seems natural to study the complexity of \shortindepset in restricted graph classes. One natural way to obtain such a restricted graph class is to forbid some given pattern to appear in the input. For a fixed graph $H$, we say that a graph is \emph{$H$-free} if it does not contain $H$ as an induced subgraph.
Unfortunately, it turns out that for most graphs $H$, \shortindepset in $H$-free graphs remains $NP$-hard, as shown by a very simple reduction first observed by Alekseev:

\begin{theorem}[\cite{Ale82}]\label{thm:nph}
Let $H$ be a connected graph which is neither a path nor a subdivision of the claw. Then \shortindepset is NP-hard in $H$-free graphs.
\end{theorem}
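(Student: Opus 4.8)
The plan is to give a polynomial-time many-one reduction from \shortindepset restricted to graphs of maximum degree $3$ (which is well known to remain $NP$-hard), whose only engine is edge subdivision. Write $N := |V(H)|$, let $\alpha(\cdot)$ denote the independence number, and given a subcubic graph $G$ let $G'$ be obtained from $G$ by subdividing every edge exactly $2N$ times. Two things must then be checked: first, that $\alpha(G')$ determines $\alpha(G)$; second, that $G'$ is $H$-free. The number of vertices of $G'$ is $|V(G)| + 2N|E(G)|$ with $N$ a fixed constant, so the construction is polynomial.

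For the first point I would invoke the classical fact that subdividing a single edge twice raises the independence number by exactly one. Writing $G''$ for the graph in which an edge $uv$ is replaced by a four-vertex path $u\text{-}a\text{-}b\text{-}v$, the bound $\alpha(G'') \ge \alpha(G)+1$ follows by taking a maximum independent set $S$ of $G$: since $uv \in E(G)$, at most one of $u,v$ lies in $S$, and $S$ can then be extended by $a$ or by $b$. For $\alpha(G'') \le \alpha(G)+1$, take a maximum independent set $T$ of $G''$ and inspect $T \cap \{u,a,b,v\}$, which is an independent set of a four-vertex path and hence has size at most $2$; in each case deleting at most one vertex of $T$ yields an independent set of $G$. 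Iterating this ``subdivide twice'' step $N$ times on each edge gives $\alpha(G') = \alpha(G) + N|E(G)|$, so $\alpha(G) \ge t \iff \alpha(G') \ge t + N|E(G)|$.

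The heart of the argument is showing $G'$ is $H$-free. Observe that $G'$ is again subcubic (subdivision vertices have degree $2$), and that every cycle of $G'$ has length more than $N$ (or else $G'$ is a forest). Suppose toward a contradiction that $\phi$ is an embedding of $H$ as an induced subgraph of $G'$. Since $H$ is connected and is neither a path nor a subdivision of the claw, one of three cases holds. If $H$ has a vertex of degree at least $4$, this contradicts $\Delta(G') \le 3$. Otherwise $\Delta(H) \le 3$, and if $H$ is not a forest it contains a cycle of length at most $N$, which $G'$ does not. So $H$ is a tree with $\Delta(H) \le 3$ that is neither a path nor a subdivided claw; a tree having at most one vertex of degree $3$ and no vertex of larger degree is precisely a path or a subdivided claw, so $H$ has two distinct vertices $x \ne y$ of degree exactly $3$. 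Then $\phi(x)$ and $\phi(y)$ have degree at least $3$ in $G'$, hence are not subdivision vertices, i.e.\ they are original vertices of $G$; but $d_{G'}(\phi(x),\phi(y)) \le d_H(x,y) \le N-1$, whereas two distinct original vertices are at distance at least $2N+1$ in $G'$ — a contradiction. Hence $G'$ is $H$-free and the reduction is complete.

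The main obstacle, and the only place where the full hypothesis on $H$ is used, is ruling out an induced copy of $H$ in $G'$ in the last case, where $H$ is a \emph{tree of maximum degree three}: subdivision conveniently destroys high-degree vertices (via subcubicity) and cycles (via girth), but not bounded-degree trees, so one must instead exploit the \emph{second} branching vertex together with the distance blow-up created by the subdivisions, and argue carefully about which vertices of $G'$ can host the branching vertices of $H$.
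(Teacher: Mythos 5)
Your proposal is correct, and it is precisely the classical subdivision argument of Alekseev that the paper cites for Theorem~\ref{thm:nph} without reproving it: reduce from \shortindepset on subcubic graphs, subdivide each edge an even number of times so that $\alpha$ shifts by a controlled amount, and observe that the resulting graphs are subcubic, of large girth, and have branching vertices far apart, so they exclude every connected $H$ that is neither a path nor a subdivision of the claw. Your handling of the three cases (degree at least $4$, a cycle, two branching vertices) and of the independence-number bookkeeping matches the intended proof.
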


On the positive side, the case of $P_t$-free graphs has attracted a lot of attention during the last decade. While it is still open whether there exists $t \in \N$ for which \shortindepset is $NP$-hard in $P_t$-free graphs, quite involved polynomial-time algorithms were discovered for $P_5$-free graphs~\cite{LoVaVi14}, and very recently for $P_6$-free graphs~\cite{GrKlPiPi17}. In addition, we can also mention the recent following result: \shortindepset admits a subexponential algorithm running in time $2^{O(\sqrt{t n \log n})}$ in $P_t$-free graphs for every $t \in \N$ \cite{BaLoMaPiTuLe18}.

The second open question concerns the subdivision of the claw. Let $S_{i,j,j}$ be a tree with exactly three vertices of degree one, being at distance $i$, $j$ and $k$ from the unique vertex of degree three. The complexity of \shortindepset is still open in $S_{1, 2, 2}$-free graphs and $S_{1, 1, 3}$-free graphs. In this direction, the only positive results concern some subcases: it is polynomial-time solvable in $(S_{1, 2, 2}, S_{1, 1, 3}, dart)$-free graphs \cite{Ka17}, $(S_{1, 1, 3}, banner)$-free graphs and $(S_{1, 1, 3}, bull)$-free graphs \cite{KaMa17}, where $dart$, $banner$ and $bull$ are particular graphs on five vertices.

Given the large number of graphs $H$ for which the problem remains $NP$-hard, it seems natural to investigate the existence of parameterized algorithms\footnote{For the sake of simplicity, ``\shortindepset'' will denote the optimisation, decision and parameterized version of the problem (in the latter case, the parameter is the size of the solution), the correct use being clear from the context.}, that is, determining the existence of an independent set of size $k$ in a graph with $n$ vertices in time $O(f(k)n^c)$ for some computable function $f$ and constant $c$.
A very simple case concerns $K_r$-free graphs, that is, graphs excluding a clique of size $r$. In that case, Ramsey's theorem implies that every such graph $G$ admits an independent set of size $\Omega(n^{\frac{1}{r-1}})$, where $n=|V(G)|$. In the $FPT$ vocabulary, it implies that \shortindepset in $K_r$-free graphs has a kernel with $Ok^{r-1})$ vertices.

To the best of our knowledge, the first step towards an extension of this observation within the $FPT$ framework is the work of Dabrowski \etal~\cite{Dabrowski12} (see also Dabrowski's PhD manuscript~\cite{DabrowskiThesis}) who showed, among others, that for any positive integer $r$, \windepset is FPT in $H$-free graphs when $H$ is a clique of size $r$ minus an edge.
In the same paper, they settle the parameterized complexity of \shortindepset on almost all the remaining cases of $H$-free graphs when $H$ has at most four vertices.
The conclusion is that the problem is $FPT$ on those classes, except for $H=C_4$ which is left open. We answer this question by showing that \shortindepset remains $W[1]$-hard in a subclass of $C_4$-free graphs. On the negative side, it was proved that \shortindepset remains $W[1]$-hard in $K_{1,4}$-free graphs \cite{HeMnLe14}

Finally, we can also mention the case where $H$ is the \emph{bull} graph, which is a triangle with a pending vertex attached to two different vertices. For that case, a polynomial Turing kernel was obtained~\cite{ThTrVu17} then improved~\cite{PedCrSa18}.

\subsection{Our results}

In Section~\ref{sec:W-hardness}, we present three reductions proving $W[1]$-hardness of \shortindepset in graph excluding several graphs as induced subgraphs, such as $K_{1, 4}$, any fixed cycle of length at least four, and any fixed tree with two branching vertices. We propose a definition of a graph decomposition whose aim is to capture all graphs which can be excluded using our reductions. 

In Section~\ref{sec:disjoint-union-cliques}, we extend the polynomial algorithm of Alekseev when $H$ is a disjoint union of edges to an $FPT$ algorithm when $H$ is a disjoint union of cliques.

In Section~\ref{sec:positive-two}, we present a general framework extending the technique of \emph{iterative expansion}, which itself is the maximization version of the well-known iterative compression technique. We apply this framework to provide $FPT$ algorithms when $H$ is a clique minus a complete bipartite graph, a clique minus a triangle, and when $H$ is the so-called \emph{gem} graph.

Finally, in Section~\ref{sec:kernels}, we focus on the existence of polynomial (Turing) kernels. 
We first strenghten some results of the previous section by providing polynomial (Turing) kernels in the case where $H$ is a clique minus a claw.
Then, we prove that for many $H$, \shortindepset on $H$-free graphs does not admit a polynomial kernel, unless $NP \subseteq coNP/poly$.
Our results allows to obtain the complete dichotomy polynomial/polynomial kernel (PK)/no PK but polynomial Turing kernel/$W[1]$-hard for all possible graphs on four vertices, while only five graphs on five vertices remain open for the $FPT$/$W[1]$-hard dichotomy.

\subsection{Notation}

For classical notation related to graph theory or fixed-parameter tractable algorithms, we refer the reader to the monographs~\cite{Die12} and~\cite{DF13}, respectively.
For an integer $r \ge 2$ and a graph $H$ with vertex set $V(H)=\{v_1, \dots, v_{n_H}\}$ with $n_H \le r$, we denote by $K_r \setminus H$ the graph with vertex set $\{1, \dots, r\}$ and edge set $\{ab : 1 \le a, b \le r$ such that $v_av_b \notin E(H)\}$.
For $X \subseteq V(G)$, we write $G \setminus X$ to denote $G[V(G) \setminus X]$.
For two graphs $G$ and $H$, we denote by $G \uplus H$ the \emph{disjoint union} operation, that is, the graph with vertex set $V(G) \cup V(H)$ and edge set $E(G) \cup E(H)$. We denote by $G + H$ the \emph{join} operation of $G$ and $H$, that is, the graph with vertex set $V(G) \cup V(H)$ and edge set $E(G) \cup E(H) \cup \{uv: u \in V(G), v \in V(H)\}$.
For two integers $r, k$, we denote by $\Ramsey(r, k)$ the Ramsey number of $r$ and $k$, \ie the minimum order of a graph to contain either a clique of size $r$ or an independent set of size $k$. We write for short $\Ramsey(k)=\Ramsey(k,k)$. Finally, for $\ell, k > 0$, we denote by $\Ramsey_{\ell}(k)$ the minimum order of a complete graph whose edges are colored with $\ell$ colors to contain a monochromatic clique of size $k$.

\section{$W[1]$-hardness}\label{sec:W-hardness}

\subsection{Main reduction}
We have the following:

\begin{theorem}\label{thm:Whard}
For any $p_1 \ge 4$ and $p_2 \ge 1$, \shortindepset remains $W[1]$-hard in graphs excluding simultaneously the following graphs as induced subgraphs:
\begin{itemize}
	\item $K_{1, 4}$
	\item $C_4$, $\dots$, $C_{p_1}$
	\item any tree $T$ with two branching vertices\footnote{A branching vertex in a tree is a vertex of degree at least $3$.} at distance at most $p_2$.
\end{itemize}
\end{theorem}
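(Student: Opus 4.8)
The plan is to reduce from a $W[1]$-hard variant of \shortindepset, namely \shortindepset on graphs of bounded degree or, more conveniently, the multicolored/partitioned version of \shortindepset where the vertex set is split into $k$ groups $V_1, \dots, V_k$ and we look for an independent set with one vertex per group. Actually, since we want to exclude short cycles, the cleanest starting point is \shortindepset parameterized by $k$ on a graph $G$, and we will ``subdivide-and-blow-up'' to kill short induced cycles and large induced stars while preserving the answer. The overall shape is: replace each vertex of $G$ by a gadget (a clique or a carefully chosen structure), replace each edge by a long ``path-like'' connector, and argue that a size-$k$ independent set in $G$ corresponds to an independent set of size $g(k)$ in the new graph $G'$, for some function $g$ depending only on $k$ (and on $p_1$), and conversely.

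The key steps, in order, would be: (1) Fix the source problem as \shortindepset (or its multicolored variant) on general graphs, $W[1]$-hard by \cite{DF13}. (2) Design the vertex gadget so that it is $K_{1,4}$-free — this forces the gadget to have no vertex with three pairwise-nonadjacent neighbors, so cliques (and unions of few cliques glued appropriately) are the natural building blocks. (3) Design edge connectors: to forbid $C_4, \dots, C_{p_1}$ we must make any induced cycle long, so each edge of $G$ is replaced by a connector of length $> p_1$; to keep $K_{1,4}$-freeness and to control which trees with two branching vertices appear, the connector should be built from a sequence of small cliques sharing single vertices (a ``clique path''), so that locally every vertex sees at most a bounded independent set and every induced subtree has its branching vertices far apart — at distance more than $p_2$ — which is exactly where the parameter $p_2$ enters. (4) Prove the forward direction: an independent set of size $k$ in $G$ extends, by choosing the ``right side'' of each connector, to an independent set of the target size in $G'$. (5) Prove the backward direction: any sufficiently large independent set in $G'$ must, on each connector, behave consistently (pick essentially one side), hence projects to an independent set of size $k$ in $G$; here one uses that within each clique only one vertex can be taken. (6) Verify the three forbidden-subgraph conditions on $G'$ by a local case analysis: $K_{1,4}$ because every vertex's neighborhood is covered by at most three cliques; no $C_i$ for $4 \le i \le p_1$ because every induced cycle must traverse a full connector and is therefore longer than $p_1$; and no tree $T$ with two branching vertices within distance $p_2$ because a branching vertex can only occur near the "endpoints" where a connector meets a vertex gadget, and consecutive such spots are separated by a connector of length $> p_2$.

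The main obstacle I expect is step (3) together with step (6): simultaneously forbidding \emph{all} short cycles, \emph{all} $K_{1,4}$, and \emph{all} trees with two nearby branching vertices is a delicate balancing act, because making connectors long and sparse to kill cycles tends to create induced stars or low-degree branching structure, while thickening them into cliques to kill stars tends to create short cycles. The resolution is almost certainly to use connectors that are ``paths of triangles'' (or paths of $K_t$'s) sharing one vertex at a time, long enough ($> \max(p_1,p_2)$ segments), and to route all adjacencies so that branching only ever happens at the $O(1)$-neighborhood of an original vertex gadget. I would then carefully check that the only branching vertices in any induced subgraph of $G'$ lie in these gadget-neighborhoods, which are pairwise far apart, giving the $p_2$ condition; and that contracting a connector to test for an induced cycle forces length $> p_1$. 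A secondary obstacle is bookkeeping the exact target independent-set size $g(k)$ and proving the backward direction cleanly — this is routine but needs care since connectors contribute a fixed number of vertices to \emph{every} large independent set regardless of the $G$-solution, so one subtracts that off.
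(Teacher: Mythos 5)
Your plan has a fatal parameter-budget problem that comes directly from your choice of source problem. You reduce from (multicolored) \shortindepset on general graphs and give every \emph{edge} of $G$ its own long connector built from cliques. But a connector that is long enough to kill $C_4,\dots,C_{p_1}$ is a path-like chain of cliques, and such a chain unavoidably contains (and contributes to every large independent set) a number of vertices proportional to its length; with $m=|E(G)|$ connectors the graph $G'$ always has an independent set of size $\Omega(m)$, irrespective of whether $G$ has an independent set of size $k$. You acknowledge this and propose to ``subtract that off'', but then the target solution size is $k$ plus a quantity depending on the \emph{instance size}, not on $k$ alone, so the map is not a parameterized reduction and proves no $W[1]$-hardness. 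The paper avoids exactly this trap by reducing from \textsc{Grid Tiling}: there the constraint structure is a $k\times k$ grid, so the whole construction consists of $O(k^2)$ tile gadgets, each made of $8(p+1)$ cliques, and the target parameter is $8(p+1)k^2$, a function of $k$ only (with $p=\max\{p_1,p_2\}$ a constant).

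A second gap is that you never say how a connector actually \emph{encodes} the constraint ``not both endpoints chosen'' while staying $C_4$-free. Between two cliques, a $C_4$-free interaction essentially forbids two independent non-edges, so arbitrary conflict relations (anti-matching-like patterns, which is what a per-edge ``not both'' constraint amounts to after blowing vertices into cliques) create induced $C_4$'s; only monotone, half-graph-like relations survive. The paper's construction is built around this: choices are indices in cliques of size $n$, half-graphs propagate the inequality $x_1\le x_2\le\cdots$ along a \emph{cycle} of cliques (forcing equality, Lemma~\ref{lemma:propagationgadget}), and the only constraints that need to be checked are the row/column \emph{compatibility} relations of \textsc{Grid Tiling}, which are themselves monotone comparisons and hence $C_4$-free across adjacent cliques; claw-freeness of three consecutive cliques then gives $K_{1,4}$-freeness, and the paths of $p+1$ cliques between branching points give both the girth-type condition and the distance-$>p_2$ separation of branching vertices. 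Your step (3)/(6) balancing act is real, but without the monotone-propagation idea and a source problem whose constraints are comparisons on a parameter-bounded grid, the construction you sketch cannot be completed.
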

\begin{proof}
Let $p = \max\{p_1, p_2\}$. We reduce from \textsc{Grid Tiling}, where the input is composed of $k^2$ sets $S_{i,j} \subseteq [m] \times [m]$ ($0 \le i, j \le k-1$), called \emph{tiles}, each composed of $n$ elements. The objective of \textsc{Grid Tiling} is to find an element $s^*_{i,j} \in S_{i, j}$ for each $0 \le i, j \le k-1$, such that $s^*_{i,j}$ agrees in the first coordinate with $s^*_{i, j+1}$, and agrees in the second coordinate with $s^*_{i+1, j}$, for every $0 \le i, j \le k-1$ (incrementations of $i$ and $j$ are done modulo $k$). In such case, we say that $\{s^*_{i,j}$, $0 \le i,j \le k-1\}$ is a \emph{feasible solution} of the instance.
It is known that \textsc{Grid Tiling} is $W[1]$-hard parameterized by $k$ \cite{CyFoKoLoMaPiPiSa15}. 

Before describing formally the reduction, let us give some definitions and ideas. Given $s=(a,b)$ and $s'=(a',b')$, we say that $s$ is \emph{row-compatible} (resp. \emph{column-compatible}) with $s'$ if $a \ge a'$ (resp. $b \ge b'$)\footnote{Notice that the row-compatibility (resp. column-compatibility) relation is not symmetrical.}. Observe that a solution $\{s^*_{i,j}$, $0 \le i,j \le k-1\}$ is feasible if and only if $s^*_{i,j}$ is row-compatible with $s^*_{i, j+1}$ and column-compatible with $s^*_{i+1, j}$ for every $0 \le i,j \le k-1$ (incrementations of $i$ and $j$ are done modulo $k$). Informally, the main idea of the reduction is that, when representing a tile by a clique, the row-compatibility (resp. column-compatibility) relation (as well at its complement) forms a $C_4$-free graph when considering two consecutive tiles, and a claw-free graph when considering three consecutive tiles. The main difficulty is to forbid the desired graphs to appear in the ``branchings'' of tiles. We now describe the reduction. 

\begin{center}
\begin{figure}
\includegraphics[scale=0.77]{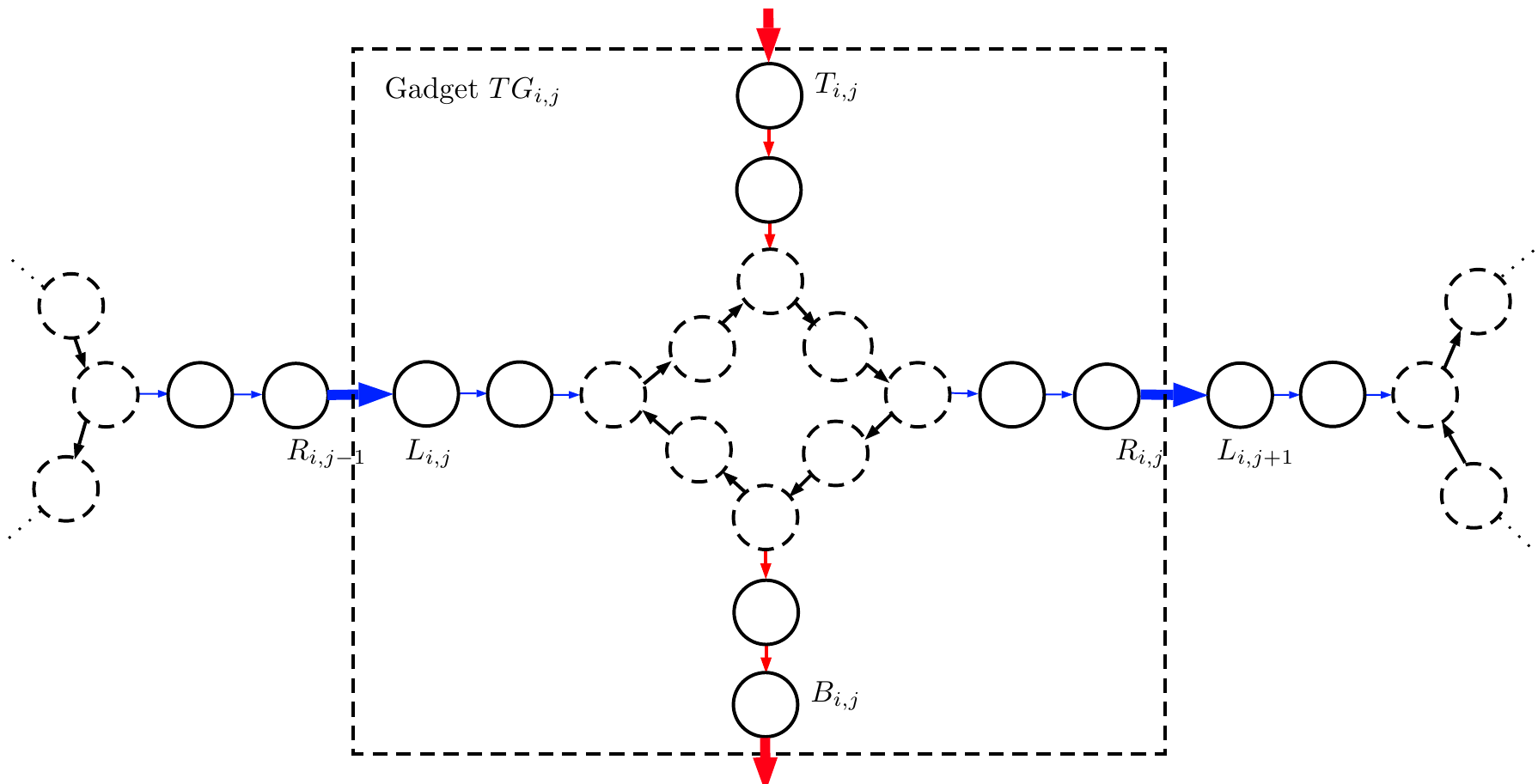}
\caption{Gadget $TG^{i,j}$ representing a tile and its adjacencies with $TG_{i,j-1}$ and $TG_{i, j+1}$, for $p=1$. Each circle is a clique on $n$ vertices (dashed cliques are the cycle cliques). Black, blue and red arrows represent respectively type $T_h$, $T_r$ and $T_c$ edges (bold arrows are between two gadgets). Figures~\ref{fig:halfgraph} and \ref{fig:tile} represent some adjacencies in more details.}
\label{fig:gadget}
\end{figure}
\end{center}

\begin{center}
\begin{figure}
\begin{subfigure}[b]{0.45\textwidth}
\includegraphics[scale=0.2]{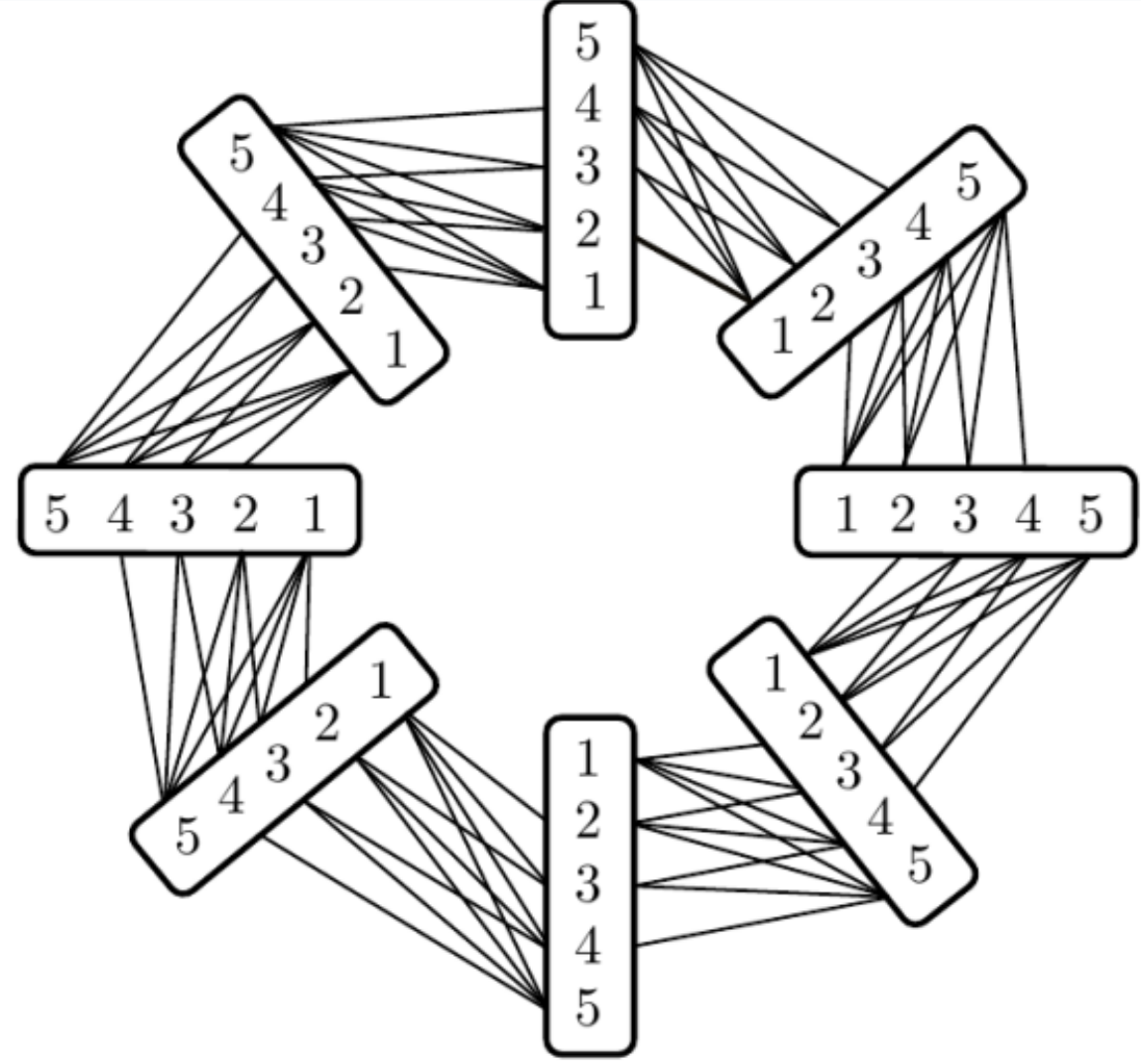}
\caption{Adjacencies between cycle cliques (represented by dashed circles in Figure~\ref{fig:gadget}).}
\label{fig:halfgraph}
\end{subfigure}
\begin{subfigure}[b]{0.55\textwidth}
\includegraphics[scale=1.1]{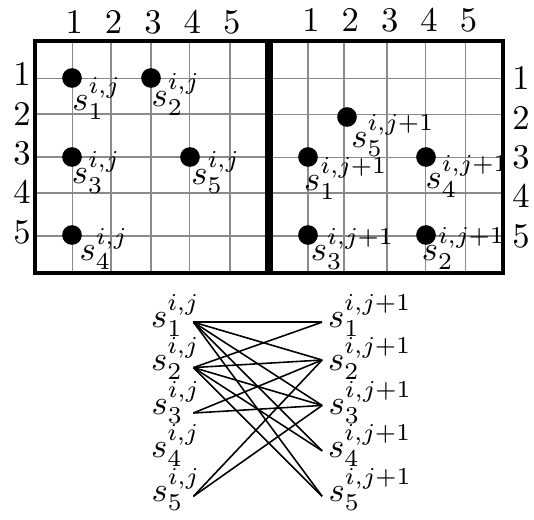}
\caption{Two consecutive tiles and the representation of their adjacencies (representing type $T_r$ adjacencies).}
\label{fig:tile}
\end{subfigure}
\caption{Some example of adjacencies within the first reduction.}
\end{figure}
\end{center}

\subsubsection{Tile gadget.}
For every tile $S_{i, j} = \{s^{i,j}_1, \dots, s^{i, j}_n\}$, we construct a \emph{tile gadget} $TG_{i, j}$, depicted in Figure~\ref{fig:gadget}. 
Notice that this gadget shares some ideas with the $W[1]$-hardness of the problem in $K_{1,4}$-free graphs by Hermelin \etal \cite{HeMnLe14}.
To define this gadget, we first describe an oriented graph with three types of arcs (type $T_h$, $T_r$ and $T_c$, which respectively stands for \emph{half graph}, \emph{row} and \emph{column}, this meaning will become clearer later), and then explain how to represent the vertices and arcs of this graph to get the concrete gadget. 
Consider first a directed cycle on $4p+4$ vertices $c_1$, $\dots$, $c_{4p+4}$ with arcs of type $T_h$. 
Then consider four oriented paths on $p+1$ vertices: $P_1$, $P_2$, $P_3$ and $P_4$. $P_1$ and $P_3$ are composed of arcs of type $T_c$, while $P_2$ and $P_4$ are composed of arcs of type $T_r$.
Put an arc of type $T_c$ between:
\begin{itemize}
	\item the last vertex of $P_1$ and $c_1$,
	\item $c_{2p+3}$ and the first vertex of $P_3$,
\end{itemize}
and an arc of type $T_r$ between:
\begin{itemize}
	\item $c_{p+2}$ and the first vertex of $P_2$,
	\item the last vertex of $P_4$ and $c_{3p+4}$.
\end{itemize}
Now, replace every vertex of this oriented graph by a clique on $n$ vertices, and fix an arbitrary ordering on the vertices of each clique.
For each arc of type $T_h$ between $c$ and $c'$, add a half graph\footnote{Notice that our definition of half graph slighly differs from the usual one, in the sense that we do not put edges relying two vertices of the same index. Hence, our construction can actually be seen as the complement of a half graph (which is consistent with the fact that usually, both parts of a half graph are independent sets, while they are cliques in our gadgets).} between the corresponding cliques: connect the $a^{th}$ vertex of the clique representing $c$ with the $b^{th}$ vertex of the clique representing $c'$ iff $a > b$. 
For every arc of type $T_r$ from a vertex $c$ to a vertex $c'$, connect the $a^{th}$ vertex of the clique representing $c$ with the $b^{th}$ vertex of the clique representing $c'$ iff $s_a^{i,j}$ is \emph{not} row-compatible with $s_b^{i,j}$.
Similarly, for every arc of type $T_c$ from a vertex $c$ to a vertex $c'$, connect the $a^{th}$ vertex of the clique representing $C$ with the $b^{th}$ vertex of the clique representing $c'$ iff $s_a^{i,j}$ is \emph{not} column-compatible with $s_b^{i,j}$.
The cliques corresponding to vertices of this gadget are called the \emph{main cliques} of $TG_{i,j}$, and the cliques corresponding to the central cycle on $4p+4$ vertices are called the \emph{cycle cliques}.
The main cliques which are not cycle cliques are called \emph{path cliques}.
The cycle cliques adjacent to one path clique are called \emph{branching cliques}.
Finally, the clique corresponding to the vertex of degree one in the path attached to $c_1$ (resp. $c_{p+2}$, $c_{2p+3}$, $c_{3p+4}$) is called the \emph{top} (resp. \emph{right}, \emph{bottom}, \emph{left}) clique of $TG_{i,j}$,  denoted by $T_{i,j}$ (resp. $R_{i,j}$, $B_{i,j}$, $L_{i,j}$). 
Let $T_{i,j}=\{t^{i,j}_1, \dots, t^{i,j}_n\}$, $R_{i,j}=\{r^{i,j}_1, \dots, r^{i,j}_n\}$, $B_{i,j}=\{b^{i,j}_1, \dots, b^{i,j}_n\}$, and $L_{i,j}=\{\ell^{i,j}_1, \dots, \ell^{i,j}_n\}$. For the sake of readability, we might omit the superscripts $i,j$ when it is clear from the context.

\begin{lemma}\label{lemma:propagationgadget}
Let $K$ be an independent set of size $8(p+1)$ in $TG_{i,j}$. Then:
\begin{enumerate}[(a)]
	\item $K$ intersects all the cycle cliques on the same index $x$;
	\item if $K \cap T_{i,j} = \{t_{x_t}\}$, $K \cap R_{i,j} = \{r_{x_r}\}$, $K \cap B_{i,j} = \{b_{x_b}\}$, and $K \cap L_{i,j} = \{\ell_{x_{\ell}}\}$. Then:
\begin{itemize}
	\item $s^{i,j}_{x_{\ell}}$ is row-compatible with $s^{i,j}_x$ which is row-compatible with $s^{i,j}_{x_r}$, and
	\item $s^{i,j}_{x_t}$ is column-compatible with $s^{i,j}_x$ which is column-compatible with $s^{i,j}_{x_b}$.
\end{itemize}
\end{enumerate} 
\end{lemma}
\begin{proof}
Observe that the vertices of $TG_{i,j}$  can be partitionned into $8(p+1)$ cliques (the main cliques), hence an independent set of size $8(p+1)$ intersects each main clique on exactly one vertex. 
Let $C_1$, $C_2$ and $C_3$ be three consecutive cycle cliques, and suppose $K$ intersects $C_1$ (resp. $C_2$, $C_3$) on the $x_1^{th}$ (resp. $x_2^{th}$, $x_3^{th}$) index. By definition of the gadget, it implies $x_1 \le x_2 \le x_3$. By applying the same argument from $C_3$ along the cycle, we obtain $x_3 \le x_1$, which proves (a).
The proof of (b) directly comes from the definition of the adjacencies between cliques of type $T_r$ and $T_c$, and from the fact that $K$ intersects all cycle cliques on the same index.
\end{proof}

\subsubsection{Attaching gadgets together.} For $i,j \in \{0, \dots, k-1\}$, we connect the right clique of $TG_{i,j}$ with the left clique of $TG_{i, j+1}$ in a ``type $T_r$ spirit'': for every $x, y \in [n]$, connect $r_x^{i,j} \in R_{i,j}$ with $\ell_y^{i,j+1} \in L_{i,j+1}$ iff $s_{x}^{i,j}$ is \emph{not} row-compatible with $s_{y}^{i,j+1}$. Similarly, we connect the bottom clique of $TG_{i,j}$ with the top clique of $TG_{i+1, j}$ in a ``type $T_c$ spirit'': for every $x,y \in [n]$, connect $b_x^{i,j} \in B_{i,j}$ with $t_y^{i+1,j} \in T_{i+1,j}$ iff $s_{x}^{i,j}$ is \emph{not} column-compatible with $s_{y}^{i+1,j}$ (all incrementations of $i$ and $j$ are done modulo $k$).
 This terminates the construction of the graph $G$. 
 
 \subsubsection{Equivalence of solutions.}
 We now prove that the input instance of \textsc{Grid Tiling} is positive if and only if $G$ has an independent set of size $k'=8(p+1)k^2$.
First observe that $G$ has $k^2$ tile gadgets, each composed of $8(p+1)$ main cliques, hence any independent set of size $k'$ intersects each main clique on exactly one vertex.
 By Lemma~\ref{lemma:propagationgadget}, for all $i,j \in \{0, \dots, k-1\}$, $K$ intersects the cycle cliques of $TG_{i,j}$ on the same index $x_{i,j}$. 
 Moreover, if $K \cap R_{i,j} = \{r_x^{i,j}\}$ and $K \cap L_{i,j+1} = \{\ell_{x'}^{i,j+1}\}$, then, by construction of $G$, $s^{i,j}_x$ is row-compatible with $s_{x'}^{i,j+1}$. Similarly, if $K \cap B_{i,j} = \{b_x^{i,j}\}$ and $K \cap T_{i+1,j} = \{t_{x'}^{i+1,j}\}$, then, by construction of $G$, $s^{i,j}_x$ is column-compatible with $s_{x'}^{i+1,j}$. By Lemma~\ref{lemma:propagationgadget}, it implies that $s^{i,j}_{x_{i,j}}$ is row-compatible with $s^{i,j+1}_{x_{i,j+1}}$ and column-compatible with $s^{i+1,j}_{x_{i+1,j}}$ (incrementations of $i$ and $j$ are done modulo $k$), thus $\{x^{i,j}_{x_{i,j}} : 0 \le i,j \le k-1 \}$ is a feasible solution. 
Using similar ideas, one can prove that a feasible solution of the grid tiling instance implies an independent set of size $k'$ in $G$.

\subsubsection{Structure of the obtained graph.}
Let us now prove that $G$ does not contain the graphs mentionned in the statement as an induced subgraph:
\begin{enumerate}[(i)]
	\item $K_{1,4}$: we first prove that for every $0 \le i,j \le k-1$, the graph induced by the cycle cliques of $TG_{i,j}$ is claw-free. For the sake of contradiction, suppose that there exist three consecutive cycle cliques $A$, $B$ and $C$ containing a claw. W.l.o.g. we may assume that $b_x \in B$ is the center of the claw, and $a_{\alpha} \in A$, $b_{\beta} \in B$ and $c_{\gamma} \in C$ are the three endpoints. By construction of the gadgets (there is a half graph between $A$ and $B$ and between $B$ and $C$), we must have $\alpha < x < \gamma$. Now, observe that if $x < \beta$ then $a_{\alpha}$ must be adjacent to $b_{\beta}$, and if $\beta < x$, then $b_{\beta}$ must be adjacent to $c_{\gamma}$, but both case are impossible since $\{a_{\alpha}, b_{\beta}, c_{\gamma}\}$ is supposed to be an independent set. Similarly, we can prove that the graph induced by each path of size $2(p+1)$ linking two consecutive gadgets is claw-free. Hence, the only way for $K_{1,4}$ to appear in $G$ would be that the center appears in the cycle clique attached to a path, for instance in the clique represented by the vertex $c_1$ in the cycle. However, it can easily be seen that in this case, a claw must lie either in the graph induced by the cycle cliques of the gadget, or in the path linking $TG_{i,j}$ with $TG_{i-1, j}$, which is impossible.
	
	\item $C_4$, $\dots$, $C_{p_1}$. The main argument is that the graph induced by any two main cliques does not contain any of these cycles. Then, we show that such a cycle cannot lie entirely in the cycle cliques of a single gadget $TG_{i,j}$. Indeed, if this cycle uses at most one vertex per main clique, then it must be of length at least $4p+4$. If it intersects a clique $C$ on two vertices, then either it also intersect all the cycle cliques of the gadget, in which case it is of length $4p+5$, or it intersects an adjacent clique of $C$ on two vertices, in which case these two cliques induce a $C_4$, which is impossible. Similarly, such a cycle cannot lie entirely in a path between the main cliques of two gadgets. Finally, the main cliques of two gadgets are at distance $2(p+1)$, hence such a cycle cannot intersect the main cliques of two gadgets.
	
	\item any tree $T$ with two branching vertices at distance at most $p_2$. Using the same argument as for the $K_{1,4}$ case, observe that the claws contained in $G$ can only appear in the cycle cliques where the paths are attached. However, observe that these cliques are at distance $2(p+1) > p_2$, thus, such a tree $T$ cannot appear in $G$.
\end{enumerate}
\end{proof}

As a direct consequence of Theorem~\ref{thm:Whard}, we get the following by setting $p_1=p_2=|V(H)|+1$:
\begin{corollary}\label{cor:hardCases1}
  If $H$ is not chordal, or contains as an induced subgraph a $K_{1,4}$ or a tree with two branching vertices, then \shortindepset in $H$-free graphs is $W[1]$-hard. 
\end{corollary}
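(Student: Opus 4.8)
The plan is to derive Corollary~\ref{cor:hardCases1} directly from Theorem~\ref{thm:Whard} by choosing the parameters $p_1$ and $p_2$ large enough that every induced copy of $H$ in any graph under consideration is already ruled out. Set $p_1 = p_2 = |V(H)| + 1$. I will then argue that a graph $G$ which is simultaneously $K_{1,4}$-free, $\{C_4, \dots, C_{p_1}\}$-free, and $T$-free for every tree $T$ with two branching vertices at distance at most $p_2$ — which is $W[1]$-hard for \shortindepset by Theorem~\ref{thm:Whard} — is in particular $H$-free whenever $H$ satisfies one of the three stated conditions. Since $H$-freeness of the input is a strictly weaker restriction (the class of $H$-free graphs \emph{contains} the class excluding all those graphs), $W[1]$-hardness transfers upward.

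The key steps, in order: (1) If $H$ contains an induced $K_{1,4}$, then any $K_{1,4}$-free graph is $H$-free (a graph containing $H$ as an induced subgraph would contain $K_{1,4}$ as an induced subgraph, being an induced subgraph of an induced subgraph); hence the hard class from Theorem~\ref{thm:Whard} is contained in the class of $H$-free graphs. (2) If $H$ contains as an induced subgraph a tree with two branching vertices, say at distance $d$; since $d \le |V(H)| - 1 < p_2$, this tree is one of the forbidden trees, so again any graph excluding all such trees is $H$-free. (3) If $H$ is not chordal, then $H$ contains an induced cycle $C_\ell$ with $\ell \ge 4$; since $\ell \le |V(H)| \le p_1$, this $C_\ell$ is one of the forbidden cycles, so once more the hard class is contained in the class of $H$-free graphs. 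In each case, restricting the input of the $W[1]$-hard instances produced in the proof of Theorem~\ref{thm:Whard} to (a superset being) the $H$-free graphs preserves hardness, because those instances already lie in the smaller class.

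There is essentially no obstacle here — the corollary is a purely logical packaging of Theorem~\ref{thm:Whard}. The only point requiring a word of care is the bookkeeping on distances and sizes: one must check that an induced cycle of $H$ has length at most $|V(H)|$ and that two branching vertices of an induced subtree of $H$ are at distance at most $|V(H)| - 1$, both of which are immediate since any induced subgraph of $H$ has at most $|V(H)|$ vertices and any path in it has at most $|V(H)|$ vertices. I would also note explicitly that a non-chordal graph has, by definition, an induced cycle of length at least $4$, which is the only structural fact invoked. With $p_1 = p_2 = |V(H)|+1$ fixed, the three cases are disjointly sufficient and the conclusion follows.
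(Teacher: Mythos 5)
Your proof is correct and follows exactly the paper's route: the paper derives the corollary as a direct consequence of Theorem~\ref{thm:Whard} by setting $p_1=p_2=|V(H)|+1$, and your argument simply spells out the (correct) containment of the hard class from that theorem inside the class of $H$-free graphs in each of the three cases.
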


\subsection{Capturing Hard Graphs}\label{sec:harddecompositions}

We introduce two variants of the hardness construction of Theorem~\ref{thm:Whard}, which we refer to as the \emph{first construction}.
The \emph{second construction} is obtained by replacing each interaction between two main cliques by an anti-matching, except the one interaction in the middle of the path cliques which remains a half-graph (see Figure~\ref{fig:hardnessConstructionVariants}, middle).
In an anti-matching, the same elements in the two adjacent cliques define the only non-edges.
The correctness of this new reduction is simpler since the propagation of a choice is now straightforward.
Observe however that the graph $C_4$ appears in this new construction.
For the \emph{third construction}, we start from the second construction and just add an anti-matching between two neighbors of each branching clique among the cycle cliques (see Figure~\ref{fig:hardnessConstructionVariants}, right).
This anti-matching only constrains more the instance but does not destroy the intended solutions; hence the correctness.

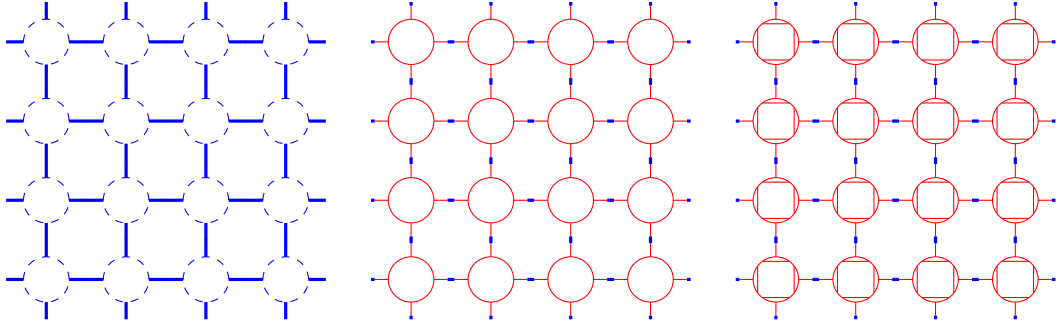
\begin{figure}
  \centering
  \begin{tikzpicture}[scale=0.6]
    \def\k{4}
    \def\r{0.5}
    \def\s{1.75}
    \def\t{0.3}
    \def\u{0.38}
    \foreach \i in {1,...,\k}{
      \foreach \j in {1,...,\k}{
        \draw[dashed, blue] (\s * \i, \s * \j) circle (\r) ;
        \draw[very thick, blue] (\s * \i - \r, \s * \j) -- (\s * \i - \r - \u, \s * \j) ;
        \draw[very thick,blue] (\s * \i + \r, \s * \j) -- (\s * \i + \r + \u, \s * \j) ;
        \draw[very thick,blue] (\s * \i, \s * \j - \r - \u) -- (\s * \i, \s * \j - \r) ;
        \draw[very thick,blue] (\s * \i, \s * \j + \r + \u) -- (\s * \i, \s * \j + \r) ;
      }
    }

    \begin{scope}[xshift=8cm]
    \foreach \i in {1,...,\k}{
      \foreach \j in {1,...,\k}{
        \draw[red] (\s * \i, \s * \j) circle (\r) ;
        \draw[red] (\s * \i - \r, \s * \j) -- (\s * \i - \r - \t, \s * \j) ;
        \draw[red] (\s * \i + \r, \s * \j) -- (\s * \i + \r + \t, \s * \j) ;
        \draw[red] (\s * \i, \s * \j - \r - \t) -- (\s * \i, \s * \j - \r) ;
        \draw[red] (\s * \i, \s * \j + \r + \t) -- (\s * \i, \s * \j + \r) ;
        \draw[very thick, blue] (\s * \i - \r - \u, \s * \j) -- (\s * \i - \r - \t, \s * \j) ;
        \draw[very thick, blue] (\s * \i + \r + \u, \s * \j) -- (\s * \i + \r + \t, \s * \j) ;
        \draw[very thick,blue] (\s * \i, \s * \j - \r - \t) -- (\s * \i, \s * \j - \r - \u) ;
        \draw[very thick,blue] (\s * \i, \s * \j + \r + \t) -- (\s * \i, \s * \j + \r + \u) ;
      }
    }
    \end{scope}

    \def\v{0.8}
    \def\w{0.3}
    \begin{scope}[xshift=16cm]
    \foreach \i in {1,...,\k}{
      \foreach \j in {1,...,\k}{
        \draw[red] (\s * \i, \s * \j) circle (\r) ;
        \draw[red] (\s * \i - \v * \r, \s * \j - \w) -- (\s * \i - \v * \r, \s * \j + \w) ;
        \draw[red] (\s * \i + \v * \r, \s * \j - \w) -- (\s * \i + \v * \r, \s * \j + \w) ;
        \draw[red] (\s * \i - \w, \s * \j - \v * \r) -- (\s * \i + \w, \s * \j - \v * \r) ;
        \draw[red] (\s * \i - \w, \s * \j + \v * \r) -- (\s * \i + \w, \s * \j + \v * \r) ;
        \draw[red] (\s * \i - \r, \s * \j) -- (\s * \i - \r - \t, \s * \j) ;
        \draw[red] (\s * \i + \r, \s * \j) -- (\s * \i + \r + \t, \s * \j) ;
        \draw[red] (\s * \i, \s * \j - \r - \t) -- (\s * \i, \s * \j - \r) ;
        \draw[red] (\s * \i, \s * \j + \r + \t) -- (\s * \i, \s * \j + \r) ;
        \draw[very thick, blue] (\s * \i - \r - \u, \s * \j) -- (\s * \i - \r - \t, \s * \j) ;
        \draw[very thick, blue] (\s * \i + \r + \u, \s * \j) -- (\s * \i + \r + \t, \s * \j) ;
        \draw[very thick,blue] (\s * \i, \s * \j - \r - \t) -- (\s * \i, \s * \j - \r - \u) ;
        \draw[very thick,blue] (\s * \i, \s * \j + \r + \t) -- (\s * \i, \s * \j + \r + \u) ;
      }
    }
    \end{scope}
  \end{tikzpicture}
  \caption{A symbolic representation of the hardness constructions. To the left, only half-graphs (blue) are used between the cliques, as in the proof of Theorem~\ref{thm:Whard}.
  In the middle and to the right, the half-graphs (blue) are only used once in the middle of each path of cliques, and the rest of the interactions between the cliques are anti-matchings (red). The third construction (right) is a slight variation of the second (middle) where for each branching clique, we link by an anti-matching its two neighbors among the cycle cliques.}
  \label{fig:hardnessConstructionVariants}
\end{figure}

To describe those connected graphs $H$ which escape the disjunction of Theorem~\ref{thm:Whard} (for which there is still a hope that \shortindepset is FPT), we define a decomposition into cliques, similar yet different from clique graphs or tree decompositions of chordal graphs (a.k.a $k$-trees).

\begin{definition}\label{def:cliqueDec}
Let $T$ be a graph on $\ell$ vertices $t_1, \ldots, t_\ell$.
We say that \emph{$T$ is a clique decomposition of $H$} if there is a partition of $V(H)$ into $(C_1,C_2,\ldots,C_\ell)$ such that:
\begin{itemize}
 \item for each $i \in [\ell]$, $H[C_i]$ is a clique, and
 \item for each pair $i \neq j \in [\ell]$, if $H[C_i \cup C_j]$ is connected, then $t_it_j \in E(T)$. 
\end{itemize}
\end{definition}
Observe that, in the above definition, we do not require $T$ to be a tree.
Two cliques $C_i$ and $C_j$ are said \emph{adjacent} if $H[C_i \cup C_j]$ is connected.
We also write \emph{a clique decomposition on $T$ (of $H$)} to denote the choice of an actual partition $(C_1,C_2,\ldots,C_\ell)$.  

Let $\mathcal T_1$ be the class of trees with at most one branching vertex.
Equivalently, $\mathcal T_1$ consists of the paths and the subdivisions of the claw.

\begin{proposition}\label{prop:cliqueDecWHard}
 For a fixed connected graph $H$, if no tree in $\mathcal T_1$ is a clique decomposition of $H$, then \shortindepset in $H$-free graphs is $W[1]$-hard. 
\end{proposition}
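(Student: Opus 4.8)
The plan is to exhibit, for every connected $H$ that has no clique decomposition on a tree of $\mathcal T_1$, a $W[1]$-hard instance of \shortindepset living in the class of $H$-free graphs; for this we reuse the graph produced by the reduction of Theorem~\ref{thm:Whard} (the \emph{first construction}), instantiated with $p_1 = p_2 = |V(H)|+1$. Since Theorem~\ref{thm:Whard} holds for all $p_1 \ge 4$ and $p_2 \ge 1$, its proof already supplies, for this choice of parameters, a parameterized reduction from \textsc{Grid Tiling} to \shortindepset on the resulting graph $G$; so it suffices to prove that this particular $G$ is $H$-free. The argument uses only the ``clique skeleton'' of $G$ and is insensitive to which interactions (half-graphs or anti-matchings) are placed between the main cliques, so it applies verbatim to the second and third constructions as well.

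The key is to isolate the combinatorial structure of the main cliques of $G$. Let $\mathcal M$ be the graph whose vertices are the main cliques of $G$, two of them being adjacent precisely when they are adjacent in the construction (some edge of $G$ runs between them). I would record three facts about $\mathcal M$. First, every vertex of $G$ belongs to exactly one main clique, and two vertices lying in non-adjacent main cliques are non-adjacent in $G$. Second, $\mathcal M$ has maximum degree $3$: inside each tile gadget the cycle cliques form a $C_{4p+4}$ carrying four pendant paths of $p+1$ path cliques attached at the four branching cliques, consecutive gadgets being linked only by making their pendant-path endpoints (the top/bottom/left/right cliques) adjacent, so only the branching cliques have degree $3$. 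Third, the girth of $\mathcal M$ equals $4p+4$ and any two distinct branching cliques are at distance at least $p+1$ in $\mathcal M$; the only short cycles are the cycle-clique cycles, consecutive branching cliques along such a cycle are at distance $p+1$, and branching cliques of distinct gadgets are separated by a path of length at least $2(p+1)$.

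Now suppose for contradiction that $H$ occurs as an induced subgraph of $G$. For each main clique $C$ put $D_C := V(H) \cap C$; each $D_C$ is a clique of $H$, and the non-empty $D_C$'s partition $V(H)$ into at most $|V(H)|$ cliques. Let $T$ be the graph on this set of parts in which $D_C$ and $D_{C'}$ are joined whenever $H[D_C \cup D_{C'}]$ is connected. By the first fact, any edge of $H$ between $D_C$ and $D_{C'}$ forces $C$ and $C'$ to be adjacent in $\mathcal M$, so $C \mapsto D_C$ realises $T$ as a subgraph of $\mathcal M$ on at most $|V(H)|$ vertices; and $T$ is connected because $H$ is. Since $|V(T)| \le |V(H)| = p_1 - 1 < 4p+4 = \mathrm{girth}(\mathcal M)$, the subgraph $T$ is acyclic, hence a tree; since $\mathcal M$ has maximum degree $3$, so has $T$; and if $T$ had two branching vertices they would be at distance at most $|V(T)|-1 \le |V(H)|-1 < p+1$ in $T$, hence also in $\mathcal M$, contradicting the third fact. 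So $T$ is a tree with at most one branching vertex, \ie $T \in \mathcal T_1$, and by construction $T$ is a clique decomposition of $H$ — contradicting the hypothesis. Hence $G$ is $H$-free, and \shortindepset is $W[1]$-hard on $H$-free graphs.

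The part that will need real care is the third fact: a precise accounting of the short cycles and of the pairwise distances between branching cliques in $\mathcal M$, checking that neither the toroidal wrap-around of the $k \times k$ array of tile gadgets (one may anyway assume $k$ arbitrarily large, as it is the parameter of \textsc{Grid Tiling}) nor the linking of pendant-path endpoints between consecutive gadgets creates a cycle shorter than $4p+4$ or brings two branching cliques within distance $p+1$. Everything else is the bookkeeping above, and the choice $p_1 = p_2 = |V(H)|+1$ (any $p \ge |V(H)|$ suffices) makes all the inequalities go through.
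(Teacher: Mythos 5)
Your proof is correct and takes essentially the same route as the paper: the paper's one-line argument is precisely that $H$ cannot appear in the first construction, and your analysis of the clique-adjacency skeleton $\mathcal M$ (degree at most $3$, girth $4p+4$, branching cliques pairwise far apart, hence any connected induced subgraph on at most $|V(H)|$ vertices has a clique decomposition on a tree of $\mathcal T_1$) just supplies the details the paper leaves implicit. One small caveat: your aside that the argument applies ``verbatim'' to the third construction is not accurate --- there the added anti-matchings between the two cycle-clique neighbours of each branching clique create triangles of main cliques, breaking the girth and degree facts --- but this is immaterial to the proposition, which only needs the first construction.
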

\begin{proof}
  This is immediate from the proof of Theorem~\ref{thm:Whard} since $H$ cannot appear in the first construction. 
\end{proof}

At this point, we can focus on connected graphs $H$ admitting a tree $T \in \mathcal T_1$ as a clique decomposition.
The reciprocal of Proposition~\ref{prop:cliqueDecWHard} cannot be true since a simple edge is a clique decomposition of $C_4$.  
The next definition further restricts the interaction between two adjacent cliques.

\begin{definition}\label{def:StrongCliqueDec}
Let $T$ be a graph on $\ell$ vertices $t_1, \ldots, t_\ell$.
We say that \emph{$T$ is a strong clique decomposition of $H$} if there is a partition of $V(H)$ into $(C_1,C_2,\ldots,C_\ell)$ such that:
\begin{itemize}
 \item for each $i \in [\ell]$, $H[C_i]$ is a clique, and
 \item for each pair $i \neq j \in [\ell]$, $H[C_i \cup C_j]$ is a clique iff $t_it_j \in E(T)$. 
\end{itemize}
\end{definition}

An equivalent way to phrase this definition is that $H$ can be obtained from $T$ by \emph{adding false twins}.
Adding a false twin $v'$ to a graph consists in duplicating one of its vertex $v$ (i.e., $v$ and $v'$ have the same neighbors) and then adding an edge between $v$ and $v'$.

We define \emph{almost strong clique decompositions} which informally are strong clique decompositions where at most one edge can be missing in the interaction between two adjacent cliques.
\begin{definition}\label{def:AlmostStrongCliqueDec}
Let $T$ be a graph on $\ell$ vertices $t_1, \ldots, t_\ell$.
We say that \emph{$T$ is an almost strong clique decomposition of $H$} if there is a partition of $V(H)$ into $(C_1,C_2,\ldots,C_\ell)$ such that:
\begin{itemize}
 \item for each $i \in [\ell]$, $H[C_i]$ is a clique, and
 \item for each pair $i \neq j \in [\ell]$, [$H[C_i \cup C_j]$ is a clique or $H[C_i \cup C_j]$ is a clique of size at least 3 minus an edge] iff $t_it_j \in E(T)$. 
\end{itemize}
\end{definition}

Finally, a \emph{nearly strong clique decomposition} is slightly weaker than an almost strong clique decomposition: at most one interaction between two adjacent cliques can induce a $C_4$-free graph. 

Let $\mathcal P$ be the set of all the paths.
Notice that $\mathcal T_1 \setminus \mathcal P$ is the set of all the subdivisions of the claw.
\begin{theorem}\label{thm:closerToDichotomy}
  Let $H$ be a fixed connected graph.
  If no $P \in \mathcal P$ is a nearly strong clique decomposition of $H$  and no $T \in \mathcal T_1 \setminus \mathcal P$ is an almost strong clique decomposition of $H$, then \shortindepset in $H$-free graphs is $W[1]$-hard.
\end{theorem}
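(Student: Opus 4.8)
The plan is to reduce to the two previously established hardness results—Proposition~\ref{prop:cliqueDecWHard} (via the first construction) and a strengthening of Theorem~\ref{thm:Whard} using the second and third constructions—by showing that whenever the hypotheses hold, $H$ embeds as an induced subgraph into at least one of the three hardness constructions. So the first step is to establish, for each of the three constructions, which connected graphs $H$ it forbids as an induced subgraph. For the first construction this is exactly the content of Proposition~\ref{prop:cliqueDecWHard}: $H$ appears in it only if some tree in $\mathcal T_1$ is a clique decomposition of $H$. I would prove analogous statements for the second and third constructions, namely: $H$ appears in the second construction only if some $T \in \mathcal T_1$ is an \emph{almost strong} clique decomposition of $H$ (the anti-matchings force at most one non-edge between adjacent cliques, the single half-graph in the middle of each path gives the extra flexibility allowing an arbitrary clique-minus-an-edge interaction there, hence ``almost strong''), and $H$ appears in the third construction only if some $T \in \mathcal T_1$ is a \emph{nearly strong} clique decomposition (the additional anti-matching between the two cycle-clique neighbors of each branching clique further rigidifies, but a clique-minus-an-edge, i.e. a $C_4$-free interaction, can survive at the branching). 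The precise bookkeeping of indices—arguing that a large induced clique inside the construction must live in boundedly many consecutive main cliques, and then classifying how two or three adjacent main cliques interact under a half-graph versus an anti-matching—is the technical heart and would be carried out much as in the ``Structure of the obtained graph'' part of the proof of Theorem~\ref{thm:Whard}.

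Granting those three structural lemmas, the proof of the theorem is a short case analysis on the shape of the witnessing decomposition. Suppose, for contradiction, that \shortindepset in $H$-free graphs is not $W[1]$-hard. Then $H$ must appear as an induced subgraph in each of the three hardness constructions (since each is built from \textsc{Grid Tiling}). From the first construction and Proposition~\ref{prop:cliqueDecWHard} we get some $T_0 \in \mathcal T_1$ that is a clique decomposition of $H$; in particular $H$ is chordal, contains no $K_{1,4}$, and contains no tree with two branching vertices, by Corollary~\ref{cor:hardCases1}. Now split on whether $H$ is ``path-like'' or ``claw-like''. If every tree in $\mathcal T_1$ witnessing a suitable decomposition of $H$ is a path, then we use the third construction: $H$ appears there, so some $P \in \mathcal P$ is a nearly strong clique decomposition of $H$—contradicting the hypothesis. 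If instead some decomposition of $H$ is on a genuine subdivision of the claw, we use the second construction: $H$ appears there, so some $T \in \mathcal T_1$ is an almost strong clique decomposition of $H$; if this $T$ is a subdivision of the claw we directly contradict the hypothesis, and if $T$ turns out to be a path we are back in the previous case and again contradict the nearly-strong hypothesis via the third construction. Either way we reach a contradiction, so \shortindepset in $H$-free graphs is $W[1]$-hard.

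The main obstacle I anticipate is making the three structural lemmas precise and watertight, especially the exact correspondence between the combinatorics of anti-matchings/half-graphs and the ``strong / almost strong / nearly strong'' qualifiers. Concretely, one must argue: (1) any independent set that would witness a large induced subgraph cannot spread across more than a constant number of consecutive main cliques, because non-adjacent main cliques are at distance $2(p+1)$, which exceeds $\operatorname{diam}(H)$ once $p$ is chosen larger than $|V(H)|$; (2) within a bounded window of consecutive main cliques along a path of cliques joined by anti-matchings, the induced subgraph on any transversal-like vertex set is obtained by duplicating vertices of a path and deleting at most the one non-edge permitted by the single half-graph, which is precisely an almost strong clique decomposition on a sub-path; and (3) at a branching clique, the cyclic arrangement of $4p+4$ cycle cliques plus the attached path behaves, after restricting to a bounded window, like a subdivided claw with the branching interaction being at worst $C_4$-free. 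Care is also needed because a clique decomposition need not be a tree in general, but the hypotheses and Corollary~\ref{cor:hardCases1} pin down enough structure (chordality, no $K_{1,4}$, at most one branching vertex) to ensure the relevant witness can be taken in $\mathcal T_1$. Once these are in place, the deduction above is routine.
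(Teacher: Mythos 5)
Your high-level plan (show that under the hypotheses $H$ cannot appear in one of the hardness constructions, then invoke the corresponding reduction) is the same as the paper's, but the structural lemmas you build it on are not correct as stated, and the paper's actual argument needs only the first and second constructions --- the third one is used exclusively for Theorem~\ref{thm:T122} and plays no role here. The first concrete problem is your claim that appearing in the second construction forces an \emph{almost strong} clique decomposition on some $T \in \mathcal T_1$, justified by saying the half-graph allows ``an arbitrary clique-minus-an-edge interaction''. This is false: the interaction induced across the half-graph by a copy of $H$ is an arbitrary $C_4$-free co-bipartite (chain-like) interaction, which in general has many non-edges and is not a clique minus one edge; this is exactly why the paper introduces the weaker notion of \emph{nearly strong} decompositions and why the theorem distinguishes paths (nearly strong) from subdivisions of the claw (almost strong). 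Moreover, the anti-matching interactions are complements of matchings with possibly many non-edges, so even away from the half-graph you can only conclude ``at most one non-edge per interaction'' after using that $H$ is $C_4$-free, which must be imported from the first construction (this is the observation ``a complement of a matching with two non-edges contains an induced $C_4$''). As an explicit counterexample to your unconditional lemmas for both the second and third constructions, take $K_6$ minus a perfect matching (the octahedron): it is induced in the second construction (choose indices $1,2,3$ on both sides of a single anti-matching) and in the third (indices $\{1,2\},\{1,3\},\{2,3\}$ in the triangle of cliques), yet it admits no almost strong and no nearly strong clique decomposition on any tree of $\mathcal T_1$, since in any partition of it into cliques the only pairs of parts with no edges between them are pairs of antipodal singletons, which is incompatible with the non-adjacencies of any path or subdivided claw on at least three parts, while a decomposition into two parts has three non-edges between them and their union is not $C_4$-free.

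The second problem is the case analysis itself: even granting your lemma for the third construction, it would only yield a nearly strong decomposition on \emph{some} $T \in \mathcal T_1$, and the hypothesis of the theorem does not exclude nearly strong decompositions on subdivisions of the claw, so no contradiction follows in that branch; the third construction cannot be used this way. The paper's proof instead argues directly about the second construction: if a connected $H$ appears in it, then (because the half-graphs are far from the branchings, by the choice of $p$) either the copy of $H$ spans a branching and sees only anti-matchings, giving a clique decomposition on a subdivision of the claw whose interactions are complements of matchings, or it lies along a path of cliques and sees at most one half-graph, giving a path decomposition whose interactions are complements of matchings except at most one $C_4$-free one. Then either $H$ contains an induced $C_4$, in which case $W[1]$-hardness follows from the first construction (Theorem~\ref{thm:Whard}), or every complement-of-matching interaction has at most one non-edge, upgrading the two cases to an almost strong decomposition on a subdivision of the claw or a nearly strong decomposition on a path, contradicting the hypotheses. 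To repair your write-up you would need to state your structural lemmas only for $C_4$-free $H$ (or fold in the first construction as the paper does), replace ``almost strong'' by ``nearly strong'' in the path case, and drop the third construction entirely.
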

\begin{proof}
  The idea is to mainly use the second construction and the fact that \shortindepset in $C_4$-free graphs is $W[1]$-hard (due to the first construction).
  For every fixed graph $H$ which cannot be an induced subgraph in the second construction, \shortindepset is $W[1]$-hard.
  To appear in this construction, the graph $H$ should have
  \begin{itemize}
  \item either a clique decomposition on a subdivision of the claw, such that the interaction between two adjacent cliques is the complement of a (non necessarily perfect) matching, or
  \item a clique decomposition on a path, such that the interaction between two adjacent cliques is the complement of a matching, except for at most one interaction which can be a $C_4$-free graph.
  \end{itemize}
  We now just observe that in both cases if, among the interactions between adjacent cliques, one complement of matching has at least two non-edges, then $H$ contains an induced $C_4$.
  Hence the two items can be equivalently replaced by the existence of an almost strong clique decomposition on a subdivision of the claw, and a nearly strong clique decomposition on a path, respectively. 
\end{proof}
Theorem~\ref{thm:closerToDichotomy} narrows down the connected open cases to graphs $H$ which have a nearly strong clique decomposition on a path or an almost strong clique decomposition on a subdivision of the claw.

In the strong clique decomposition, the interaction between two adjacent cliques is very simple: their union is a clique.
Therefore, it might be tempting to conjecture that if $H$ admits $T \in \mathcal T_1$ as a strong clique decomposition, then \shortindepset in $H$-free graphs is FPT.
Indeed, those graphs $H$ appear everywhere in both the first and the second $W[1]$-hardness constructions.
Nevertheless, we will see that this conjecture is false: even if $H$ has a strong clique decomposition $T \in \mathcal T_1$, it can be that \shortindepset is $W[1]$-hard.
The simplest tree of $\mathcal T_1 \setminus \mathcal P$ is the claw.
We denote by $T_{i,j,k}$ the graph obtained by adding a universal vertex to the disjoint union of three cliques $K_i \uplus K_j \uplus K_k$.
The claw is a strong clique decomposition of $T_{i,j,k}$ (for every natural numbers $i, j, k$).

\begin{theorem}\label{thm:T122}
  \shortindepset in $T_{1,2,2}$-free graphs is $W[1]$-hard.
\end{theorem}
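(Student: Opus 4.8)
The plan is to reduce from \textsc{Grid Tiling}, which is $W[1]$-hard in the grid size $k$ (the problem used in Theorem~\ref{thm:Whard}), by a construction in the spirit of the second and third constructions of Section~\ref{sec:harddecompositions}. Concretely, the graph $G$ is a clique blow-up of a macro-graph $T$: each vertex of $T$ becomes a clique of size $n$ (one vertex per element of the corresponding tile), each edge of $T$ becomes an interaction between the two cliques, and $T$ keeps the shape of the proof of Theorem~\ref{thm:Whard} — each tile is a cycle of cliques with four arms (paths of cliques) glued to the arms of its neighbours. I realize \emph{every} interaction by an \emph{anti-matching} (the equally-indexed vertices are the only non-edges) except the inter-gadget interactions at the arm ends, which keep the row/column-compatibility bipartite graphs needed to encode \textsc{Grid Tiling}, and — the third-construction move — around every branching clique I add one extra anti-matching between its two cycle-neighbours. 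The target independent set size is the number of cliques. Correctness is then the easy direction, exactly as in Section~\ref{sec:harddecompositions}: an independent set of that size meets each clique exactly once, the anti-matchings force it to meet all cliques of a given tile on the same index, and the compatibility interactions then force the chosen elements of neighbouring tiles to be row/column compatible, i.e.\ to form a feasible solution; the converse is immediate.

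The real work is proving that $G$ is $T_{1,2,2}$-free. Since $T_{1,2,2}$ is a vertex joined to an induced $K_1 \uplus 2K_2$, it suffices to show that no vertex $c$ has an induced $K_1 \uplus 2K_2$ in its neighbourhood. Partition $N(c)$ into $Q_0 := C \setminus \{c\}$, where $C$ is the clique of $c$, and the sets $Q_{C'} := N(c) \cap C'$ over the cliques $C'$ adjacent (in $T$) to $C$; each of these is a clique, and $Q_0$ meets every other part via a restriction of the $C$–$C'$ interaction. Two features drive the argument. First, across an anti-matching every vertex has a \emph{unique} non-neighbour on the other side; hence a vertex cannot be non-adjacent to two vertices of a clique to which its own clique is anti-matched, and one cannot pick a non-edge in one such clique and a non-edge in the other with no edge between them. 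Second, $T$ is engineered so that (i) every vertex of $T$ has a neighbourhood of independence number at most $2$ in $T$ — this is exactly what the extra anti-matchings at branching cliques secure — and (ii) every clique has at most one incident interaction that is not an anti-matching, and that exceptional one (a compatibility interaction) reaches a clique non-adjacent to all the other neighbours. Now an induced $K_1 \uplus 2K_2$ is triangle-free, so each $Q_i$ gets at most two of its five vertices, hence they spread over at least three parts; a finite case analysis on the placement of the isolated vertex and of the two edges, using the first feature, forces two of the three decisive cliques to be non-adjacent in $T$ and a third non-adjacent to both — contradicting (i) — unless the exceptional interaction is among the three, in which case (ii) and the first feature again give the contradiction.

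The main obstacle is carrying out this last case analysis cleanly, and beneath it verifying property (i): that every clique sees its neighbouring cliques with independence number at most $2$. The delicate spots are the branching cliques (three neighbours: two on the cycle, one on an arm), which is precisely why the third construction adds the anti-matching between the two cycle-neighbours; their two cycle-neighbours, which thereby acquire a third neighbouring clique and must be re-checked; and the arm/inter-gadget interfaces carrying the compatibility bipartite graphs, where one must confirm the contradiction still emerges from the other, anti-matched, neighbouring cliques. Note finally that this construction does contain induced $C_4$'s — so it does not in addition yield $C_4$-free hardness — yet it is $T_{1,2,2}$-free, which is consistent since $C_4 \neq T_{1,2,2}$; and since the claw is a strong clique decomposition of $T_{1,2,2}$ (with petals of sizes $1,2,2$), the theorem refutes the natural conjecture that a strong clique decomposition on a tree of $\mathcal T_1$ makes \shortindepset FPT.
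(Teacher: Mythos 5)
You take essentially the same route as the paper: the reduction is the paper's third construction (anti-matchings everywhere, a single $C_4$-free exceptional interaction at each inter-gadget junction, and the extra anti-matchings between the two cycle-neighbours of each branching clique), and the heart of the argument is again a finite local case analysis showing that $T_{1,2,2}$ cannot embed --- the paper does this by locating the cricket and excluding a false twin of a leaf, you by excluding an induced $K_1 \uplus 2K_2$ inside a vertex's neighbourhood. One caution: your properties (i) and (ii) alone do not close every case (for instance, the configuration with the universal vertex in $C$ and both $K_2$'s split over pairs of adjacent neighbouring cliques is excluded only because no clique of the construction has four neighbours, and a clique carrying the exceptional interaction has just one further neighbour, non-adjacent to it), but these additional structural facts are immediate from the construction, so the case analysis does go through.
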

\begin{proof}
  We show that $T_{1,2,2}$ does not appear in the third construction (Figure~\ref{fig:hardnessConstructionVariants}, right).
  We claim that, in this construction, the graph $T_{1,1,2}$, sometimes called cricket, can only appear in the two ways depicted on Figure~\ref{fig:cricket} (up to symmetry).

  \begin{claim}
    The triangle of the cricket cannot appear within the same main clique.
  \end{claim}
  \begin{proof}
    Otherwise the two leaves (\ie, vertices of degree $1$) of the cricket are in two distinct adjacent cliques.
    But at least one of those adjacent cliques is linked to the main clique of the triangle by an anti-matching.
    This is a contradiction to the corresponding leaf having two non-neighbors in the main clique of the triangle.
  \end{proof}
  We first study how the cricket can appear in a path of cliques.
  Let $C$ be the main clique containing the universal vertex of the cricket.
  This vertex is adjacent to three disjoint cliques $K_1 \uplus K_1 \uplus K_2$.
  Due to the previous claim, the only way to distribute them is to put $K_1$ in the previous main clique, $K_1$ in the same main clique $C$, and $K_2$ in the next main clique.
  This is only possible if the interaction between $C$ and the next main clique is a half-graph.
  In particular, this implies that the interaction between the previous main clique and $C$ is an anti-matching.
  This situation corresponds to the left of Figure~\ref{fig:cricket}.

  This also implies that the cricket cannot appear in a path of cliques without a half-graph interaction (anti-matchings only).
  We now turn our attention to the vicinity of a triangle of main cliques, which is proper to the third construction.
  By our previous remarks, we know that the universal vertex of the cricket has to be either alone in a main clique (by symmetry, it does not matter which one) of the triangle, or with exactly one of its neighbors of degree $2$.
  Now, the only way to place $K_1 \uplus K_1 \uplus K_2$ is to put the two $K_1$ in the two other main cliques of the triangle, and the $K_2$ (or the single vertex rest of it) in the remaining adjacent main clique.
  Indeed, if the $K_2$ is in a main clique of the triangle, the $K_1$ in the third main clique of the triangle would have two non-edges towards to $K_2$.
  This is not possible with an anti-matching interaction.
  Therefore, the only option corresponds to the right of Figure~\ref{fig:cricket}.

  \begin{figure}
    \centering
    \begin{tikzpicture}
      \def\v{0.6}
      \def\h{1.2}

      \node (d1) at (-\h,0) {} ;
      \node (d2) at (-\h,\v) {} ;
      \node[draw, ellipse, fit=(d1) (d2)] (e1) {} ;
      \node (d3) at (0,0) {} ;
      \node[draw,circle] (a) at (0,\v) {} ;
      \node[draw, ellipse, fit=(d3) (a)] (e2) {} ;
      \node[draw,circle] (b) at (\h,\v) {} ;
      \node[draw,circle] (c) at (\h,0) {} ;
      \node[draw, ellipse, fit=(b) (c)] (e3) {} ;
      \node[draw,circle] (d) at (2 * \h,\v) {} ;
      \node[draw,circle] (e) at (2 * \h,0) {} ;
      \node[draw, ellipse, fit=(d) (e)] (e4) {} ;
      \node (d4) at (3 * \h,0) {} ;
      \node (d5) at (3 * \h,\v) {} ;
      \node[draw, ellipse, fit=(d4) (d5)] (e5) {} ;
      \draw[very thick] (c) -- (e) -- (d) -- (c) -- (b) ;
      \draw[very thick] (a) -- (c) ;
      \draw[dashed] (a) -- (b) ;
      \draw[very thick, red] (e1.east) -- (e2.west) ;
      \draw[very thick, red] (e2.east) -- (e3.west) ;
      \draw[very thick, blue] (e3.east) -- (e4.west) ;
      \draw[very thick, red] (e4.east) -- (e5.west) ;

      \foreach \i/\j/\k in {5.8/0/0,7/0/1,8.2/0/2,9/1/3,9/-1/4,10.2/1.4/5,10.2/-1.4/6,11.4/1.5/7,11.4/-1.5/8}{
        \node (d1\k) at (\i,\j) {} ;
        \node (d2\k) at (\i,\j+\v) {} ;
        \node[draw, ellipse, fit=(d1\k) (d2\k)] (e\k) {} ;
      }
      \draw[very thick, red] (e0) -- (e1) -- (e2) -- (e3) -- (e5) -- (e7) ;
      \draw[very thick, red] (e2) -- (e4) -- (e6) -- (e8) ;
      \draw[very thick, red] (e3) -- (e4) ;

      \node[draw,circle] (A) at (8.2,0) {} ;
      \node[draw,circle] (B) at (9,1) {} ;
      \node[draw,circle] (C) at (8.87,-0.6) {} ;
      \node[draw,circle] (D) at (10.2,-0.8) {} ;
      \node[draw,circle] (E) at (10.2,-1.4) {} ;
      \draw[very thick] (C) -- (E) -- (D) -- (C) -- (B) ;
      \draw[very thick] (A) -- (C) ;
      \draw[dashed] (A) -- (B) ;
    \end{tikzpicture}
    \caption{The two ways the cricket appears in the third construction. The red edges between two adjacent cliques symbolize an anti-matching, whereas the blue edge symbolizes a $C_4$-free graph.
     In the left hand-side, one neighbor of the universal vertex with degree 2 could alternatively be in the same clique as the universal vertex.}
    \label{fig:cricket}
  \end{figure}
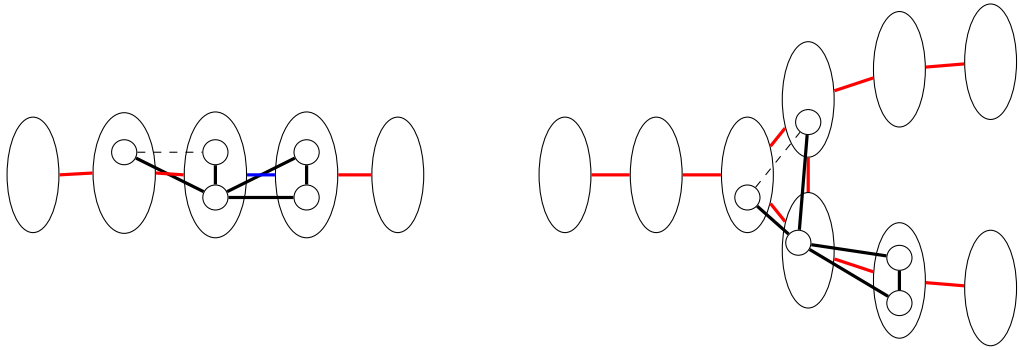
  To obtain a $T_{1,2,2}$, one needs to find a false twin to one of the leaves of the cricket.
  This is not possible since, in both cases, the two leaves are in two adjacent cliques with an anti-matching interaction.
  Therefore, adding the false twin would create a second non-neighbor to the remaining leaf.
\end{proof}
The graph $T_{1,1,1}$ is the claw itself for which \shortindepset is solvable in polynomial time.
The parameterized complexity for the graph $T_{1,1,2}$ (the cricket) remains open.
As a matter of fact, this question is unresolved for $T_{1,1,s}$-free graphs, for any integer $s \geqslant 2$.
Solving those cases would bring us a bit closer to a full dichotomy \emph{FPT vs $W[1]$-hard}.
Although, Theorem~\ref{thm:T122} suggests that this dichotomy will be rather subtle.
In addition, this result infirms the plausible conjecture: \emph{if \shortindepset is FPT in $H$-free graphs, then it is FPT in $H'$-free graphs where $H'$ can be obtained from $H$ by adding false twins}.

The toughest challenge towards the dichotomy is understanding \shortindepset in the absence of \emph{paths of cliques}\footnote{Actually, even the classical complexity of \shortindepset in the absence of long induced paths is not well understood}.
In Theorem~\ref{thm:clique-minus-bipartite}, we make a very first step in that direction: we show that for every graph $H$ with a strong clique decomposition on $P_3$, the problem is FPT.
In the previous paragraphs, we dealt mostly with connected graphs $H$.
In Theorem~\ref{thm:disjoint-union-cliques}, we show that if $H$ is a disjoint union of cliques, then \shortindepset in $H$-free graphs is FPT.
In the language of clique decompositions, this can be phrased as \emph{$H$ has a clique decomposition on an independent set}.

\section{Positive results I: disjoint union of cliques}\label{sec:disjoint-union-cliques}

For $r, q \ge 1$, let $K_r^q$ be the disjoint union of $q$ copies of $K_r$. The proof of the following theorem is inspired by the case $r=2$ by Alekseev \cite{Ale91}.

\begin{theorem}\label{thm:disjoint-union-cliques}
 \indepset is FPT in $K_r^q$-free graphs.
\end{theorem}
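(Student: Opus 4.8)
The plan is to mimic Alekseev's branching argument for $K_2^q = qK_2$-free graphs, lifting it from matchings to disjoint unions of cliques. The key structural observation is that in a $K_r^q$-free graph $G$, if we can find $q$ pairwise non-adjacent cliques each of size $r$, then we have found a forbidden induced subgraph---so the ``clique packing number'' of any induced subgraph of $G$ is bounded. I would first fix a maximum independent set $S$ of size $k$ that we are searching for, and design a bounded-depth branching procedure that either certifies no such $S$ exists, or reduces to a polynomially-solvable base case. The natural base case is when $G$ itself contains no $K_r$ at all: then by Ramsey's theorem $G$ has an independent set of size $\Omega(n^{1/(r-1)})$, so if $n \ge \Ramsey(r,k)$ we answer YES immediately, and otherwise $n < \Ramsey(r,k)$ is bounded by a function of $k$ and we solve by brute force. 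This is where the $FPT$ running time ultimately comes from.

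The core of the argument is the branching step. As long as $G$ contains some $K_r$, say on vertex set $Q$, consider the family $\mathcal{Q}$ of maximal cliques of size $\ge r$, or more simply iterate: greedily extract a maximal collection $Q_1, \dots, Q_t$ of pairwise non-adjacent $r$-cliques in $G$; since $G$ is $K_r^q$-free we have $t \le q-1$. Let $N = \bigcup_i N[Q_i]$. By maximality of the packing, every remaining $r$-clique of $G - N$ would extend the packing, so in fact $G - N$ is $K_r$-free and hence (base case) has a large independent set or is small. Now any independent set $S$ of $G$ uses at most one vertex from each $Q_i$, so it uses at most $q-1$ vertices inside $\bigcup_i Q_i$; branch on which vertex (if any) of each $Q_i$ lies in $S$ --- at most $(r+1)^{q-1}$ choices --- then for each choice delete the chosen vertices' non-neighbors-complement appropriately (i.e. put chosen vertices in the solution, delete their neighborhoods, and delete all of $\bigcup_i Q_i$). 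The subtlety is that after fixing the intersection of $S$ with the packing cliques, the vertices of $N \setminus \bigcup_i Q_i$ are not all removed, so the recursion is not immediately on a $K_r$-free graph; I would therefore instead recurse directly, arguing that the recursion depth is bounded because each branch either decreases $k$ or, via a cleverer measure (e.g. the maximum number of pairwise non-adjacent $r$-cliques, which drops), makes progress. Concretely, Alekseev's trick for $r=2$: pick an edge $uv$; $S$ contains at most one of $u,v$; branch into ``$u \in S$'' (recurse on $G - N[u]$ with parameter $k-1$), ``$v \in S$'' (symmetric), and ``$u,v \notin S$'' (recurse on $G - \{u,v\}$). The $qK_2$-freeness guarantees the third branch cannot recurse too many times before the graph becomes edgeless. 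For general $r$, replace ``edge'' by ``$r$-clique $Q$'': branch on the $\le r+1$ cases for $S \cap Q$, where in the $r$ cases with a chosen vertex $w \in Q$ we recurse on $G - N[w]$ with parameter $k-1$, and in the one case $S \cap Q = \emptyset$ we recurse on $G - Q$. The number of consecutive ``$S \cap Q = \emptyset$'' branches along any root-to-leaf path is at most $q-1$, since those deleted $r$-cliques are pairwise non-adjacent (they must be, as... ) --- wait, they need not be non-adjacent; so here the measure needs care.

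The main obstacle, therefore, is controlling the branches where $S$ avoids the chosen clique $Q$ entirely: deleting $Q$ alone does not obviously make measurable progress, and successively deleted cliques need not form an induced $K_r^q$. The fix I would pursue: instead of deleting just $Q$, observe that if $S \cap Q = \emptyset$ then every vertex of $Q$ has a neighbor... no. Rather, strengthen the choice of $Q$: among all $r$-cliques, there are at most $\binom{n}{r}$, so the total number of ``avoid'' branches is at worst polynomial per level but the depth of such branches is what matters. I believe the right statement is: maintain the invariant that we have already committed some set $I \subseteq S$ with $|I| = k - k'$ and a leftover graph $G'$; each ``pick $w$'' branch reduces $k'$ by one (bounded depth $k$); for the ``avoid $Q$'' branches, use that $G'$ restricted to the union of all deleted cliques, being $K_r^q$-free, can only absorb a bounded number of deletions before we either find the forbidden subgraph or reach a $K_r$-free remainder. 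Making this bookkeeping precise --- the correct potential function yielding an $f(k,q,r)$ bound on the number of leaves, together with a proof that $G'$ eventually becomes $K_r$-free --- is the crux; I expect it to require the same ``extract a maximal non-adjacent packing of $r$-cliques, which has size $< q$'' idea applied recursively, so that the total branching is bounded by roughly $(r+1)^{q \cdot k}$ times a polynomial. Once the branching tree has $f(q,r,k)$ leaves and each leaf is either a trivial YES (large $K_r$-free graph) or a bounded-size instance solved by brute force, the $FPT$ claim follows, with running time $f(q,r,k) \cdot n^{O(r)}$.
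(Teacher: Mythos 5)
You correctly set up the base case (a $K_r$-free graph on at least $\Ramsey(r,k)$ vertices is a trivial YES, smaller instances are brute-forced) and you correctly identify where the difficulty lies, but you do not resolve it, and as written the argument has a genuine gap. In the branch ``$S\cap Q=\emptyset$'' you only delete the $r$ vertices of $Q$, and this makes no progress measurable in terms of $k$, $q$ and $r$: the cliques deleted along a chain of consecutive avoid-branches need not be pairwise non-adjacent, so $K_r^q$-freeness does not bound the length of such a chain by $q-1$ (you notice this yourself mid-proof). A chain of avoid-branches can have length $\Theta(n)$, so the search tree has roughly $\sum_{i\le k}\binom{n}{i}r^i = n^{O(k)}$ leaves, i.e.\ an XP bound, not FPT. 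Your attempted repair via a maximal packing $Q_1,\dots,Q_t$ ($t\le q-1$) with $N=\bigcup_i N[Q_i]$ does show that $G\setminus N$ is $K_r$-free, but the solution may live entirely inside $N\setminus\bigcup_i Q_i$, which is untouched by the $(r+1)^{q-1}$ branching and can still contain arbitrarily many $r$-cliques; so again no potential function decreases. The sentence ``making this bookkeeping precise \dots is the crux'' is an accurate self-assessment: the crux is exactly what is missing, and I do not see how to complete it along these lines without a substantially new idea.

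For comparison, the paper's proof avoids branching on ``the solution avoids this clique'' altogether. It proves by induction on $q$ that a $K_r^q$-free graph either has an independent set of size $k$ or has at most $\Ramsey(r,k)^{qk}n^{qr}$ independent sets in total, and the proof enumerates them. One fixes a total order $<$ on $V(G)$ and, for each $r$-clique $C$ with smallest vertex $c$, considers the set $V_1$ of vertices above $c$ with no neighbor in $C$; this $V_1$ is $K_r^{q-1}$-free (any induced $K_r^{q-1}$ there together with $C$ gives a $K_r^q$), so its independent sets are enumerated by induction. Each such set $S_1$ is then extended downwards at most $k$ times, each time choosing the next vertex only among the $\Ramsey(r,k)$ largest remaining vertices below $c$ that are non-adjacent to the current set. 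Correctness uses a minimality trick: for a given independent set $S$, pick the clique $C$ whose smallest vertex $c$ is minimum subject to $C$ having no neighbor in $S\cap c^+$; then $S\cap c^+$ appears as some $S_1$, and minimality of $c$ forces the next vertex of $S$ to lie in the top-$\Ramsey(r,k)$ window at every step, since otherwise that window would contain an $r$-clique contradicting the choice of $C$. This ordering-plus-minimal-witness argument is precisely the mechanism that replaces the unbounded ``avoid'' branches in your scheme, and it is the part your proposal would still need.
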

\begin{proof}
We will prove by induction on $q$ that a $K_{r}^{q}$-free graph has an independent set of size $k$ or has at most $\Ramsey(r,k)^{qk}n^{qr}$ independent sets. This will give the desired FPT-algorithm, as the proof shows how to construct this collection of independent sets. Note that the case $q=1$ is trivial by Ramsey's theorem. 

Let $G$ be a $K_{r}^{q}$-free graph and let $<$ be any fixed total ordering of $V(G)$. For any vertex $x$, define $x^{+}=\{y,\, x<y\}$ and $x^{-}=V(G)\setminus x^{+}$.

Let $C$ be a fixed clique of size $r$ in $G$ and let $c$ be the smallest vertex of $C$ with respect to $<$. Let $V_1$ be the set of vertices of $c^+$ which have no neighbor in $C$. Note that $V_{1}$ induces a $K_{r}^{q-1}$-free graph, so by induction either it contains an independent set of size $k$, and so does $G$, or it has at most $\Ramsey(r,k)^{(q-1)k}n^{(q-1)r}$ independent sets.
In the latter case, let $\S_{1}$ be the set of all independent sets of $G[V_1]$.

Now in a second phase we define an initially empty set $\S_{C}$ and do the following. For each independent set $S_1$ in $\mathcal S_1$, we denote by $V_2$ the set of vertices in $c^-$ that have no neighbor in $S_1$. 
For every choice of a vertex $x$ amongst the largest $\Ramsey(r,k)$ vertices of $V_2$ in the order, we add $x$ to $S_{1}$ and modify $V_{2}$ in order to keep only vertices that are smaller than $x$ (with respect to $<$) and non adjacent to $x$. We repeat this operation $k$ times (or less if $V_{2}$ becomes empty) and, at the end, we either find an independent set of size $k$ or add $S_{1}$ to $\S_{C}$. By doing so we construct a family of at most $\Ramsey(r,k)^{k}$ independent sets for each $S_{1}$, so in total we get indeed at most $\Ramsey(r,k)^{kq}n^{(q-1)r}$ independent sets for each clique $C$. Finally we define $\cal S$ as the union over all $r$-cliques $C$ of the sets $\S_C$, so that $\cal S$ has size at most the desired number.

We claim that if $G$ does not contain an independent set of size $k$, then $\cal S$ contains all independent sets of $G$. It suffices to prove that for every independent set $S$, there exists a clique $C$ for which $S\in \S_C$.
Let $S$ be an independent set, and define $C$ to be a clique of size $r$ such that its smallest vertex $c$ (with respect to $<$) satisfies the conditions:
\begin{itemize}
\item no vertex of $C$ is adjacent to a vertex of $S \cap c^+$, and
\item $c$ is the smallest vertex such that a clique $C$ satisfying the first item exists.
\end{itemize}
Note that several cliques $C$ might satisfy these conditions. In that case, pick one such clique arbitrarily. These two conditions ensures that $S\cap c^+$ is an independent set in the set $V_1$ defined in the construction above. Thus it will be picked in the second phase as some $S_{1}$ in $\S_{1}$ and for this choice, each time $V_{2}$ is considered, the fact that $C$ is chosen to minimize its smallest element $c$ guarantees that there must be a vertex of $S$ in the $\Ramsey(r,k)$ last vertices in $V_2$, otherwise we could find within those vertices an $r$-clique contradicting the choice of $C$. So we are insured that we will add $S$ to the collection $\S_{C}$, which concludes our proof.
\end{proof}

\section{Positive results II}\label{sec:positive-two}
\subsection{Key ingredient: Iterative expansion and Ramsey extraction}\label{sec:framework}
In this section, we present the main idea of our algorithms. It is a generalization of iterative expansion, which itself is the maximization version of the well-known iterative compression technique.
Iterative compression is a useful tool for designing parameterized algorithms for subset problems (\emph{i.e.} problems where a solution is a subset of some set of elements: vertices of a graph, variables of a logic formula...\emph{etc.})  \cite{CyFoKoLoMaPiPiSa15,ReSmVe04}. Although it has been mainly used for minimization problems, iterative compression has been successfully applied for maximization problems as well, under the name \emph{iterative expansion} \cite{ChLiLuSzZh12}. Roughly speaking, when the problem consists in finding a solution of size at least $k$, the iterative expansion technique consists in solving the problem where a solution $S$ of size $k-1$ is given in the input, in the hope that this solution will imply some structure in the instance. In the following, we consider an extension of this approach where, instead of a single smaller solution, one is given a set of $f(k)$ smaller solutions $S_1$, $\dots$, $S_{f(k)}$. As we will see later, we can further add more constraints on the sets $S_1$, $\dots$, $S_{f(k)}$.
Notice that all the results presented in this sub-section (Lemmas \ref{lem:mis-to-itexp} and \ref{lem:itexp-to-ramsey} in particular) hold for any hereditary graph class (including the class of all graphs). The use of properties inherited from particular graphs (namely, $H$-free graphs in our case) will only appear in Sections~\ref{sec:clique-minus-triangle} and \ref{sec:clique-minus-bipartite}.

\begin{definition}\label{def:fitexp}
For a function $f : \N \rightarrow \N$, the \fitexpmis{$f$} takes as input a graph $G$, an integer $k$, and a set of $f(k)$ independent sets $S_1$, $\dots$, $S_{f(k)}$, each of size $k-1$. The objective is to find an independent set of size $k$ in $G$, or to decide that such an independent set does not exist.
\end{definition}

\begin{lemma}\label{lem:mis-to-itexp}
Let $\G$ be a hereditary graph class. \shortindepset is $FPT$ in $\G$ iff \fitexpmis{$f$} is $FPT$ in $\G$ for some computable function $f : \N \rightarrow \N$.
\end{lemma}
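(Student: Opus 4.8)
The statement is an "iff" between two FPT claims, so I would prove the two directions separately.

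The forward direction ($\Rightarrow$) is the easy one. Suppose \shortindepset is FPT in $\G$, say solvable in time $g(k)n^c$. Given an instance of \fitexpmis{$f$} — a graph $G \in \G$, an integer $k$, and $f(k)$ independent sets of size $k-1$ — we simply ignore the extra input $S_1,\dots,S_{f(k)}$ and run the assumed \shortindepset algorithm on $(G,k)$. Since $\G$ is hereditary this is a legitimate instance, and the running time $g(k)n^c$ is FPT. So \fitexpmis{$f$} is FPT for \emph{every} computable $f$, in particular for some $f$.

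The reverse direction ($\Leftarrow$) is where the work is, and it is the analogue of the standard iterative-compression-to-decision argument. Assume \fitexpmis{$f$} is FPT in $\G$ for some computable $f$, solvable in time $h(k)n^c$. Given $(G,k)$ with $V(G) = \{v_1,\dots,v_n\}$, I process the vertices one at a time, maintaining the invariant that after step $i$ I either have already output an independent set of size $k$, or I have a family $\mathcal{F}_i$ of independent sets of $G_i := G[\{v_1,\dots,v_i\}]$ such that, \emph{if $G_i$ has an independent set of size $k$, then $\mathcal{F}_i$ contains one of size $\ge k-1$} (and, say, $|\mathcal{F}_i| \le f(k)$, padding with copies or truncating appropriately). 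Passing from $G_i$ to $G_{i+1}$: add $v_{i+1}$; each set in $\mathcal{F}_i$ either stays an independent set of size $k-1$, or we can add $v_{i+1}$ to the independent sets of size $k-1$ not seeing $v_{i+1}$ to try to reach size $k$. The real point: if $G_{i+1}$ has an independent set $S$ of size $k$, then $S \setminus \{v_{i+1}\}$ is an independent set of size $\ge k-1$ in $G_i$, so $\mathcal{F}_i$ contains an independent set of size exactly $k-1$; feeding $\mathcal{F}_i$ (completed to exactly $f(k)$ sets, e.g.\ by repetition, which is harmless) together with $G_{i+1}$ and $k$ to the \fitexpmis{$f$} oracle either finds a size-$k$ independent set or certifies none exists. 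In the latter case we need to \emph{rebuild} a valid $\mathcal{F}_{i+1}$ of size-$(k-1)$ sets for $G_{i+1}$; this is done by another call, namely solving \fitexpmis{$f$} with parameter $k-1$ (whose positive instances give us the size-$(k-1)$ witnesses we need to stock), or more simply by observing that a family certifying the $(k-1)$-invariant for $G_{i+1}$ can be assembled from $\mathcal{F}_i$ plus $v_{i+1}$. After step $n$ we have solved $(G,k)$. Each step makes $O(1)$ oracle calls on graphs with $\le n$ vertices and parameter $\le k$, so the total time is $O(n \cdot h(k) n^c) = h(k) n^{c+1}$, which is FPT.

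The main obstacle is making the invariant self-sustaining: the \fitexpmis{$f$} oracle only \emph{detects} a size-$k$ set, it does not hand back the family of size-$(k-1)$ sets needed to continue. I expect to resolve this exactly as in classical iterative compression — either by invoking the oracle also at parameter $k-1$ to harvest $(k-1)$-witnesses, or by noting that $\mathcal{F}_i$ together with the newly added vertex already furnishes enough size-$(k-1)$ independent sets of $G_{i+1}$ to reestablish the invariant without any call. One must also be slightly careful that \fitexpmis{$f$} demands \emph{exactly} $f(k)$ input sets each of size \emph{exactly} $k-1$; trivial padding (repeating a set) and the observation that trivial ``$k \le 1$'' base cases are solved directly handle these bookkeeping issues.
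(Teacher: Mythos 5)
Your forward direction is fine, but the reverse direction has a genuine gap, and it is exactly the point you yourself flag as ``the main obstacle''. The oracle for \fitexpmis{$f$} solves the full decision problem whatever hint it is given, so the only role of your family $\mathcal{F}_i$ is to supply a syntactically valid input: $f(k)$ independent sets of size exactly $k-1$. Your invariant does not guarantee that $\mathcal{F}_i$ contains even one such set (it is vacuous whenever $G_i$ has no independent set of size $k$, e.g.\ at the start of the iteration), and neither proposed repair works as stated. The claim that ``$\mathcal{F}_i$ together with $v_{i+1}$ already furnishes enough size-$(k-1)$ independent sets'' is false in general (take $\mathcal{F}_i=\emptyset$ and $k\ge 3$); and invoking the oracle at parameter $k-1$ to harvest witnesses is circular, since that call itself needs $f(k-1)$ independent sets of size $k-2$ as input --- making this work requires a recursion on $k$ that you never set up, and which also invalidates the ``$O(1)$ oracle calls per step'' running-time claim. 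Note also that the sentence ``so $\mathcal{F}_i$ contains an independent set of size exactly $k-1$'' is a non sequitur: your invariant ties $\mathcal{F}_i$ to size-$k$ independent sets of $G_i$, not to size-$(k-1)$ ones. Finally, padding by repeating one set does not produce $f(k)$ distinct sets, and in particular not the pairwise vertex-disjoint ones that the subsequent Ramsey-extraction step of the paper actually relies on.

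The paper's proof dispenses with the vertex-by-vertex iteration (which is superfluous here) and inducts on $k$ instead: using the algorithm for parameter $k-1$, it greedily extracts $f(k)$ pairwise vertex-disjoint independent sets of size $k-1$, each taken in the graph with the previously found sets deleted; if at some stage no further such set exists, then every independent set of size $k$ must intersect the union of the at most $f(k)(k-1)$ vertices found so far, so one branches on these vertices and recurses with parameter $k-1$ on $G\setminus N[v]$; otherwise the \fitexpmis{$f$} oracle is called once. Replacing your vertex iteration by this recursion on $k$ is essentially what is needed to close the gap.
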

\begin{proof}
Clearly if \shortindepset is $FPT$, then \fitexpmis{$f$} is $FPT$ for any computable function $f$. 
Conversely, let $f$ be a function for which \fitexpmis{$f$} is $FPT$, and let $G$ be a graph with $|V(G)|=n$.

We show by induction on $k$ that there is an algorithm that either finds an independent set of size $k$, or answers that such a set does not exist, in $FPT$ time parameterized by $k$. The initialization can obviously be computed in constant time.
Assume we have an algorithm for $k-1$.
Successively for $i$ from $1$ to $f(k)$, we construct an independent set $S_i$ of size $k-1$ in $G \setminus \left(S_1, \dots, S_{j-1} \right)$. 
If, for some $i$, we are unable to find such an independent set, then it implies that any independent set of size $k$ in $G$ must intersect $S_1 \cup \dots \cup S_i$. We thus branch on every vertex $v$ of this union, and, by induction, find an independent set of size $k-1$ in the graph induced by $V(G) \setminus N[v]$.
If no step $i$ triggered the previous branching, we end up with $f(k)$ vertex-disjoint independent sets $S_1$, $\dots$, $S_{f(k)}$, each of size $k-1$. We now invoke the algorithm for \fitexpmis{$f$} to conclude.
Let us analyze the running time of this algorithm: each step either branch on at most $f(k)(k-1)$ subcases with parameter $k-1$, or concludes in time $\mathcal{A}_{f}(n, k)$, the running time of the algorithm for \fitexpmis{$f$}. Hence the total running time is $O^*(f(k)^k(k-1)^k\mathcal{A}_{f}(n, k))$, where the $O^*(.)$ suppresses polynomial factors.

\end{proof}

We will actually prove a stronger version of this result, by adding more constraints on the input sets $S_1$, $\dots$, $S_{f(k)}$, and show that solving the expansion version on this particular kind of input is enough to obtain the result for \shortindepset.

\begin{definition}
Given a graph $G$ and a set of $k-1$ vertex-disjoint cliques of $G$, $\C = \{C_1, \dots, C_{k-1}\}$, each of size $q$, we say that $\C$ is a set of \emph{Ramsey-extracted cliques of size $q$} if the conditions below hold. Let $C_r = \{c_j^r : j \in \{1, \dots, q\}\}$ for every $r \in \{1, \dots, k-1\}$.
\begin{itemize}
	\item For every $j \in [q]$, the set $\{c_j^r : r \in \{1, \dots, k-1\}\}$ is an independent set of $G$ of size $k-1$.
	\item For any $r \neq r' \in \{1, \dots, k-1\}$, one of the four following case can happen:
	\begin{enumerate}[(i)]
		\item for every $j,j' \in [q]$, $c_j^rc_{j'}^{r'} \notin E(G)$ 
		\item for every $j,j' \in [q]$, $c_j^rc_{j'}^{r'} \in E(G)$ iff $j \neq j'$
		\item for every $j,j' \in [q]$, $c_j^rc_{j'}^{r'} \in E(G)$ iff $j < j'$
		\item for every $j,j' \in [q]$, $c_j^rc_{j'}^{r'} \in E(G)$ iff $j > j'$
	\end{enumerate}
	In the case $(i)$ (resp. $(ii)$), we say that the relation between $C_r$ and $C_{r'}$ is \emph{empty} (resp. \emph{full}\footnote{Remark that in this case, the graph induced by $C_r \cup C_{r'}$ is the complement of a perfect matching.}). In case $(iii)$ or $(iv)$, we say the relation is \emph{semi-full}.
\end{itemize}
\end{definition}

Observe, in particular, that a set $\C$ of $k-1$ Ramsey-extracted cliques of size $q$ can be partitionned into $q$ independent sets of size $k-1$.
As we will see later, these cliques will allow us to obtain more structure with the remaining vertices if the graph is $H$-free. Roughly speaking, if $q$ is large, we will be able to extract from $\C$ another set $\C'$ of $k-1$ Ramsey-extracted cliques of size $q' < q$, such that every clique is a module\footnote{A set of vertices $M$ is a module if every vertex $v \notin M$ is adjacent to either all vertices of $M$, or none.} with respect to the solution $x_1^*$, $\dots$, $x_k^*$ we are looking for. Then, by guessing the structure of the adjacencies between $\C'$ and the solution, we will be able to identify from the remaining vertices $k$ sets $X_1$, $\dots$, $X_k$, where each $X_i$ has the same neighborhood as $x_i^*$ \wrt $\C'$, and plays the role of ``candidates'' for this vertex. For a function $f : \N \rightarrow \N$, we define the following problem:

\begin{definition}\label{def:faugramsey}
The \faugramsey{$f$} problem takes as input an integer $k$ and a graph $G$ whose vertices are partitionned into non-empty sets $X_1 \cup \dots \cup X_k \cup C_1 \cup \dots \cup C_{k-1}$, where:
\begin{itemize}
	\item  $\{C_1, \dots, C_{k-1}\}$ is a set of $k-1$ Ramsey-extracted cliques of size $f(k)$
	\item any independent set of size $k$ in $G$ is contained in $X_1 \cup \dots \cup X_k$ 
	\item $\forall i \in \{1, \dots, k\}$, $\forall v,w \in X_i$ and $\forall j \in \{1, \dots, k-1\}$, $N(v) \cap C_j=N(w) \cap C_j=\emptyset$ or $N(v) \cap C_j =N(w) \cap C_j=C_j$
	\item the following bipartite graph $\B$ is connected: $V(\B) = B_1 \cup B_2$, $B_1=\{b_1^1, \dots, b_k^1\}$, $B_2=\{b_1^2, \dots, b_{k-1}^2\}$ and $b_j^1b_r^2 \in E(\B)$ iff $X_j$ and $C_r$ are adjacent.
	\end{itemize}
The objective is to find an independent set $S$ in $G$ of size at least $k$, or to decide that $G$ does not contain an independent set $S$ such that $S \cap X_i \neq \emptyset$ for all $i \in \{1, \dots, k\}$.
\end{definition}

\begin{figure}
  \centering
  \begin{tikzpicture}
    \def\k{7}
    \def\q{8}
    \def\hs{0.8}
    \def\vs{1.2}
    \pgfmathsetmacro\km{\k-1}
    \pgfmathsetmacro\kmm{\km-1}
    \pgfmathsetmacro\qm{\q-1}

    \node at (-3,\vs * \k + 0.5) {$\X$} ;
    \foreach \i in {1,...,\k}{
      \node (a\i) at (-4,\vs * \i - \hs / 2 - 0.3) {} ;
      \node (b\i) at (-2,\vs * \i - \hs / 2 - 0.3) {} ;
      \node (c\i) at (-3,\vs * \i - \hs / 2) {} ;
      \node[draw,ellipse, thick, fit=(a\i) (b\i) (c\i)] (X\i) {$X_\i$} ;
    }

    \node at (\hs * \q / 2 + \hs / 2,\vs * \km + 0.8) {$\C$} ;
    \foreach \i in {1,...,\km}{
      \foreach \j in {1,...,\q}{
        \node[draw,circle] (u\i\j) at (\hs * \j,\vs * \i) {} ;
      }
      \node[draw,rectangle,rounded corners, thick, fit=(u\i1) (u\i\q)] (C\i) {} ;
      \node at (\hs * \q + \hs - 0.15,\vs * \i - 0.05) {$C_\i$} ; 
    }
    \foreach \i [count=\ip from 2] in {1,...,\kmm}{
      \foreach \j in {1,...,\q}{
        \draw[dashed] (u\i\j) -- (u\ip\j) ;
      }
    }
    \node[draw,rectangle,rounded corners,inner sep=0.2cm, thick, fit=(u13) (u63)] (S3) {} ;
    \node at (3 * \hs, \k * \vs - 0.6) {$S_3$} ;

    \node at (-0.25, \k * \vs - 0.5) {$\B$} ;
    \foreach \i/\j in {1/3,1/4,1/6,2/1,2/3,3/3,3/4,4/4,5/1,5/4,5/5,6/5,6/6,7/1,7/3,7/5,7/6}{
      \draw[thick] (X\i.east) -- (C\j.west) ;
    }
    
    \foreach \a/\b in {3/4,3/2,2/1}{
      \foreach \j in {1,...,\qm}{ 
        \pgfmathsetmacro\jp{\j+1}
        \foreach \h in {\q,...,\jp}{
          \draw (u\a\j) -- (u\b\h) ;
        }
      }
    }
    \foreach \a/\b in {5/6}{
      \foreach \j in {2,...,\qm}{ 
        \pgfmathsetmacro\jp{\j+1}
        \pgfmathsetmacro\jm{\j-1}
        \foreach \h in {1,...,\jm}{
          \draw (u\a\j) -- (u\b\h) ;
        }
        \foreach \h in {\q,...,\jp}{
          \draw (u\a\j) -- (u\b\h) ;
        }
      }
      \foreach \h in {2,...,\q}{
          \draw (u\a1) -- (u\b\h) ;
      }
      \foreach \h in {1,...,\qm}{
          \draw (u\a\q) -- (u\b\h) ;
      }
    }
  \end{tikzpicture}
  \caption{The structure of the \faugramsey{$f$} inputs.}
  \label{fig:faugramsey}
\end{figure}



\begin{lemma}\label{lem:itexp-to-ramsey}
Let $\G$ be a hereditary graph class. If there exists a computable function $f : \N \rightarrow \N$ such that \faugramsey{$f$} is $FPT$ in $\G$, then \fitexpmis{$g$} is $FPT$ in $\G$, where $g(x)=\Ramsey_{\ell}(f(x)2^{x(x-1)})~\forall x \in \N$, with $\ell_x=2^{(x-1)^2}$.
\end{lemma}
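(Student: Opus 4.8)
The plan is to reduce \fitexpmis{$g$} to \faugramsey{$f$} by first extracting a single set of Ramsey-extracted cliques from the $g(k)$ given independent sets, then further refining these cliques so that each becomes a module with respect to any target solution, and finally guessing the interaction pattern between these modules and the sought solution to produce the candidate sets $X_1, \dots, X_k$. Throughout, I would use Lemma~\ref{lem:mis-to-itexp} implicitly only to understand the shape of the input; the actual work is the combinatorial extraction.

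\medskip
\noindent\textbf{Step 1: Extracting Ramsey-extracted cliques.}
We are given $g(k) = \Ramsey_{\ell}(f(k) \cdot 2^{k(k-1)})$ independent sets $S_1, \dots, S_{g(k)}$, each of size $k-1$; write $S_r = \{s_1^r, \dots, s_{k-1}^r\}$ after fixing an arbitrary labelling of each. First I would discard duplicated and ``redundant'' sets so that we may assume the $S_r$ are pairwise disjoint (if some $S_r$ is covered by $\bigcup_{r' < r} S_{r'}$ then any size-$k$ independent set meeting that union can be found by branching, exactly as in the proof of Lemma~\ref{lem:mis-to-itexp}; otherwise greedily thin them out). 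Now consider the complete graph on the $g(k)$ indices $\{1, \dots, g(k)\}$ and colour the edge $\{r, r'\}$ by the ``interaction type'' of the ordered-ish pair $(S_r, S_{r'})$: for each of the $(k-1)^2$ pairs $(j, j')$ we record whether $s_j^r s_{j'}^{r'} \in E(G)$, so the number of colours is at most $2^{(k-1)^2} = \ell_k$. By the definition of $\Ramsey_{\ell}$, since $g(k) = \Ramsey_{\ell_k}(f(k) 2^{k(k-1)})$, there is a monochromatic clique of size $f(k) 2^{k(k-1)}$ among the indices; restrict attention to those independent sets. Within this monochromatic family, the bipartite interaction between any two $S_r, S_{r'}$ is governed by a single fixed bipartite graph $F$ on vertex classes $[k-1] \sqcup [k-1]$. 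I would then argue (a Ramsey/pigeonhole step on the structure of $F$, or rather: since $F$ is the same for all pairs, one analyses $F$ once) that after relabelling coordinates consistently — which one can do for $2^{k(k-1)}$-many of them, losing a factor absorbed in the bound — each coordinate class $\{s_j^r : r\}$ becomes an independent set (it already is, being a transversal issue) and the cross-pattern between classes $r, r'$ is one of the four types (i)--(iv): empty, full (complement of a perfect matching), or one of the two semi-full orderings. The point is that a fixed bipartite graph $F$ between two $(k-1)$-sets, when it must be ``the same up to the coordinate identification'' for all pairs in a large family, forces $F$ to be, after a suitable common reordering, either a disjoint-union-friendly pattern or a half-graph pattern; grouping the surviving $S_r$ by which of the $2^{k(k-1)}$ reorderings realises this leaves $f(k)$ of them forming a set $\C = \{C_1, \dots, C_{k-1}\}$ — wait, that is the wrong count. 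Let me restate: the $f(k)$ surviving independent sets, viewed column-wise, give $k-1$ cliques $C_1, \dots, C_{k-1}$ each of size $f(k)$, where $C_j := \{s_j^{r} : r \text{ surviving}\}$; column $j$ being a clique because within the monochromatic reordered family, type (ii)--(iv) each put an edge between $s_j^r$ and $s_j^{r'}$ — here one has to be slightly careful and may need type (ii) with the diagonal flipped, which is precisely why the footnote in the definition says ``complement of a perfect matching''. This gives a set of $k-1$ Ramsey-extracted cliques of size $f(k)$.

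\medskip
\noindent\textbf{Step 2: Making the cliques modular and producing the $X_i$.}
Given the Ramsey-extracted cliques $C_1, \dots, C_{k-1}$ of size $f(k)$, let $x_1^*, \dots, x_k^*$ be a hypothetical solution (an independent set of size $k$) disjoint from $\bigcup_j C_j$ — if it is not disjoint, branch on the at most $(k-1)f(k)$ vertices of the cliques. For each $i \in [k]$ and each $j \in [k-1]$, the vertex $x_i^*$ has some neighbourhood trace on $C_j$; I would guess, by iterating over all $2^{k(k-1)}$ choices — this is where the $2^{x(x-1)}$ factor in $g$ is spent, and it is a legitimate FPT branching — whether $x_i^*$ is complete or anti-complete to each $C_j$, discarding guesses that are not realisable. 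Wait: a single vertex need not be complete or anti-complete to a clique $C_j$; but since $C_j$ has size $f(k)$ and we only need $f(k)$-sized cliques in the end, I would instead \emph{restrict} each $C_j$ to a large sub-clique $C_j'$ on which every one of $x_1^*, \dots, x_k^*$ is either complete or anti-complete — this costs a Ramsey-type shrink from size $f(k)$ down to some $f'(k)$, and the function $f$ should be chosen upstream so that $f'(k)$ is still the size demanded by \faugramsey{}; this is a definitional circularity that the paper resolves by taking $f$ to already be ``the size needed after all these shrinks,'' and $g$ large enough to feed it. Once each $C_j'$ is modular with respect to the solution, define $X_i := \{v \in V(G) \setminus \bigcup_j C_j' : \forall j,\ N(v) \cap C_j' \in \{\emptyset, C_j'\} \text{ and } N(v)\cap C_j' = N(x_i^*)\cap C_j'\}$ according to the guessed pattern for $x_i^*$; then $x_i^* \in X_i$. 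Delete all vertices lying in no $X_i$ and not in any clique — they cannot be in the target solution once the pattern is fixed — giving exactly the partition $X_1 \cup \dots \cup X_k \cup C_1' \cup \dots \cup C_{k-1}'$ required by Definition~\ref{def:faugramsey}. The connectivity of the auxiliary bipartite graph $\B$ is the one remaining condition: if $\B$ is disconnected, the instance splits into independent sub-instances on fewer than $k$ solution-slots, each of which is handled by recursion on $k$ (or by branching), so we may assume $\B$ connected. Running \faugramsey{$f$} on this instance and returning YES over all guesses for which it returns YES completes the reduction.

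\medskip
\noindent\textbf{Expected main obstacle.}
The delicate part is Step~1: showing that a large monochromatic family of $(k-1)$-element independent sets, all pairwise related by the \emph{same} bipartite graph $F$, can after a uniform coordinate relabelling be organised into Ramsey-extracted cliques whose pairwise relations are exactly of types (i)--(iv). The subtlety is that ``same $F$'' is a statement about labelled sets, whereas the four allowed types are the only \emph{shift-invariant / order-coherent} patterns a fixed $F$ can induce simultaneously on all pairs of a large transitive-like family; pinning this down requires either a second application of Ramsey (to the internal structure of $F$, e.g.\ finding a large set of coordinates on which $F$ is a half-graph or a matching-complement) or an explicit structural dichotomy for such $F$, and it is here that the exact constants $\ell_x = 2^{(x-1)^2}$ and the inner parameter $f(x) 2^{x(x-1)}$ in the statement of $g$ are forced. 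Everything else — the branchings on vertices of the cliques, the guessing of the $2^{k(k-1)}$ interaction patterns, the splitting when $\B$ is disconnected — is routine FPT bookkeeping whose cost is polynomial-in-$n$ times a function of $k$, so the composed algorithm is FPT in $\G$ as claimed.
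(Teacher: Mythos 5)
Your overall route is the paper's: colour the pairs of the $g(k)$ given independent sets by their $2^{(k-1)^2}$ interaction types, extract a monochromatic family of size $f(k)2^{k(k-1)}$, read the columns off as Ramsey-extracted cliques, make them modules with respect to a hypothetical solution, and feed the structured instance to \faugramsey{$f$}. The genuine gap is in how you spend the factor $2^{k(k-1)}$, and it is not where you think. In Step~1 no relabelling, no ``second application of Ramsey'' and no structural dichotomy for $F$ are needed: inside the monochromatic family every ordered pair $(S_i,S_j)$ with $i<j$ carries \emph{literally} the same bipartite pattern $F$, so column $C_p$ is a clique or an independent set according to the single diagonal entry $F(p,p)$ (in the latter case you are already done, the column having size at least $k$), and the relation between $C_p$ and $C_q$ is empty, full or semi-full according to whether zero, two, or exactly one of the entries $F(p,q)$, $F(q,p)$ are edges. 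The ``expected main obstacle'' you announce is a non-issue, and by burning the $2^{k(k-1)}$ budget there (your grouping by ``reorderings'') you arrive in Step~2 with cliques of size only $f(k)$; this is exactly what produces the ``definitional circularity'' you then cannot resolve, and resolving it by ``choosing $f$ upstream'' contradicts the fixed $g$ in the statement. The paper keeps the cliques at size $h(k)=f(k)2^{k(k-1)}$ up to this point and then uses a plain pigeonhole, not a Ramsey shrink: for each index $j$, the adjacency vector of the $k$ (unknown) solution vertices to $s_j^1,\dots,s_j^{k-1}$ is one of $2^{k(k-1)}$ possibilities, so some vector is realised by at least $f(k)$ indices; branching over all size-$f(k)$ index subsets (a function of $k$ only) yields cliques of exactly the size demanded by \faugramsey{$f$}, with no change to $f$. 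Note also that you cannot ``guess whether $x_i^*$ is complete or anti-complete to $C_j$'' as an algorithmic step tied to $x_i^*$; the algorithm must branch over all possibilities, and that branching cost need not appear inside $g$ at all.

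Two further steps of your Step~2 fail as written. First, defining $X_i$ as \emph{all} vertices matching the guessed pattern of $x_i^*$ does not yield the partition required by Definition~\ref{def:faugramsey}: if two solution vertices share a pattern, your $X_i$ and $X_{i'}$ coincide, and no deterministic split is guaranteed to keep one solution vertex in each part. The paper inserts a colour-coding step (colour the candidate vertices with $k$ colours; the solution is rainbow with probability at least $e^{-k}$; derandomisable) precisely so that distinct solution vertices land in disjoint candidate sets, which is why its statement of the gem/clique algorithms speaks of rainbow solutions. Second, your handling of a disconnected $\B$ by ``recursing on independent sub-instances'' is unjustified: components of $\B$ do not induce components of the graph, since vertices of $\X$ lying in different $\B$-components may well be adjacent. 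The paper instead argues that disconnection cannot occur when a solution exists: since $|B_1|=k>k-1=|B_2|$, a counting argument on the components lets one combine the solution vertices of one component with transversal vertices of cliques outside it to build an independent set of size $k$ meeting $\C$, contradicting the assumption (enforced by the initial branching on vertices of the $S_j$'s) that every independent set of size $k$ avoids $\bigcup_j S_j$; hence in that branch the algorithm may simply reject when $\B$ is disconnected.
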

\begin{proof}
Let $f : \N \rightarrow \N$ be such a function, and let $G$, $k$ and $\S = \{S_1, \dots, S_{g(k)}\}$ be an input of \fitexpmis{$g$}.
Recall that the objective is to find an independent set of size $k$ in $G$, or to decide that such an independent set does not exist.
If $G$ contains an independent set of size $k$, then either there is one intersecting some set of $\S$, or every independent set of size $k$ avoids the sets in $\S$. In order to capture the first case, we branch on every vertex $v$ of the sets in $\S$, and make a recursive call with parameter $G \setminus N[v]$, $k-1$.
In the remainder of the algorithm, we thus assume that any independent set of size $k$ in $G$ avoids every set of $\S$.

We choose an arbitrary ordering of the vertices of each $S_j$. Let us denote by $s_j^r$ the $r^{th}$ vertex of $S_j$. Notice that given an ordered pair of sets of $k-1$ vertices $(A, B)$, there are $\ell_k=2^{(k-1)^2}$ possible sets of edges between these two sets. Let us denote by $c_1$, $\dots$, $c_{2^{(k-1)^2}}$ the possible sets of edges, called \emph{types}.
We define an auxiliary edge-colored graph $H$ whose vertices are in one-to-one correspondence with $S_1$, $\dots$, $S_{g(k)}$, and, for $i < j$, there is an edge between $S_i$ and $S_j$ of color $\gamma$ iff the type of $(S_i,S_j)$ is $\gamma$. 
By Ramsey's theorem, since $H$ has $\Ramsey_{\ell_k}(f(k)2^{k(k-1)})$ vertices, it must admit a monochromatic clique of size at least $h(k)=f(k)2^{k(k-1)}$. \emph{W.l.o.g.}, the vertex set of this clique corresponds to $S_1$, $\dots$, $S_{h(k)}$. For $p \in \{1, \dots, k-1\}$, let $C_p = \{s_j^p, \dots, s_{h(k)}^p\}$. 
Observe that the Ramsey extraction ensures that each $C_p$ is either a clique or an independent set. If $C_p$ is an independent set for some $r$, then we can immediately conclude, since $h(k) \ge k$.
Hence, we suppose that $C_p$ is a clique for every $p \in \{1, \dots, k-1\}$. 
We now prove that $C_1$, $\dots$, $C_{k-1}$ are Ramsey-extracted cliques of size $h(k)$.
First, by construction, for every $j \in \{1, \dots, h(k)\}$, the set $\{s_j^p : p=1, \dots, k-1\}$ is an independent set. 
Then, let $c$ be the type of the clique obtained previously, represented by the adjacencies between two sets $(A, B)$, each of size $k-1$. For every $p \in \{1, \dots, k-1\}$, let $a_p$ (resp. $b_p$) be the $a^{th}$ vertex of $A$ (resp. $B$). Let $p,q \in \{1, \dots, t\}$, $p \neq q$.
If any of $a_pb_q$ and $a_qb_p$ are edges in type $c$, then there is no edge between $C_p$ and $C_q$, and their relation is thus empty.
If both edges $a_pb_q$ and $a_qb_p$ exist in $c$, then the relation between $C_p$ and $C_q$ is full.
Finally if exactly one edge among $a_pb_q$ and $a_qb_p$ exists in $c$, then the relation between $C_p$ and $C_q$ is semi-full.
This concludes the fact that $\C=\{C_1, \dots, C_{k-1}\}$ are Ramsey-extracted cliques of size $h(k)$.

Suppose that $G$ has an independent set $X^*=\{x_1^*, \dots, x_k^*\}$. Recall that we assumed previously that $X^*$ is contained in $V(G) \setminus \left(C_1 \cup \dots \cup C_{k-1}\right)$.
The next step of the algorithm consists in branching on every subset of $f(k)$ indices $J \subseteq \{1, \dots, h(k)\}$, and restrict every set $C_p$ to $\{s_j^p : j \in J\}$. For the sake of readability, we keep the notation $C_p$ to denote $\{s_j^p : j \in J\}$ (the non-selected vertices are put back in the set of remaining vertices of the graph, \ie we do not delete them).
Since $h(k) = f(k)2^{k(k-1)}$, there must exist a branching where the chosen indices are such that for every $i \in \{1, \dots, k\}$ and every $p \in \{1, \dots, k-1\}$, $x_i^*$ is either adjacent to all vertices of $C_p$ or none of them. 
In the remainder, we may thus assume that such a branching has been made, with respect to the considered solution $X^*=\{x_1^*, \dots, x_k^*\}$.
Now, for every $v \in V(G) \setminus \left(C_1, \dots, C_{k-1}\right)$, if there exists $p \in \{1, \dots, k-1\}$ such that $N(v) \cap C_p \neq \emptyset$ and $N(v) \cap C_p \neq C_p$  , then we can remove this vertex, as we know that it cannot correspond to any $x_i^*$. Thus, we know that all the remaining vertices $v$ are such that for every $p \in \{1, \dots, k-1\}$, $v$ is either adjacent to all vertices of $C_p$, or none of them.

In the following, we perform a color coding-based step on the remaining vertices. Informally, this color coding will allow us to identify, for every vertex $x_i^*$ of the optimal solution, a set $X_i$ of candidates, with the property that all vertices in $X_i$ have the same neighborhood with respect to sets $C_1$, $\dots$, $C_{k-1}$.
We thus color uniformly at random the remaining vertices $V(G) \setminus \left(C_1, \dots, C_{k-1}\right)$ using $k$ colors. The probability that the elements of $X^*$ are colored with pairwise distinct colors is at least $e^{-k}$.
We are thus reduced to the case of finding a \emph{colorful}\footnote{A set of vertices is called \emph{colorful} if it is colored with pairwise distinct colors.} independent set of size $k$.
For every $i \in \{1, \dots, k\}$, let $X_i$ be the vertices of $V(G) \setminus \left(C_1, \dots, C_{k-1}\right)$ colored with color $i$.
We now partition every set $X_i$ into at most $2^{k-1}$ subsets $X_i^1$, $\dots$, $X_i^{2^{k-1}}$, such that for every $j \in \{1, \dots, 2^{k-1}\}$, all vertices of $X_i^j$ have the same neighborhood with respect to the sets $C_1$, $\dots$, $C_{k-1}$ (recall that every vertex of $V(G) \setminus \left(C_1, \dots, C_{k-1}\right)$ is adjacent to all vertices of $C_p$ or none, for each $p \in \{1, \dots, k-1\}$).
We branch on every tuple $(j_1, \dots, j_k) \in \{1, \dots, 2^{k-1}\}$. Clearly the number of branchings is bounded by a function of $k$ only and, moreover, one branching $(j_1, \dots, j_k)$ is such that $x_i^*$ has the same neighborhood in $C_1 \cup \dots \cup C_{k-1}$ as vertices of $X_i^{j_i}$ for every $i \in \{1, \dots, k\}$. We assume in the following that such a branching has been made. For every $i \in \{1, \dots, k\}$, we can thus remove vertices of $X_i^j$ for every $j \neq j_i$. For the sake of readability, we rename $X_i^{j_i}$ as $X_i$.
Let $\B$ be the bipartite graph with vertex bipartition $(B_1, B_2)$, $B_1=\{b_1^1, \dots, b_k^1\}$, $B_2=\{b_1^2, \dots, b_{k-1}^2\}$, and $b^1_ib^2_p \in E(\B)$ iff $x_i^*$ is adjacent to $C_p$.
Since every $x_i^*$ has the same neighborhood as $X_i$ with respect to $C_1$, $\dots$, $C_{k-1}$, this bipartite graph actually corresponds to the one described in Definition~\ref{def:faugramsey} representing the adjacencies between $X_i$'s and $C_p$'s.
We now prove that it is connected. Suppose it is not. Then, since $|B_1|=k$ and $|B_2|=k-1$, there must be a component with as many vertices from $B_1$ as vertices from $B_2$. However, in this case, using the fixed solution $X^*$ on one side and an independent set of size $k-1$ in $C_1 \cup \dots \cup C_{k-1}$ on the other side, it implies that there is an independent set of size $k$ intersecting $\cup_{p=1}^{k-1}C_p$, a contradiction.

Hence, all conditions of Definition~\ref{def:faugramsey} are now fulfilled.
It now remains to find an independent set of size $k$ disjoint from the sets $\C$, and having a non-empty intersection with $X_i$, for every $i \in \{1, \dots, k\}$. We thus run an algorithm solving \faugramsey{$f$} on this input, which concludes the algorithm.
\end{proof}

The proof of the following result is immediate, by using successively Lemmas~\ref{lem:mis-to-itexp} and \ref{lem:itexp-to-ramsey}.
\begin{theorem}\label{thm:iterative-ramsey}
Let $\G$ be a hereditary graph class. If \faugramsey{$f$} is $FPT$ in $\G$ for some computable function $f$, then \shortindepset is $FPT$ in $\G$.
\end{theorem}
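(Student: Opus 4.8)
The statement to prove is Theorem~\ref{thm:iterative-ramsey}: if \faugramsey{$f$} is $FPT$ in a hereditary class $\G$ for some computable $f$, then \shortindepset is $FPT$ in $\G$. This is explicitly described as an immediate consequence of chaining Lemmas~\ref{lem:mis-to-itexp} and~\ref{lem:itexp-to-ramsey}, so the proof will be a short glue argument rather than anything substantive.

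Here is the plan. Suppose \faugramsey{$f$} is $FPT$ in $\G$ for some computable $f$. By Lemma~\ref{lem:itexp-to-ramsey}, this implies that \fitexpmis{$g$} is $FPT$ in $\G$, where $g$ is the explicitly defined (and clearly computable, being built from $f$ and Ramsey numbers) function $g(x) = \Ramsey_{\ell_x}(f(x)2^{x(x-1)})$ with $\ell_x = 2^{(x-1)^2}$. Now $g$ is a computable function for which \fitexpmis{$g$} is $FPT$ in $\G$, so the hypothesis of the (backward direction of the) equivalence in Lemma~\ref{lem:mis-to-itexp} is satisfied. Since $\G$ is hereditary, Lemma~\ref{lem:mis-to-itexp} then yields that \shortindepset is $FPT$ in $\G$, which is exactly what we want.

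The only points worth stating explicitly are that (a) $g$ is computable because $f$ is computable and the Ramsey numbers $\Ramsey_\ell(\cdot)$ and the quantities $2^{x(x-1)}$, $2^{(x-1)^2}$ are computable, so that it is legitimate to invoke Lemma~\ref{lem:mis-to-itexp} with the particular function $g$; and (b) heredity of $\G$ is inherited through both lemmas (both lemmas assume $\G$ hereditary, so no extra work is needed). There is essentially no obstacle here: the content lives entirely in the two lemmas, and this theorem is just their composition. If I wanted to be slightly more careful I would also note that the running-time bounds compose to an $FPT$ bound — the blow-up in Lemma~\ref{lem:mis-to-itexp} is by a factor $O^*(g(k)^k (k-1)^k)$ on top of the \fitexpmis{$g$} running time, and the blow-up in Lemma~\ref{lem:itexp-to-ramsey} is by a function of $k$ only (branchings over index subsets, types, colour classes, and an $e^{k}$ derandomization factor) times the \faugramsey{$f$} running time — so the final bound is of the form $h(k) \cdot n^{c}$ for a computable $h$ and constant $c$, i.e. genuinely $FPT$.

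Given the triviality, I would present the proof in two sentences: apply Lemma~\ref{lem:itexp-to-ramsey} to get \fitexpmis{$g$} $FPT$ with $g$ computable, then apply Lemma~\ref{lem:mis-to-itexp} to conclude \shortindepset is $FPT$. The ``main obstacle'' is really nonexistent — the work has already been done in the lemmas — but if forced to name the most delicate point, it is making sure the function $g$ produced by Lemma~\ref{lem:itexp-to-ramsey} is computable so that Lemma~\ref{lem:mis-to-itexp} applies verbatim.
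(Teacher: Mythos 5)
Your proposal is correct and matches the paper exactly: the paper's own proof is the one-line remark that the theorem follows immediately by successively applying Lemma~\ref{lem:mis-to-itexp} and Lemma~\ref{lem:itexp-to-ramsey}, which is precisely your composition of the two lemmas (with the minor bookkeeping that $g$ is computable and the running times compose to an $FPT$ bound). Nothing further is needed.
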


We now apply this framework to two families of graphs $H$.

\subsection{Clique minus a smaller clique}\label{sec:clique-minus-triangle}

\begin{theorem}
For any $r \ge 2$ and $s < r$, \shortindepset in $(K_r \setminus K_s)$-free graphs is $FPT$ if $s \le 3$, and $W[1]$-hard otherwise.
\end{theorem}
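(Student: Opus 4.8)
The plan is to split the statement into its hardness part and its tractability part, and treat them with completely different tools.

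\medskip
\noindent\textbf{Hardness when $s \ge 4$.} Here I would invoke the clique-decomposition machinery from Section~\ref{sec:harddecompositions}. The graph $K_r \setminus K_s$ is a clique $K_{r-s}$ joined to the complement of a clique, i.e.\ to an independent set on $s$ vertices; equivalently it is a single clique $C_1$ of size $r-s$ together with $s$ further vertices that are pairwise non-adjacent but each adjacent to all of $C_1$. I would check that the only way to write $K_r\setminus K_s$ as a clique decomposition on a path or on a subdivision of the claw forces, between two adjacent cliques of the decomposition, an interaction that contains an induced $C_4$ as soon as $s \ge 4$ (two of the $s$ mutually non-adjacent vertices, together with two vertices they both dominate, form a $C_4$). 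Hence $K_r \setminus K_s$ has no nearly strong clique decomposition on a path and no almost strong clique decomposition on a subdivision of the claw, and Theorem~\ref{thm:closerToDichotomy} applies directly to give $W[1]$-hardness. (If the bookkeeping of Theorem~\ref{thm:closerToDichotomy} turns out to be delicate for this specific $H$, the fallback is to re-derive hardness from the second construction of Section~\ref{sec:harddecompositions} by hand, observing that $K_r\setminus K_s$ cannot be embedded there once $s\ge 4$.)

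\medskip
\noindent\textbf{Tractability when $s \le 3$.} For $s \le 3$ I would use the iterative-expansion/Ramsey-extraction framework: by Theorem~\ref{thm:iterative-ramsey} it suffices to show that \faugramsey{$f$} is FPT in $(K_r\setminus K_s)$-free graphs for a suitable computable $f$. So assume we are given the structured instance of Definition~\ref{def:faugramsey}: cliques $C_1,\dots,C_{k-1}$ of size $f(k)$ forming Ramsey-extracted cliques, candidate sets $X_1,\dots,X_k$ each of which is a module with respect to every $C_p$, with the bipartite adjacency graph $\mathcal B$ connected, and we must find an independent transversal hitting every $X_i$ or certify none exists. The key point is that $f(k)$ can be taken large (polynomial in $r$ and $k$, via Ramsey), so each $C_p$ is a large clique. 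Now I would exploit $(K_r\setminus K_s)$-freeness: pick one vertex $x_i \in X_i$ for each $i$ that we hope belongs to the solution; if for some $p$ the set $X_i$ is adjacent to $C_p$, then $x_i$ together with a large sub-clique of $C_p$ forms a big clique, and $(K_r\setminus K_s)$-freeness with $s\le 3$ severely limits how the remaining candidate vertices can attach to it — in particular it forces most cross-adjacencies to be "generic", so that after a bounded amount of guessing the remaining problem becomes one of finding an independent transversal in an instance with very restricted structure (e.g.\ where the only obstruction is a bounded local pattern), which can be solved greedily or by a bounded-depth branching.

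\medskip
\noindent\textbf{Main obstacle.} I expect the genuinely hard step to be the positive side for $s=3$, i.e.\ $K_r\setminus K_3$-free graphs (clique minus a triangle): unlike $s\le 2$ — where $K_r\setminus K_2$ is a clique minus an edge and the Dabrowski \etal Ramsey-type argument essentially suffices — for $s=3$ the complement triangle gives three mutually non-adjacent "port" vertices, and one must carefully control how the candidate sets $X_i$ interact with the extracted cliques simultaneously across all $p$ and how the $X_i$ interact with each other. Making the connectivity of $\mathcal B$ actually pay off (to propagate a consistent choice across all the $X_i$), while keeping the branching bounded by a function of $k$ alone, is where the real work lies; I would organize it as: (1) guess, for each $i$, the $\{0,C_p\}$-pattern of $x_i^*$ toward the $C_p$'s (already bounded by the framework); (2) use $(K_r\setminus K_3)$-freeness to deduce that within each $X_i$ only boundedly many "types" of vertices are relevant and that cross-edges between $X_i$ and $X_j$ come in a bounded number of patterns; (3) reduce to a constraint problem on $k$ variables with bounded-size domains and solve it by brute force. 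The cases $s=2$ and $s=1$ ($K_r$ itself, handled by Ramsey) should then follow as easier specializations or be cited from earlier work.
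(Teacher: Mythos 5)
Your hardness argument fails precisely at $s=4$, which is the boundary case the statement needs. First, the parenthetical $C_4$ claim is wrong: $K_r\setminus K_s$ is a split graph (a clique completely joined to an independent set), hence chordal and $C_4$-free, so no induced $C_4$ arises from ``two non-adjacent vertices plus two common neighbors'' (that is a diamond, not an induced $C_4$). More importantly, $K_r\setminus K_4$ \emph{does} admit an almost strong clique decomposition on the claw: take the parts $C\cup\{v_1\},\{v_2\},\{v_3\},\{v_4\}$, where $C$ is the clique side and $v_1,\dots,v_4$ the pairwise non-adjacent vertices; each interaction $H[(C\cup\{v_1\})\cup\{v_j\}]$ is a clique of size at least $3$ minus one edge, exactly what Definition~\ref{def:AlmostStrongCliqueDec} permits, and the leaf parts are pairwise non-adjacent. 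So the hypothesis of Theorem~\ref{thm:closerToDichotomy} is not satisfied and it gives nothing for $s=4$ (your route would work for $s\ge 5$, where any clique decomposition forces a vertex of degree at least $4$ or two branching vertices, so even Proposition~\ref{prop:cliqueDecWHard} applies). Your fallback also fails: $K_r\setminus K_4$ \emph{does} embed in the second construction — take a vertex $d$ of a branching clique, its three anti-matching partners $a_1,a_2,a_3$ in the three (pairwise non-adjacent) cliques attached to it, and $r-4$ further vertices of the branching clique; since the only non-neighbor of each $a_i$ in the branching clique is $d$ itself, this induces $K_r\setminus K_4$. The intended argument is much simpler: for $s\ge 4$ the graph $K_r\setminus K_s$ contains an induced $K_{1,4}$ (one clique vertex plus four of the non-adjacent vertices), and the first construction of Theorem~\ref{thm:Whard} is $K_{1,4}$-free, hence $(K_r\setminus K_s)$-free; i.e., Corollary~\ref{cor:hardCases1} applies directly.

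On the FPT side you choose the right framework (reduce via Theorem~\ref{thm:iterative-ramsey} to the Ramsey-extracted expansion problem, with constant-size cliques), but your steps (2)--(3) do not hold. $(K_{r+3}\setminus K_3)$-freeness does \emph{not} force the adjacencies between candidate sets into a bounded number of patterns. What it actually yields (with $f\equiv r$ and each $X_i$ of size at least $\Ramsey(r,k)$, as in the paper) is that each $X_i$ and each $C_p$ has degree at most $2$ in $\B$, so $\B$ is a \emph{path}; but the bipartite graph between two consecutive sets $X_i,X_{i+1}$ remains essentially arbitrary, so there is no bounded-size type structure, no CSP on $k$ variables with bounded domains, and no greedy or bounded-depth branching to brute-force. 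The missing ideas are exactly the ones the paper supplies: a Ramsey argument showing each vertex of $X_2\cup\dots\cup X_{k-1}$ is incident to at most $(k-2)(\Ramsey(r,3)-1)$ ``long'' edges (edges between non-consecutive sets), then guessing the solution vertices in $X_1$ and $X_k$, deleting the remaining long edges by random separation, and finishing with a dynamic program along the path of consecutive interactions. Without an argument of this kind, your plan stalls at the very point you flag as the main obstacle, so the $s=3$ case is not established.
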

\begin{proof}
The case $s=2$ was already known~\cite{Dabrowski12}. The result for $s \ge 4$ comes from Theorem~\ref{thm:Whard}. We now deal with the case $s=3$. We solve the problem in $(K_{r+3} \setminus K_3)$-free graphs, for every $r \ge 2$ (the problem is polynomial for $r=1$, since it it corresponds exactly to the case of claw-free graphs). 
Let $G, k$ be an input of the problem.
We present an $FPT$ algorithm for \faugramsey{$f$} with $f(x)=r$ for every $x \in \N$. The result for \shortindepset can then be obtained using Theorem~\ref{thm:iterative-ramsey}. 

We thus assume that $V(G) = X_1 \cup \dots  \cup X_k \cup C_1 \cup \dots \cup C_{k-1}$ where all cliques $C_p$ have size $r$.
Consider the bipartite graph $\B$ representing the adjacencies between $\{X_1, \dots, X_k\}$ and $\{C_1, \dots, C_{k-1}\}$, as in Definition~\ref{def:faugramsey} (for the sake of readability, we will make no distinction between the vertices of $\B$ and the sets $\{X_1, \dots, X_k\}$ and $\{C_1, \dots, C_{k-1}\}$). 
We may first assume that $|X_i| \ge \Ramsey(r, k)$ for every $i \in \{1, \dots, k\}$, since otherwise we can branch on every vertex $v$ of $X_i$ and make a recursive call with input $G \setminus N[v]$, $k-1$.
Suppose that $G$ contains an independent set $S^*=\{x_1^*, \dots, x_k^*\}$, with $x_i \in X_i$ for all $i \in \{1, \dots, k\}$.
The first step is to consider the structure of $\B$, using the fact that $G$ is $(K_r \setminus K_3)$-free. We have the following:
\begin{claim}
$\B$ is a path.
\end{claim}
\begin{claimproof}
We first prove that for every $i \in \{1, \dots, k\}$, the degree of $X_i$ in $\B$ is at most $2$. Indeed, assume by contradiction that it is adjacent to $C_a$, $C_b$ and $C_c$. Since $|X_i| \ge \Ramsey(r, k)$, by Ramsey's theorem, it either contains an independent set of size $k$, in which case we are done, or a clique $K$ of size $r$. However, observe in this case that $K$ together with $s_1^a$, $s_1^b$ and $s_1^c$ (which are pairwise non-adjacent) induces a graph isomorphic to $K_{r+3} \setminus K_3$.

Then, we show that for every $i \in \{1, \dots, k-1\}$, the degree of $C_i$ in $\B$ is at most $2$. Assume by contradiction that $C_i$ is adjacent to $X_a$, $X_b$ and $X_c$. If the instance is positive, then there must be an independent set of size three with non-empty intersection with each of $X_a$, $X_b$ and $X_c$. If such an independent set does not exist (which can be checked in cubic time), we can immediately answer NO. Now observe that $C_i$ (which is of size $r$) together with this independent set induces a graph isomorphic to $K_{r+3} \setminus K_3$.

To summarize, $\B$ is a connected bipartite graph of maximum degree $2$ with $k$ vertices in one part, $k-1$ vertices in the other part. It must be a path.
\end{claimproof}

W.l.o.g., we may assume that for every $i \in \{2, \dots, k-1\}$, $X_i$ is adjacent to $C_{i-1}$ and $C_i$, and that $X_1$ (resp. $X_k$) is adjacent to $C_1$ (resp. $C_{k-1}$). We now concentrate on the adjacencies between sets $X_i$'s.
We say that an edge $xy \in E(G)$ is a \emph{long edge} if $x \in X_i$, $y \in X_j$ with $|j-i| \ge 2$ and $2 \le i,j \le k-1$, $i \neq j$. 

\begin{claim}
$\forall x \in X_2 \cup \dots \cup X_{k-1}$, $x$ is incident to at most $(k-2)(\Ramsey(r, 3)-1)$ long edges.
\end{claim}
\begin{claimproof}
To do so, for $i, j \in \{2, \dots, k-1\}$ such that $|j-i| \ge 2$, $i \neq j$, we prove that $\forall x \in X_i$, $|N(x) \cap X_j| \le \Ramsey(r, 3)-1$. Assume by contradiction that $x \in X_i$ has at least $\Ramsey(r, 3)$ neighbors $Y \subseteq X_j$. By Ramsey's theorem, either $Y$ contains an independent set of size $3$ or a clique of size $r$. In the first case, $C_j$ together with these three vertices induces a graph isomorphic to $K_{r+3} \setminus K_3$. Hence we may assume that $Y$ contains a clique $Y'$ of size $r$. But in this case, $Y'$ together with $x$, $s_1^{j-1}$, $s_1^j$ induce a graph isomorphic to $K_{r+3} \setminus K_3$ as well.
\end{claimproof}

Recall that the objective is to find an independent set of size $k$ with non-empty intersection with $X_i$, for every $i \in \{1, \dots, k\}$. 
We assume $k \ge 5$, otherwise the problem is polynomial. The algorithm starts by branching on every pair of non-adjacent vertices $(x_1, x_k) \in X_1 \times X_k$, and removing the union of their neihborhoods in $X_2 \cup \dots \cup X_{k-1}$. For the sake of readability, we still denote by $X_2$, $\dots$, $X_{k-1}$ these reduced sets. If such a pair does not exist or the removal of their neighborhood empties some $X_i$, then we immediately answer NO (for this branch). Informally speaking, we just guessed the solution within $X_1$ and $X_k$ (the reason for this is that we cannot bound the number of long edges incident to vertices of these sets).
We now concentrate on the graph $G'$, which is the graph induced by $X_2 \cup \dots \cup X_{k-1}$. Clearly, it remains to decide whether $G'$ admits an independent set of size $k-2$ with non-empty intersection with $X_i$, for every $i \in \{2, \dots, k-1\}$.

The previous claim showed that the structure of $G'$ is quite particular: roughly speaking, the adjacencies between consecutive $X_i$'s is arbitrary, but the number of long edges is bounded for every vertex.
The key observation is that if there were no long edge at all, then a simple dynamic programming algorithm would allow us to conclude.
Nevertheless, using the previous claim, we can actually upper bound the number of long edges incident to a vertex of the solution by a function of $k$ only (recall that $r$ is a constant). We can then get rid of these problematic long edges using the so-called technique of \emph{random separation}~\cite{CaChCh06}.
Let $S= \{x_2, \dots, x_{k-1}\}$ be a solution of our problem (with $x_i \in X_i$ for every $i \in \{2, \dots, k-1\}$).
Let us define $D = \{y: xy$ is a long edge and $x \in S\}$. By the previous claim, we have $|D| \le (\Ramsey(r, 3)-1)(k-2)^2$. The idea of random separation is to delete each vertex of the graph with probability $\frac{1}{2}$. At the end, we say that a removal is \emph{successful} if both of the two following conditions hold: (i) no vertex of $S$ has been removed, and (ii) all vertices of $D$ have been removed (other vertices but $S$ may have also been removed). Observe that the probability that a removal is successful is at least $2^{-k^2\Ramsey(r, 3)}$. In such a case, we can remove all remaining long edges: indeed, for a remaining long edge $xy$, we know that there exists a solution avoiding both $x$ and $y$, hence we can safely delete $x$ and $y$. As previously, we still denote by $X_2$, $\dots$, $X_{k-1}$ the reduced sets, for the sake of readability.
We thus end up with a graph composed of sets $X_2$, $\dots$, $X_{k-1}$, with edges between $X_i$ and $X_j$ only if $[j-i| = 1$. In that case, observe that there is a solution if and only if the following dynamic programming returns $true$ on input $P(3, x_2)$ for some $x_2 \in X_2$:

$$
P(i,x_{i-1}) = \left\{
    \begin{array}{ll}
    	true & \mbox{ if } i=k \\
        false & \mbox{ if } X_i \subseteq   N(x_{i-1}) \\
        \bigvee_{x_i \in X_i \setminus N(x_{i-1})} P(i+1, x_i) & \mbox{ otherwise.}
    \end{array}
\right.
$$

Clearly this dynamic programming runs in $O(mnk)$ time, where $m$ and $n$ are the number of edges and vertices of the remaining graph, respectively. Moreover, it can easily be turned into an algorithm returning a solution of size $k-2$ if it exists.

Finally, similarly to classical random separation algorithms, it is sufficient to repeat this process $O(2^{k^2\Ramsey(r)})$ times in order to obtain an $FPT$ one-sided error Monte Carlo algorithm with constant success probability. Moreover, such an algorithm can be derandomized up to an additional $2^kk^{O(\log k)}$ factor in the running time \cite{CyFoKoLoMaPiPiSa15}.

\end{proof}

\subsection{Clique minus a complete bipartite graph}\label{sec:clique-minus-bipartite}

For every three positive integers $r$, $s_1$, $s_2$ with $s_1 + s_2 < r$, we consider the graph $K_r \setminus K_{s_1, s_2}$.
Another way to see $K_r \setminus K_{s_1, s_2}$ is as a \emph{$P_3$ of cliques} of size $s_1$, $r-s_1-s_2$, and $s_2$.
More formally, every graph $K_r \setminus K_{s_1, s_2}$ can be obtained from a $P_3$ by adding $s_1-1$ false twins of the first vertex, $r-s_1-s_2-1$, for the second, and $s_2-1$, for the third.
 
 \begin{theorem}\label{thm:clique-minus-bipartite}
 For any $r \ge 2$ and $s_1 \le s_2$ with $s_1+s_2 < r$, \shortindepset in $K_r \setminus K_{s_1, s_2}$-free graphs is FPT. 
 \end{theorem}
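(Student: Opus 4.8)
By Theorem~\ref{thm:iterative-ramsey}, it suffices to design an $FPT$ algorithm for \faugramsey{$f$} in $K_r \setminus K_{s_1,s_2}$-free graphs for a suitable computable $f$; I will take $f(x)$ to be a large constant depending only on $r$ (something of the order of $\Ramsey(r,x)$ will be needed inside the argument, so in fact $f$ will depend on $x$ too, but only polynomially — let me set $f(x) = \Ramsey(r, x)$ to be safe). So assume we are given $G$ with $V(G) = X_1 \cup \dots \cup X_k \cup C_1 \cup \dots \cup C_{k-1}$, each $C_p$ a Ramsey-extracted clique of size $f(k)$, each $X_i$ a "module-like" candidate set, and the bipartite adjacency graph $\B$ connected. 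As in the proof of Theorem for $K_r \setminus K_3$, I would first reduce to the case $|X_i| \ge \Ramsey(r,k)$ for all $i$ (otherwise branch on a vertex of the small $X_i$ and recurse with parameter $k-1$), and fix a hypothetical solution $S^* = \{x_1^*, \dots, x_k^*\}$ with $x_i^* \in X_i$.

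**Step 1: the shape of $\B$.** The crucial structural fact is that $\B$ must be a path. Indeed, $K_r \setminus K_{s_1,s_2}$ is a $P_3$ of cliques, so it contains an induced "claw of cliques": a clique $K_{r - s_1 - s_2}$ (the middle) joined completely to three pairwise-nonadjacent sets — in particular it contains, as an induced subgraph, a clique of size $r - s_1 - s_2 \ge 1$ together with three independent vertices all complete to it and pairwise non-adjacent only if we are careful; more to the point, $K_r \setminus K_{s_1,s_2}$ with $s_1+s_2<r$ contains an induced $K_{1,3}$-of-cliques pattern. If some $X_i$ has degree $\ge 3$ in $\B$, say adjacent to $C_a, C_b, C_c$, then since $|X_i| \ge \Ramsey(r,k)$, Ramsey's theorem yields inside $X_i$ either an independent set of size $k$ (done) or a clique $K$ of size $r$; then $K$ together with one vertex from each of $C_a, C_b, C_c$ — which are pairwise non-adjacent, being in distinct Ramsey-extracted cliques with "empty" relation would be needed here, so one actually picks the vertices cleverly, or rather uses that a vertex of $C_a$ is non-adjacent to a vertex of $C_b$ because the $x_i^*$-neighborhood constraint forces... — this is exactly the place where one verifies that $K \cup \{\text{3 pairwise non-adjacent vertices, each complete to } K\}$ is an induced $K_r \setminus K_{s_1,s_2}$ or a supergraph containing it as induced subgraph, a contradiction. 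The same argument with roles swapped bounds the degree of each $C_p$ by $2$. Connectivity then forces $\B$ to be a path, and up to relabelling $X_i \sim C_{i-1}, C_i$.

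**Step 2: bounding long edges and random separation.** Call an edge $xy$ with $x \in X_i$, $y \in X_j$, $|i-j| \ge 2$ a \emph{long edge}. Exactly as in the $s=3$ proof, for fixed $i,j$ non-consecutive and $x \in X_i$, if $x$ had $\ge \Ramsey(r, s_2+1)$ (say) neighbors in $X_j$, Ramsey gives inside that neighborhood either a large independent set (combine with $C_j$ to make the forbidden graph) or a clique of size $r$ (combine with $x$ and a vertex of $C_{j-1}$ and of $C_j$ to make the forbidden graph); either way a contradiction, so each solution vertex is incident to only $O_k(1)$ long edges (with the constant depending on $r$). Then branch/guess $x_1^*$ and $x_k^*$ explicitly (deleting their neighborhoods, since their long-edge degree is not controlled), restrict to $G' = G[X_2 \cup \dots \cup X_{k-1}]$, apply random separation to kill, with probability $2^{-O_k(1)}$, all long edges incident to the remaining solution while keeping the solution intact; after deleting those edges the graph only has edges between consecutive $X_i$'s, and a linear-time dynamic program over $i = 2, \dots, k-1$ (the same recurrence $P(i, x_{i-1})$) finds the transversal independent set. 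Derandomize via the standard splitters.

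**Main obstacle.** The delicate point, compared with the $K_r \setminus K_3$ case, is Step~1: one must check that the "claw of cliques" pattern really is contained in (an induced subgraph of) $K_r \setminus K_{s_1, s_2}$ for \emph{all} admissible $s_1 \le s_2$ with $s_1 + s_2 < r$ — i.e., that a clique $K_r$ of size $r$ plus three pairwise non-adjacent vertices each complete to it contains $K_r \setminus K_{s_1,s_2}$ as an induced subgraph. Since $K_r \setminus K_{s_1,s_2}$ has a strong (well, here a genuine false-twin) clique decomposition on $P_3$ with part sizes $s_1$, $r - s_1 - s_2$, $s_2$, one needs $r - s_1 - s_2 \le r$ (trivial) and the three independent vertices to realize the two "outer" cliques plus the extra independent direction — this is where the hypothesis $s_1 + s_2 < r$ (so the middle part is nonempty) is used, and one has to be slightly careful about whether three or only two independent vertices suffice; with two independent vertices one gets only the degree-$\le 2$ bound for the $C_p$'s, and the degree-$\le 2$ bound for the $X_i$'s genuinely needs the third independent vertex together with a clique of size $r$ inside $X_i$, which is why the $|X_i| \ge \Ramsey(r,k)$ reduction is run first. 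Everything downstream (Step~2, the DP, the derandomization) is a routine adaptation of the $s=3$ argument and of the random-separation toolkit of~\cite{CaChCh06,CyFoKoLoMaPiPiSa15}.
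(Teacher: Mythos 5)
Your Step~1 contains the gap, and it is fatal rather than a matter of ``being slightly careful''. The contradiction witness you build when some $X_i$ has degree $3$ in $\B$ is a clique $K$ of size $r$ inside $X_i$ together with three pairwise non-adjacent vertices (one per $C_a, C_b, C_c$, e.g.\ taken from a common independent set $S_j$ of the Ramsey extraction), each complete to $K$. But this host graph contains $K_r \setminus K_{s_1,s_2}$ as an induced subgraph only when $s_1,s_2 \le 1$: in $K_r\setminus K_{s_1,s_2}$ the two outer parts are \emph{cliques} of sizes $s_1$ and $s_2$ that are anticomplete to each other, whereas in your host the only non-edges lie among the three added vertices, so both outer parts would have to sit inside an independent set of size $3$, forcing them to be singletons. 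Hence for $s_2 \ge 2$ no contradiction is reached, the degree bound on the $X_i$'s (and, by the same token, on the $C_p$'s and your long-edge bound in Step~2) does not follow, and the whole path-plus-random-separation pipeline collapses. Worse, the conclusion you aim for is false: the paper proves that in these instances $\B$ is forced to be a \emph{complete bipartite} graph, the opposite extreme of a path, so no amount of repair will make ``$\B$ is a path'' come out.

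The paper's actual argument is structurally different and shows what replaces your Step~2. It works (w.l.o.g.) with $K_{3r}\setminus K_{r,r}$-free graphs and constant $f(x)=3r$, splits each Ramsey-extracted clique $C_p$ into three blocks of size $r$, and uses the empty/full/semi-full trichotomy between cliques: if the auxiliary graph $Y$ on $C_1,\dots,C_{k-1}$ (edges $=$ non-empty relations) had an induced $P_3$, one of four block combinations would induce $K_{3r}\setminus K_{r,r}$; so $Y$ is a disjoint union of cliques, and connectivity of $\B$ (plus an $r$-clique inside some $X_h$, guaranteed after the $\Ramsey(r,k)$ preprocessing) forces $Y$ to be a single clique and then $\B$ to be complete bipartite. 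The payoff is a domination argument: every vertex of $C_1$ dominates all of $\X$, hence $G[\X]$ is $(K_r \uplus K_r)$-free, and one finishes by running the FPT algorithm of Theorem~\ref{thm:disjoint-union-cliques} — no path structure, no long-edge cleaning, no dynamic programming. If you want to salvage your observation about picking anticomplete cliques across two $C$'s, note that this is possible exactly when their relation is empty or semi-full, which is the germ of the paper's analysis of $Y$; but it leads to the complete-bipartite/domination conclusion, not to a path.
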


 \begin{proof}
   It is more convenient to prove the result for $K_{3r} \setminus K_{r, r}$-free graphs, for any positive integer $r$.
   It implies the theorem by choosing this new $r$ to be larger than $s_1$, $s_2$, and $r-s_1-s_2$.
   We will show that for $f(x) := 3r$ for every $x \in \N$, \faugramsey{$f$} in $K_{3r} \setminus K_{r, r}$-free graphs is FPT.
   By Theorem~\ref{thm:iterative-ramsey}, this implies that \shortindepset is FPT in this class.
   Let $C_1,\ldots, C_{k-1}$ (whose union is denoted by $\C$) be the Ramsey-extracted cliques of size $3r$, which can be partitionned, as in Definition~\ref{def:faugramsey}, into $3r$ independent sets
   $S_1, \ldots, S_{3r}$, each of size $k-1$. Let $\X=\bigcup_{i=1}^k X_i$ be the set in which we are looking for an independent set of size $k$.
   We recall that between any $X_i$ and any $C_j$ there are either all the edges or none.
   Hence, the whole interaction between $\X$ and $\C$ can be described by the bipartite graph $\B$ described in Definition~\ref{def:faugramsey}.
   Firstly, we can assume that each $X_i$ is of size at least $\Ramsey(r, k)$, otherwise we can branch on $\Ramsey(r, k)$ choices to find one vertex in an optimum solution.
 By Ramsey's theorem, we can assume that each $X_i$ contains a clique of size $r$ (if it contains an independent set of size $k$, we are done).
   Our general strategy is to leverage the fact that the input graph is $(K_{3r} \setminus K_{r, r})$-free to describe the structure of $\X$.
   Hopefully, this structure will be sufficient to solve our problem in FPT time.

   We define an auxiliary graph $Y$ with $k-1$ vertices.
   The vertices $y_1, \ldots, y_{k-1}$ of $Y$ represent the Ramsey-extracted cliques of $\C$ and two vertices $y_i$ and $y_j$ are adjacent iff the relation between $C_i$ and $C_j$ is not empty (equivalently the relation is full or semi-full).
   It might seem peculiar that we concentrate the structure of $\C$, when we will eventually discard it from the graph.
   It is an indirect move: the simple structure of $\C$ will imply that the interaction between $\X$ and $\C$ is simple, which in turn, will severely restrict the subgraph induced by $\X$.
   More concretely, in the rest of the proof, we will (1) show that $Y$ is a clique, (2) deduce that $\B$ is a complete bipartite graph, (3) conclude that $\X$ cannot contain an induced $K^2_r = K_r \uplus K_r$ and run the algorithm of Theorem~\ref{thm:disjoint-union-cliques}.

   Suppose that there is $y_{i_1}y_{i_2}y_{i_3}$ an induced $P_3$ in $Y$, and consider $C_{i_1}$, $C_{i_2}$, $C_{i_3}$ the corresponding Ramsey-extracted cliques.
   For $s<t \in [3r]$, let $C_i^{s \rightarrow t} := C_i \cap \bigcup_{s \leqslant j \leqslant t} S_j$. In other words, $C_i^{s \rightarrow t}$ contains the elements of $C_i$ having indices between $s$ and $t$. Since $|C_i|=3r$, each $C_i$ can be partitionned into three sets, of $r$ elements each: $C_{i}^{1 \rightarrow r}$, $C_{i}^{r+1 \rightarrow 2r}$ and $C_{i}^{2r+1 \rightarrow 3r}$.
   Recall that the relation between $C_{i_1}$ and $C_{i_2}$ (resp. $C_{i_2}$ and $C_{i_3}$) is either full or semi-full, while the relation between $C_{i_1}$ and $C_{i_3}$ is empty.
   This implies that at least one of the four following sets induces a graph isomorphic to $K_{3r} \setminus K_{r, r}$:
   \begin{itemize} 
   		\item $C_{i_1}^{1 \rightarrow r} \cup C_{i_2}^{r+1 \rightarrow 2r} \cup C_{i_3}^{1 \rightarrow r}$
   		\item $C_{i_1}^{1 \rightarrow r} \cup C_{i_2}^{r+1 \rightarrow 2r} \cup C_{i_3}^{2r+1 \rightarrow 3r}$
   		\item $C_{i_1}^{2r+1 \rightarrow 3r} \cup C_{i_2}^{r+1 \rightarrow 2r} \cup C_{i_3}^{1 \rightarrow r}$
   		\item $C_{i_1}^{2r+1 \rightarrow 3r} \cup C_{i_2}^{r+1 \rightarrow 2r} \cup C_{i_3}^{2r+1 \rightarrow 3r}$
   \end{itemize}
   Hence, $Y$ is a disjoint union of cliques.
   Let us assume that $Y$ is the union of at least two (maximal) cliques.
   
   Recall that the bipartite graph $\B$ is connected.
   Thus there is $b^1_h \in B_1$ (corresponding to $X_h$) adjacent to $b^2_i \in B_2$ and $b^2_j \in B_2$ (corresponding to $C_i$ and $C_j$, respectively), such that $y_i$ and $y_j$ lie in two different connected components of $Y$ (in particular, the relation between $C_i$ and $C_j$ is empty).
   Recall that $X_h$ contains a clique of size at least $r$.
   This clique induces, together with any $r$ vertices in $C_i$ and any $r$ vertices in $C_j$, a graph isomorphic to $K_{3r} \setminus K_{r, r}$; a contradiction.
   Hence, $Y$ is a clique.

   Now, we can show that $\B$ is a complete bipartite graph.
   Each $X_h$ has to be adjacent to at least one $C_i$ (otherwise this trivially contradicts the connectedness of $\B$).
   If $X_h$ is not linked to $C_j$ for some $j \in \{1, \dots, k-1\}$, then a clique of size $r$ in $X_h$ (which always exists) induces, together with $C_i^{1 \rightarrow r} \cup C_j^{2r+1 \rightarrow 3r}$ or with $C_i^{2r+1 \rightarrow 3r} \cup C_j^{1 \rightarrow r}$, a graph isomorphic to $K_{3r} \setminus K_{r, r}$.

   Since $\B$ is a complete bipartite graph, every vertex of $C_1$ dominates all vertices of $\X$
   In particular, $\X$ is in the intersection of the neighborhood of the vertices of some clique of size $r$.
   This implies that the subgraph induced by $\X$ is $(K_r \uplus K_r)$-free.
   Hence, we can run the FPT algorithm of Theorem~\ref{thm:disjoint-union-cliques} on this graph.
 \end{proof}

\subsection{The gem}
Let the \emph{gem} be the graph obtained by adding a universal vertex to a path on four vertices (see Figure \ref{fig:gem}). Using our framework once again, we are able to obtain the following result:

\begin{center}
\begin{figure}
\centering
 \begin{tikzpicture} [scale=1]
 \node[draw, circle, fill=black, inner sep=-0.05cm] at (0, 0.5) {} ;
 \node[draw, circle, fill=black, inner sep=-0.05cm] at (0.5, 1) {} ;
 \node[draw, circle, fill=black, inner sep=-0.05cm] at (1, 0) {} ;
 \node[draw, circle, fill=black, inner sep=-0.05cm] at (1.5, 1) {} ;
 \node[draw, circle, fill=black, inner sep=-0.05cm] at (2,0.5) {} ;
 \draw (1, 0) -- (0,0.5);
 \draw (1, 0) -- (0.5, 1);
 \draw (1, 0) -- (1.5, 1);
 \draw (1, 0) -- (2, 0.5);
 \draw (0, 0.5) -- (0.5, 1);
 \draw (0.5, 1) -- (1.5, 1);
 \draw (1.5, 1) -- (2, 0.5);
  
   \end{tikzpicture}
   \caption{The gem.}
   \label{fig:gem}
\end{figure}
\end{center}

\begin{theorem}
There is a randomized $FPT$ algorithm for \shortindepset in $gem$-free graphs.
\end{theorem}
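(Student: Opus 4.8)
The plan is to route the problem through the iterative-expansion-with-Ramsey-extraction framework of Section~\ref{sec:framework}. By Theorem~\ref{thm:iterative-ramsey} it suffices to give a randomized $FPT$ algorithm for \faugramsey{$f$} on $gem$-free graphs, for some computable $f$ (a small constant will do). So I may assume the input is a $gem$-free graph $G$ with $V(G) = X_1 \cup \dots \cup X_k \cup C_1 \cup \dots \cup C_{k-1}$, where $\C = \{C_1,\dots,C_{k-1}\}$ is a family of Ramsey-extracted cliques of size $f(k) \ge 2$, every size-$k$ independent set of $G$ lies in $\X = X_1 \cup \dots \cup X_k$, each $X_i$ is complete or anticomplete to each $C_j$, and the bipartite incidence graph $\B$ between the $X_i$'s and the $C_j$'s is connected; we must find an independent set meeting every $X_i$. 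As usual I would first branch so as to assume $|X_i| \ge \Ramsey(R,k)$ for all $i$ (a suitable constant $R$), which by Ramsey's theorem gives, inside each $X_i$, either an independent set of size $k$ (done) or a clique $K_i$ of size $R$.

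The structural engine is the elementary equivalence: $G$ is $gem$-free iff $G[N(v)]$ is $P_4$-free for every vertex $v$ (since $gem = K_1 + P_4$). Fix $i$; as $\B$ is connected, $X_i$ is complete to some $C_a$ with $|C_a| \ge 2$, so picking distinct adjacent $c, c' \in C_a$ we get $\{c'\} \cup X_i \subseteq N(c)$ and $G[\{c'\} \cup X_i]$ is the join of $\{c'\}$ with $G[X_i]$; since $N(c)$ is $P_4$-free, $G[X_i]$ must be $P_4$-free, i.e. a \emph{cograph}. The same argument applied to a $C_a$ complete to both $X_i$ and $X_j$ shows that $G[X_i \cup X_j]$ is a cograph whenever $X_i, X_j$ share a neighbour in $\B$; thus, along a spanning tree of $\B$, the set $\X$ breaks into cographs whose ``consecutive'' unions are again cographs. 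A finer $gem$-forbidding argument should control the remaining (``crossing'') edges: for any $x$ and any neighbour $y$ of $x$ in some $X_j$, combining $x$, $y$, two non-adjacent vertices of $N(y) \cap X_j$, and a clique-mate of a vertex of the $C_b$ complete to $X_j$ produces a $gem$ unless $x$ is complete-or-anticomplete to each complement-component of $G[N(y)\cap X_j]$; iterating this, $N(x)\cap X_j$ is forced to be ``module-like'' with respect to the co-tree of $G[X_j]$, which severely restricts the possible crossing patterns.

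Granting this structure, I would finish in the spirit of the proof for $(K_r \setminus K_3)$-free graphs in Section~\ref{sec:clique-minus-triangle}: guess a bounded number of solution vertices to neutralize the crossings at the ``ends'' of $\B$ and delete their neighbourhoods, then (if needed) apply \emph{random separation} — delete each remaining vertex independently with probability $1/2$ — so that with probability $2^{-h(k)}$, for some computable $h$, one fixed solution survives intact while the endpoints of its problematic crossing edges are removed; what is left interacts only between $\B$-adjacent $X_i$'s, on which $G$ is a cograph, and a dynamic programming over the tree $\B$ that carries, for each $X_i$, enough of the modular (co-tree) decomposition of the relevant ``consecutive'' cographs then decides whether a transversal independent set exists. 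Repeating the randomized phase a function-of-$k$ many times yields a one-sided-error randomized $FPT$ algorithm; the randomness enters only here (and through the colour coding already present in Lemma~\ref{lem:itexp-to-ramsey}), which is why the statement is phrased for a randomized algorithm.

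The step I expect to be the real obstacle is exactly this structural analysis of the crossing edges: unlike the $(K_r \setminus K_3)$ case, $gem$-freeness does not Ramsey-bound in one stroke how a vertex may attach to a far set $X_j$, so one must squeeze out of the absence of a $gem$ a usable module/cograph description of $N(x) \cap X_j$ — a finite but fiddly case analysis over the ways five vertices drawn from $K_i$, from the cliques $C_j$, and from the cographs $G[X_j]$ can induce a $gem$. A secondary technicality is the dynamic programming itself: because the ``pairwise union is a cograph'' property need not propagate beyond $\B$-adjacent pairs, the DP has to combine co-tree information at each $X_i$ consistently with both of its neighbours along $\B$.
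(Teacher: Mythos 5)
Your entry point is the same as the paper's (reduce via Theorem~\ref{thm:iterative-ramsey} to \faugramsey{$f$}, then use a dominating $c_p$ to conclude that each $G[X_i]$ is $P_4$-free), but the heart of the argument --- controlling the edges between different $X_i$'s --- is exactly the part you leave as a hoped-for ``module-like'' structure, and the way you propose to finish would not go through. Random separation needs the set $D$ of ``bad'' far endpoints incident to one fixed solution to have size bounded by a function of $k$; in the $(K_r\setminus K_3)$ case this came from a Ramsey argument bounding $|N(x)\cap X_j|$ by $\Ramsey(r,3)-1$ using the cliques $C_j$, and no analogue is available for gem-free graphs: a vertex of $X_i$ may have arbitrarily many neighbours in a far $X_j$ (e.g.\ when $N(x)\cap X_j$ is a clique) without creating a gem, so deleting vertices with probability $1/2$ gives success probability $2^{-|D|}$ with $|D|$ unbounded in $k$. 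Likewise, your dynamic programming ``over the tree $\B$'' presupposes that $\B$ is a path/tree or that interactions are confined to $\B$-adjacent pairs; in the $(K_r\setminus K_3)$ proof the path structure of $\B$ was itself a consequence of $H$-freeness, whereas for the gem nothing bounds the degrees of $\B$, and cograph-ness of $X_i\cup X_j$ for $\B$-adjacent pairs does not propagate to a global decomposition you can sweep over.

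The paper's actual proof avoids all of this by a different mechanism. It takes $f\equiv 1$, uses perfection of cographs to cover each $X_i$ by at most $k-1$ cliques and branches so that every $X_i$ \emph{is} a clique. Then the only obstruction is captured by a ``balanced diamond'' between two cliques $X_i,X_j$: if one exists, $X_i$ and $X_j$ must be twins in $\B$ (else the diamond plus a $c_p$ is a gem); if none exists, the interaction between the two cliques is a disjoint union of joined blocks $X_i^r\cup X_j^r$, and a three-way randomized branching (keep $X_i^0$, keep $X_j^0$, or split the blocks by a random $T\subseteq[q]$) deletes the $X_iX_j$ interaction while preserving a rainbow solution with probability at least $1/2$. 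After at most $\binom{k}{2}$ applications every remaining component of $G[\X]$ is a module with respect to $\{c_1,\dots,c_{k-1}\}$, hence dominated by some $c_p$, hence a cograph, and the problem is solved componentwise in polynomial time, giving overall success probability $2^{-O(k^2)}$. So the randomness enters through this block-splitting branching, not through random separation. To repair your write-up you would need to either prove the missing structural lemma in a form strong enough to bound the crossing edges incident to a solution (which is what your own ``real obstacle'' paragraph concedes is open), or switch to the clique-cover-plus-balanced-diamond route above; as it stands the proposal has a genuine gap at its central step.
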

\begin{proof}
Let $f(x) := 1$ for every $x \in \mathbb{N}$. We prove that \faugramsey{$f$} admits a randomized $FPT$ in $gem$-free graphs. By the definition of $f$, we have $C_p = \{c_p\}$ for every $p \in \{1, \dots, k-1\}$. Recall that the objective is to find a rainbow independent set in $G$, or to decide that a 
$\alpha(G) < k$.
 Since the bipartite graph $\B$ representing the adjacencies between $\{X_1, \dots, X_k\}$ and $\{c_1, \dots, c_{k-1}\}$ is connected, it implies that for every $i \in \{1, \dots, k\}$, there exists $p \in \{1, \dots, k-1\}$ such that $c_p$ dominates all vertices of $X_i$. Since $G$ is $gem$-free, it implies that $G[X_i]$ is $P_4$-free for every $i \in \{1, \dots, k\}$. Since $P_4$-free graphs (\textit{a.k.a} cographs) are perfect, the size of a maximum independent set equals the size of a clique cover. 
If $G[X_i]$ contains an independent set of size $k$ (which can be tested in polynomial time), then we are done. Otherwise, we can, still in polynomial time, partition the vertices of $X_i$ into at most $k-1$ sets $X_i^1$, $\dots$, $X_i^{q_i}$, where $G[X_i^j]$ induces a clique for every $j \in \{1, \dots, q_i\}$.
We now perform a branching for every tuple $(j_1, \dots, j_k)$, where $j_i \in \{1, \dots, q_i\}$ for every $i \in \{1, \dots, k\}$, which, informally, allows us to guess the clique $X_i^{j_i}$ which contains the element of the rainbow independent set we are looking for. For the sake of readability, we allow ourselves this slight abuse of notation: we rename $X_i^{j_i}$ into simply $X_i$.
Thus, for every $i \in \{1, \dots, k\}$, $G[X_i]$ is a clique.

Now, let $i, j \in \{1, \dots, k\}$, $i \neq j$. Let us analyse the adjacencies between $X_i$ and $X_j$. We say that $\{a, b, c, d\} \subseteq X_i \cup X_j$ is a \emph{balanced diamond} if $a, b \in X_i$ ($a \neq b$), $c, d \in X_j$ ($c \neq d$) and all vertices $\{a, b, c, d\}$ are pairwise adjacent but $\{b, d\}$. We have the following claim:
\begin{claim}\label{claim:diamondtwins}
If the graph induced by $X_i \cup X_j$ has a balanced diamond, then $X_i$ and $X_j$ are twins in $\B$.
\end{claim}
\begin{claimproof}
Suppose they are not. \textit{W.l.o.g.} we assume that $X_i$ is adjacent to $\{c_p\}$ while $X_j$ is not, for some $p \in \{1, \dots, k-1\}$. Then the vertices of the balanced diamond together with $c_p$ induce a $gem$.
\end{claimproof}

The remainder of the proof consists of ``cleaning'' the adjacencies $(X_i, X_j)$ having no balanced diamond. In that case, observe that $X_i$ and $X_j$ can respectively be partitioned into $X_i^0$, $X_i^1$, $\dots$, $X_i^{q}$ and $X_j^0$, $X_j^1$, $\dots$, $X_j^{q}$ (where $X_i^0$ and $X_j^0$ are potentially empty) such that $X_i^r \cup X_j^r$ induces a clique for every $r \in \{1, \dots, q\}$, and there is no edge between $X_i^r$ and $X_j^{r'}$ whenever $r \neq r'$ or $r=0$ or $r'=0$ (see Figure~\ref{fig:gem-nodiamond}).
In each branch of the next branching rule, the sets $\{X_1, \dots, X_k\}$ will be modified into $\{X_1', \dots, X_k'\}$.\\

\begin{figure}
\begin{center}
\includegraphics[scale=0.7]{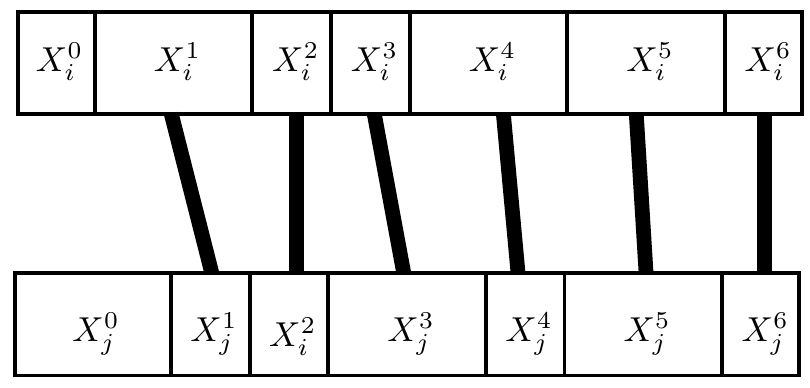}
\caption{Schema of the adjacencies between $X_i$ and $X_i$ when they do not contain a balanced diamond ($q=6$). An edge represent a complete relation between the corresponding subsets.}
\label{fig:gem-nodiamond}
\end{center}
\end{figure}

\textit{\textbf{Branching rule:} Let $i, j \in \{1, \dots, k\}$, $i \neq j$ such that $X_i \cup X_j$ has no balanced diamond. Then perform the following branching:
\begin{itemize}
	\item Branch 1: $X_i' = X_i^0$ and $X_z' = X_z$ for $z \in [k] \setminus \{i\}$
	\item Branch 2: $X_j' = X_j^0$ and $X_z' = X_z$ for $z \in [k] \setminus \{j\}$
	\item Branch 3: pick a set $T \subseteq \{1, \dots, q\}$ uniformly at random, then:
	\begin{itemize}
		\item $X_i' = \bigcup_{r \in T} X_i^r$
		\item $X_j' = \bigcup_{r \notin T} X_j^r$
		\item $X_z' = X_z$ for $z \in [k] \setminus \{i, j\}$
	\end{itemize}
\end{itemize}
}

Consider the graph $\mathcal{G}(X_1, \dots, X_k)$ having one vertex per set $X_i$, and an edge between $X_i$ and $X_j$ if these two sets are adjacent. We now prove the following:

\begin{claim}\label{claim:branchingdecrease}
The graph $\mathcal{G}(X_1', \dots, X_k')$ has one edge less than $\mathcal{G}(X_1, \dots, X_k)$
\end{claim}
\begin{claimproof}
In all three branches, observe that there is no edge between $X_i'$ and $X_j'$.
\end{claimproof}

\begin{claim}\label{claim:equivneg}
If $G$ has independent set of size $k$, then no graph obtained after the branching contains an independent set of size $k$.
\end{claim}
\begin{claimproof}
Observe that in all branches, $\bigcup_{z=1}^k X_z' \subseteq \bigcup_{z=1}^k X_z$, that is, each graph obtained in each branch is an induced subgraph of $G$.
\end{claimproof}

\begin{claim}\label{claim:equivproba}
If $G$ has a rainbow independent set, then with probability at least $\frac{1}{2}$, at least one branch leads to a graph having a rainbow independent set.
\end{claim}
\begin{claimproof}
Suppose that $G$ contains a rainbow independent set $S^*$. If $S^*$ intersects $X_i^0$, then $S^*$ also exists in the graph of the first branch. If $S^*$ intersects $X_j^0$, then $S^*$ also exists in the graph of the second branch.
The last case is where $S^*$ intersects $X_i^{r_1}$ and $X_j^{r_2}$, for some $r_1, r_2 \in \{1, \dots q\}$. In that case, there is a probability of $\frac{1}{2}$ that $r_1 \in T$ and $r_2 \notin T$, which concludes the proof of the claim
\end{claimproof}

We may now assume that the previous branching rule cannot apply. For the sake of readability, we keep the notation $X_1, \dots, X_k$ in order to denote our instance, even after an eventual application of the previous branching rule. For every $X_i$, $X_j$ with $i \neq j$, there is either (i) no edge between $X_i$ and $X_i$, or (ii) a balanced diamond induced by $X_i \cup X_j$.
Hence, Claim~\ref{claim:diamondtwins} implies that each connected component of the graph induced by $\bigcup_{i=1}^k X_i$ is a module with respect to the clique $\{c_1, \dots, c_{k-1}\}$. In particular, each connected component is dominated by some $c_p$, with $p \in \{1, \dots, k-1\}$, and is thus $P_4$-free (otherwise, a $P_4$ together with this vertex $c_p$ induce a gem), which means that we can decide in polynomial time whether $G$ contains an independent set of size $k$, by deciding the problem in every connected component separately.

By Claim~\ref{claim:branchingdecrease}, the previous branching rule can be applied at most ${k \choose 2}$ times. Hence, by Claim~\ref{claim:equivproba}, if $G$ contains a rainbow independent set, then our algorithm will find such a set with probability at least $\frac{1}{2^{k^2}}$. Finally, by Claim~\ref{claim:equivneg}, if $G$ does not contain any independent set of size $k$, then our algorithm will answer ``No''. This proves that our algorithm is a one-sided error Monte Carlo algorithm with success probability at least $\frac{1}{2^{k^2}}$, which can be turned into a randomized $FPT$ algorithm for \faugramsey{$f$}.
\end{proof}

\section{Polynomial (Turing) kernels}\label{sec:kernels}

In this section we investigate some special cases of Section~\ref{sec:clique-minus-bipartite}, in particular when $H$ is a clique of size $r$ minus a claw with $s$ branches, for $s < r$. Although Theorem~\ref{thm:clique-minus-bipartite} proves that \shortindepset is FPT for every possible values of $r$ and $s$, we show that when $s \ge r-2$, the problem admits a polynomial Turing kernel, while for $s \le 2$, it admits a polynomial kernel. Notice that the latter result is somehow tight, as Corollary~\ref{cor:nopolyker} shows that \shortindepset cannot admit a polynomial kernel in $(K_r \setminus K_{1, s})$-free graphs whenever $s \ge 3$.

\subsection{Positive results}\label{sec:kernels-positive}
The main ingredient of the two following results is a constructive version of the Erd\"os-Hajnal theorem for the concerned graph classes:

\begin{lemma}[Constructive Erd\"os-Hajnal for $K_r \setminus K_{1, s}$]\label{lem:erdos-hajnal}
For every $r \ge 2$ and $s < r$, there exists a polynomial-time algorithm which takes as input a connected $(K_r \setminus K_{1, s})$-free graph $G$, and construct either a clique or an independent set of size $n^{\frac{1}{r-1}}$, where $n$ is the number of vertices of $G$.
\end{lemma}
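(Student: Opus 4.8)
The plan is to argue by induction on $r$, repeatedly extracting a high-degree vertex and recursing inside its neighbourhood, and to use the exclusion of $K_r\setminus K_{1,s}$ precisely at the places where this naive recursion is too weak.

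\textbf{Two easy regimes first.} If $\omega(G)<r-1$ then $G$ is $K_{r-1}$-free, and the classical constructive greedy/Ramsey argument (if some vertex has degree at least $n^{(r-3)/(r-2)}$ recurse on its $K_{r-2}$-free neighbourhood, otherwise greedily delete closed neighbourhoods) already yields, in polynomial time, an independent set of size $\geq n^{1/(r-2)}\geq n^{1/(r-1)}$. So we may assume $\omega(G)\geq r-1$; the base case $r=2$ is immediate since a connected $(K_2\setminus K_{1,1})$-free graph is complete. Next set $d:=n^{(r-2)/(r-1)}$. If $\Delta(G)<d$, then greedily picking a vertex, discarding its closed neighbourhood and repeating produces an independent set of size at least $n/d=n^{1/(r-1)}$. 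Hence we may fix a vertex $v$ with $|N(v)|\geq d$.

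\textbf{Structural core.} Assume first $s\leq r-2$. Then $K_r\setminus K_{1,s}$ has a universal vertex (any of the $r-1-s$ vertices not incident to the deleted star), whose removal yields $K_{r-1}\setminus K_{1,s}$; hence $G[N(v)]$ is $(K_{r-1}\setminus K_{1,s})$-free, since an induced $K_{r-1}\setminus K_{1,s}$ in $N(v)$ together with $v$ would induce $K_r\setminus K_{1,s}$ in $G$. If $G[N(v)]$ is connected, the induction hypothesis produces a clique or an independent set of size $|N(v)|^{1/(r-2)}\geq d^{1/(r-2)}=n^{1/(r-1)}$, and we are done. The obstacle — and the only real use of the hypothesis — is the case where $G[N(v)]$ is disconnected into a few medium-sized components (equivalently, the regime where $G$ has large cliques): there, on a set of merely $d$ vertices the induction only affords the weaker "disconnected" exponent $1/(r-1)$ rather than $1/(r-2)$, which is not enough.

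\textbf{Handling large cliques.} Here I would use the characterisation that for any clique $Q$ with $|Q|\geq r-1$ and any $u\notin Q$, $u$ is either \emph{$Q$-heavy}, missing at most $s-1$ vertices of $Q$, or \emph{$Q$-light}, seeing at most $r-2-s$ vertices of $Q$ — otherwise $u$ with a suitable $(r-1)$-subset of $Q$ induces $K_r\setminus K_{1,s}$. Choosing a suitable clique $Q$ with $|Q|\geq r-1$ (and outputting $Q$ itself if $|Q|\geq n^{1/(r-1)}$), one shows: the heavy vertices split into polynomially many \emph{types}, indexed by the $\leq s-1$ vertices of $Q$ they miss, each type inducing a $K_s$-free graph (an $s$-clique inside a type would, with $Q$ minus the missed vertices, exceed $|Q|$), so the classical bound yields a large independent set inside some heavy type; and a large set of light vertices of a common type, together with $Q$, recreates $K_r\setminus K_{1,s}$ unless it is small. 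Balancing these quantities against $|Q|$ and against $d$, and pinning down "suitable $Q$" (maximal versus maximum), is the main technical work. Finally, the remaining case $s=r-1$, where $H=K_{r-1}\uplus K_1$ is disconnected and has no universal vertex, is treated directly: $(K_{r-1}\uplus K_1)$-freeness forces every vertex outside a maximal clique $Q$ with $|Q|\geq r-1$ to have a neighbour in $Q$ — equivalently to miss at most $r-2$ of its vertices — and the same heavy/light-type analysis, now with a single regime, goes through.
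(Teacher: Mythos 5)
Your opening regimes and the universal-vertex recursion into $G[N(v)]$ are exactly the paper's idea, but the proof has a genuine gap precisely where you place ``the main technical work''. Because you state the induction hypothesis only for connected graphs, you are forced to treat a disconnected $G[N(v)]$ separately, and the ``handling large cliques'' paragraph is a sketch, not a proof: the balancing of the parameters and the choice of $Q$ are explicitly left open, and the two claims it rests on are shaky as stated. The claim that each heavy type is $K_s$-free relies on an $s$-clique inside a type together with $Q$ minus the missed vertices ``exceeding $|Q|$'', which is only a contradiction if $Q$ is a \emph{maximum} clique -- something a polynomial-time algorithm cannot compute (with a merely maximal $Q$ there is no contradiction, since the larger clique is not a superset of $Q$; one could patch this by repeatedly replacing $Q$ with the larger clique, but that is not argued). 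The assertion that a large common-type set of light vertices together with $Q$ recreates $K_r\setminus K_{1,s}$ is not substantiated at all: a light vertex has too few neighbours in $Q$ to be the centre of the pattern, and no mechanism combining several light vertices is given. The closing case $s=r-1$ is likewise only asserted to ``go through''. So the crux of the argument is missing.

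The gap is also self-inflicted: connectivity is a red herring. Prove the statement for all (not necessarily connected) $(K_r\setminus K_{1,s})$-free graphs -- the degree dichotomy does not care about connectivity -- and your obstacle disappears. This is what the paper does, inducting on $r-1-s$. In the base case $s=r-1$, where $K_r\setminus K_{1,s}=K_{r-1}\uplus K_1$ has no universal vertex, it uses \emph{non}-neighbourhoods: if some vertex has at least $n^{\frac{r-2}{r-1}}$ non-neighbours, its non-neighbourhood is $K_{r-1}$-free and Ramsey constructively yields an independent set of size $n^{\frac{1}{r-1}}$; otherwise every non-degree is small and greedily picking a vertex and deleting its non-neighbourhood builds a clique of size $n^{\frac{1}{r-1}}$. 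For $s\le r-2$ it recurses into $N(v)$ exactly as you do, with the greedy independent set as the low-degree fallback, and no analysis of components of $G[N(v)]$ or of large cliques is needed. In short: drop connectivity from the induction hypothesis and add the non-neighbourhood argument for $s=r-1$, and your first two paragraphs already constitute the whole proof.
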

\begin{proof}
First consider the case $s=r-1$, \ie the forbidden graph is $K_{r-1}$ plus an isolated vertex. If $G$ contains a vertex $v$ with non-neighborhood $N$ of size at least $n^{\frac{r-2}{r-1}}$, then, since $G[N]$ is $K_{r-1}$-free, by Ramsey's theorem, it must contains an independent set of size $|N|^{\frac{1}{r-2}}=n^{\frac{1}{r-1}}$, which can be found in polynomial time. 
We may now assume that the maximum non-degree\footnote{The non-degree of a vertex is the size of its non-neighborhood.} of $G$ is $n^{\frac{r-2}{r-1}}-1$. We construct a clique $v_1$, $\dots$, $v_q$ in $G$ by picking an arbitrary vertex $v_1$, removing its non-neighborhood, then picking another vertex $v_2$, removing its non-neighborhood, and repeating this process until the graph becomes empty. Using the above argument on the maximum non-degree, this process can be applied $\frac{n}{n^{\frac{r-2}{r-1}}} = n^{\frac{1}{r-1}}$ times, corresponding to the size of the constructed clique.

Now, we make an induction on $r-1-s$ (the base case is above). If $G$ contains a vertex $v$ with neighborhood $N$ of size at least $n^{\frac{r-2}{r-1}}$, then, since $G[N]$ is $(K_{r-1} \setminus K_s)$-free, by induction it admits either a clique or an independent set of size $|N|^{\frac{1}{r-2}} = n^{\frac{1}{r-1}}$, which can be found in polynomial time. 
We may now assume that the maximum degree of $G$ is $n^{\frac{r-2}{r-1}}-1$. We construct an independent set $v_1$, $\dots$, $v_q$ in $G$ by picking an arbitrary vertex $v_1$, removing its neighborhood, and repeating this process until the graph becomes empty. Using the above argument on the maximum degree, this process can be applied $\frac{n}{n^{\frac{r-2}{r-1}}} = n^{\frac{1}{r-1}}$ times, corresponding to the size of the constructed independent set.
\end{proof}

\begin{theorem}\label{thm:r-minus-two}
$\forall r \ge 2$, \shortindepset in $(K_r \setminus K_{1, r-2})$-free graphs has a polynomial Turing kernel.
\end{theorem}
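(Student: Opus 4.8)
The graph $H_r := K_r \setminus K_{1,r-2}$ is simply a clique $K_{r-1}$ together with one extra vertex attached as a pendant to one of its vertices: the star $K_{1,r-2}$ covers $r-1$ of the $r$ vertices, so the remaining vertex of $K_r$ stays adjacent to everything, its partner in $K_{1,r-2}$ loses all but that one edge, and the other $r-2$ vertices still form a clique. The plan is to design a polynomial-time algorithm deciding whether $G$ has an independent set of size $k$, using only oracle calls to \shortindepset on $(K_r\setminus K_{1,r-2})$-free graphs with $O(k^{r-1})$ vertices. First one reduces to connected graphs: compute the connected components; if one already has an independent set of size $k$ we answer yes, and otherwise we may determine $\min(\alpha(C_i),k)$ for each component $C_i$ by the connected-case algorithm and add these values. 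So assume $G$ is connected. If $|V(G)|$ is below the target polynomial, one oracle call finishes. Otherwise apply Lemma~\ref{lem:erdos-hajnal} to $G$; it returns in polynomial time a clique or an independent set of size $|V(G)|^{1/(r-1)}$, which we arrange to be larger than $k$ (plus a constant depending only on $r$). If it is an independent set we answer yes; hence we may assume we have a large clique $Q$.

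The heart of the argument is the structure of an $H_r$-free graph around a large clique, via two short forbidden-subgraph observations. (a) Every $v\notin Q$ is either non-adjacent to all of $Q$ or misses at most $r-3$ vertices of $Q$: otherwise $v$ has a neighbour $a\in Q$ and at least $r-2$ non-neighbours in $Q$, and then $a$ together with $r-2$ such non-neighbours forms an $(r-1)$-clique inside $Q$ of which $v$ sees exactly one vertex, so $\{v\}$ with these $r-1$ vertices induces $H_r$. (b) There is no edge between $A:=\{v\notin Q:\ v \text{ has no neighbour in } Q\}$ and $B:=\{v\notin Q:\ v \text{ misses at most } r-3 \text{ vertices of } Q\}$: if $v\in A$, $u\in B$, $vu\in E(G)$, pick $r-2$ neighbours of $u$ in $Q$ (they exist since $|Q|$ is large); with $u$ they form an $(r-1)$-clique of which $v$ sees exactly $u$, again an induced $H_r$. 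As $A$ is thus separated from $Q\cup B$ and $Q\neq\emptyset$, connectedness forces $A=\emptyset$, so $V(G)=Q\uplus B$ with every vertex of $B$ complete to all but at most $r-3$ vertices of $Q$.

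Since $Q$ is a clique, every independent set of $G$ meets $Q$ in at most one vertex. Hence $G$ has an independent set of size $k$ if and only if either $G[B]$ does, or there is $w\in Q$ with $\alpha(G[B\setminus N(w)])\geq k-1$ (in which case that set together with $w$ is an independent set of size $k$). The first option is handled by recursing on $G[B]$, which is again $(K_r\setminus K_{1,r-2})$-free and strictly smaller; the recursion bottoms out when the current graph has no $K_{r-1}$, in which case, being $K_{r-1}$-free, it has an independent set of size $n^{1/(r-2)}$ by Ramsey's theorem, so it either yields a size-$k$ independent set or has fewer than $k^{r-2}$ vertices and can be handed to the oracle.

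The delicate point — and the main obstacle — is the second option: $Q$ may be large, so naively it spawns one sub-instance $G[B\setminus N(w)]$ per $w\in Q$, and these are not a priori small. The model is $r=4$: here $H_4$ is the paw, and by Olariu's structure theorem a connected paw-free graph containing a triangle is complete multipartite; choosing $Q$ to be a maximum clique, one checks that each $B\setminus N(w)$ is an independent set, and that it has at most $k-2$ vertices unless one can already output an independent set of size $k$ — so the second option becomes a trivial polynomial-time check and $\alpha$ of complete multipartite components is read off directly, with no recursion at all. For general $r$ one needs the analogous phenomenon: taking $Q$ maximum, one must show that each relevant subgraph $G[B\setminus N(w)]$ either witnesses an independent set of size $k$ or has only $\mathrm{poly}(k)$ vertices — intuitively, with respect to the sub-clique $Q\cap N(w)$ the vertices of $B\setminus N(w)$ miss at most $r-4$ vertices, so they behave as in the case with $r$ decreased by one and an induction on $r$ applies — so that the whole procedure makes only polynomially many oracle calls on $O(k^{r-1})$-vertex graphs. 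Establishing this bound on the size of the sub-instances in the general case is the crux.
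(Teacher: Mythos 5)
Your setup matches the paper's: the same constructive Erd\H{o}s--Hajnal extraction of a large clique $Q$, and the same two forbidden-subgraph observations showing that, in a connected graph, $V(G)=Q\cup B$ where every vertex of $B$ misses at most $r-3$ vertices of $Q$. But the proof as written has a genuine gap, and you name it yourself: you never establish that the sub-instances $G[B\setminus N(w)]$, $w\in Q$, are small (or immediately yield a solution), and this bound is exactly the kernelization step -- without it there is no Turing kernel, only a polynomial-time self-reduction with possibly large oracle queries. The appeal to Olariu's theorem only covers $r=4$, and the sketched ``induction on $r$'' (vertices of $B\setminus N(w)$ miss at most $r-4$ vertices of $Q\cap N(w)$, so ``behave as in the case $r-1$'') is not carried out and is not obviously the right statement, since decreasing $r$ changes the forbidden graph, not just the clique.

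The missing step can be closed by the same trick the paper uses, where the excluded vertex itself serves as the pendant of $K_r\setminus K_{1,r-2}$. The paper guesses a solution vertex $u\in B$ (and possibly a second vertex $v\in C$) and outputs $B\setminus N[u]$ resp.\ $B\setminus(N[u]\cup N[v])$; it then shows $B\setminus N[u]$ is $K_{r-2}$-free: any $(r-2)$-clique $K$ there has a common neighbour $c\in C$ also adjacent to $u$ (each vertex involved misses at most $r-3$ vertices of $C$ and $|C|$ is large), and $K\cup\{c\}$ plus the pendant $u$ induces $K_r\setminus K_{1,r-2}$. Hence by Ramsey each output instance either yields an independent set of size $k-1$ (extendable by $u$) or has $O(k^{r-3})$ vertices, giving $O(n^2)$ instances of size $O(k^{r-1})$. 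The identical argument rescues your branching over $w\in Q$: if $K$ is an $(r-2)$-clique in $B\setminus N(w)$, pick a common neighbour $c\in Q$ of $K$; then $c\neq w$, $w$ is adjacent to $c$ (inside the clique $Q$) and non-adjacent to all of $K$, so $K\cup\{c,w\}$ induces $K_r\setminus K_{1,r-2}$. Thus each $G[B\setminus N(w)]$ is $K_{r-2}$-free, so Ramsey either produces an independent set of size $k-1$ (add $w$) or certifies $O(k^{r-3})$ vertices, and your recursion on $G[B]$ makes only polynomially many oracle calls. As submitted, however, the crux you flag is precisely the part of the proof that is absent.
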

\begin{proof}
The problem is polynomial for $r=2$ and $r=3$, hence we suppose $r \ge 4$.
Suppose we have an algorithm $\mathcal{A}$ which, given a graph $J$ and an integer $i$ such that $|V(J)| = O(i^{r-1})$, decides whether $J$ has an independent set of size $i$ in constant time.
Having a polynomial algorithm for \shortindepset assuming the existence of $\mathcal{A}$ implies a polynomial Turing kernel for the problem~\cite{CyFoKoLoMaPiPiSa15}. 
To do so, we will present an algorithm $\mathcal{B}$ which, given a \emph{connected} graph $G$ and an integer $k$, outputs a polynomial (in $|V(G)|$) number of instances of size $O(k^{r-1})$, such that one of them is positive iff the former one is.
With this algorithm in hand, we obtain the polynomial Turing kernel as follows: let $G$ and $k$ be an instance of \shortindepset. Let $V_1$, $\dots$, $V_{\ell}$ be the connected components of $G$. For every $j \in \{1, \dots, \ell\}$, we determine the size of a maximum independent set $k_j$ of $G[V_j]$ by first invoking, for successive values $i=1, \dots, k$, the algorithm $\mathcal{B}$ on input $(G[V_j], i)$, and then $\mathcal{A}$ on each reduced instance. At the end of the algorithm, we answer $YES$ iff $\sum_{j=1}^{\ell} k_i \ge k$.

We now describe the algorithm $\mathcal{B}$. Let $(G, k)$ be an input, with $n=|V(G)|$.
By Lemma~\ref{lem:erdos-hajnal}, we start by constructing a clique $C$ of size at least $n^{\frac{1}{r-1}}$ in polynomial time. We assume that $|C| > r^2$, since otherwise the instance is already reduced.

Let $B = N(C)$. First observe that for every $u \in B$, $|N_C(u)| \ge |C|-(r-3)$. Indeed, if $|N_C(u)| \le |C|-(r-2)$, then the graph induced by $r-2$ non-neighbors of $u$ in $C$ together with $u$ and a neighbor of $u$ in $C$ (which exists since $|C| > r^2$) is isomorphic to $K_r \setminus K_{1, r-2}$.
Secondly, we claim that $V(G)=C \cup B$: for the sake of contradiction, take $v \in N(B) \setminus C$, and let $u \in B$ be such that $uv \in E(G)$. By the previous argument, $u$ has at least $|C|-r+3 \ge r-2$ neighbors in $C$ which, in addition to $u$ and $v$, induce a graph isomorphic to $K_r \setminus K_{1, r-2}$.

The algorithm outputs, for every $u \in B$, the graph induced by $B \setminus N[u]$, and, for every $u \in B$ and every $v \in C$ such that $uv \notin E(G)$, the graph induced by $B \setminus (N[u] \cup N[v])$. 
The correctness of the algorithm follows from the fact that if $G$ has an independent set $S$ of size $k > 1$, then either:
\begin{itemize}
	\item $S \cap C = \emptyset$, in which case $S$ lies entirely in $B \setminus N[u]$ for any $u \in S$, or 
	\item $S \cap C = \{v\}$ for some $v \in C$, in which case $S \setminus \{v\}$ lies entirely in $B \setminus (N[u] \cup N[v])$ for any $u \in S \cap B$.
\end{itemize}
We now argue that each of these instances has $O(k^{r-3})$ vertices. To do so, observe that for any $u \in B$, $B \setminus N[u]$ does not contain $K_{r-2}$ as an induced subgraph: indeed, since $|C| > r^2$, then any set of $r-2$ vertices of $B$ must have a common neighbor in $C$. Taking a clique of size $r-2$ in $B$ together with its common neighbor in $C$ and $u$ would induce a graph isomorphic to $K_r \setminus K_{1, r-2}$. 
Since each of these instances is $K_{r-2}$-free, applying Ramsey's theorem to each of them allows us to either construct an independent set of size $k-1$ in one of them (and thus output an independent set of size $k$ in $G$), or to prove that each of them has at most $O(k^{r-3})$ vertices.
At the end, this algorithm outputs $O(n^2)$ instances, each having $O(k^{r-3})$ vertices.
\end{proof}

Since a $(K_r \setminus K_{1, r-1})$-free graph is $(K_{r-1} \setminus K_{1, r-2})$-free, we have the following:
\begin{corollary}\label{cor:r-minus-one}
$\forall r \ge 2$, \shortindepset in $(K_r \setminus K_{1, r-1})$-free graphs has a polynomial Turing Kernel.
\end{corollary}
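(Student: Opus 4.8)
The plan is to derive the corollary from Theorem~\ref{thm:r-minus-two} by a single ``raise the parameter'' step; the only point needing care is the direction of that step. First I would unwind the notation: $K_r \setminus K_{1,r-1}$ is simply $K_{r-1} \uplus K_1$, the disjoint union of a clique on $r-1$ vertices and one isolated vertex (this is exactly the base case $s=r-1$ treated at the start of the proof of Lemma~\ref{lem:erdos-hajnal}). Likewise $K_{r+1}\setminus K_{1,r-1}$ is a clique $K_r$ with one extra pendant vertex attached, and it equals $K_{r+1}\setminus K_{1,(r+1)-2}$ --- precisely the forbidden graph to which Theorem~\ref{thm:r-minus-two} applies when instantiated with parameter $r+1$.

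The one observation to make is that $K_{r-1}\uplus K_1$ occurs as an induced subgraph of $K_r$-with-a-pendant: deleting from the latter the unique neighbour of the pendant vertex leaves a $K_{r-1}$ together with the now isolated pendant, i.e.\ a copy of $K_{r-1}\uplus K_1$. Since forbidding a graph as an induced subgraph also forbids every graph that contains it, every $(K_r\setminus K_{1,r-1})$-free graph is automatically $(K_{r+1}\setminus K_{1,r-1})$-free. Hence the class of $(K_r\setminus K_{1,r-1})$-free graphs is contained in the class handled by Theorem~\ref{thm:r-minus-two} at parameter $r+1$, and the polynomial Turing kernelization guaranteed there --- being just an algorithm --- carries over verbatim to this subclass. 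For the degenerate values $r+1\in\{2,3\}$ the problem is anyway polynomial, so nothing more is needed.

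I do not expect any genuine obstacle --- the entire content is the induced-subgraph containment above --- but the orientation of that containment is the place one could slip: it is the \emph{smaller} graph $K_{r-1}\setminus K_{1,r-2}$ that embeds (as an induced subgraph) into $K_r\setminus K_{1,r-1}$, not the reverse, so to land inside the scope of Theorem~\ref{thm:r-minus-two} one must pass \emph{up} to $K_{r+1}\setminus K_{1,r-1}$ rather than down to $K_{r-1}\setminus K_{1,r-2}$. With that pinned down, the corollary follows immediately.
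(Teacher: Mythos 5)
Your proposal is correct and follows essentially the same route as the paper: Corollary~\ref{cor:r-minus-one} is deduced from Theorem~\ref{thm:r-minus-two} purely by a containment of graph classes, with no new algorithmic content. In fact, your version is the careful one. The paper's one-line justification asserts that a $(K_r \setminus K_{1,r-1})$-free graph is $(K_{r-1}\setminus K_{1,r-2})$-free, which, read literally, has the containment backwards: $K_{r-1}\setminus K_{1,r-2} = K_{r-2}\uplus K_1$ is an induced subgraph of $K_r\setminus K_{1,r-1} = K_{r-1}\uplus K_1$, so excluding the latter does not exclude the former. The correct instantiation is exactly the one you give: $K_{r-1}\uplus K_1$ embeds as an induced subgraph of $K_{r+1}\setminus K_{1,r-1} = K_{r+1}\setminus K_{1,(r+1)-2}$ (delete the unique neighbour of the pendant vertex), so every $(K_r\setminus K_{1,r-1})$-free graph falls under Theorem~\ref{thm:r-minus-two} applied with parameter $r+1$; the pitfall you flag is precisely the slip in the paper's phrasing. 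The paper also sketches, just after the corollary, an alternative direct argument (guess one solution vertex; its non-neighbourhood is $K_{r-1}$-free and Ramsey-reduces to $O(k^{r-2})$ vertices), but your derivation does not need it.
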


In other words, $(K_r \setminus K_{1, r-1})$ is a clique of size $r-1$ plus an isolated vertex. Observe that the previous corollary can actually be proved in a very simple way: informally, we can ``guess'' a vertex $v$ of the solution, and return its non-neighborhood together with parameter $k-1$. Since this non-neighborhood is $K_{r-1}$-free, it can be reduced to a $O(k^{r-2})$-sized instance. This is perhaps the most simple example of a problem admitting a polynomial Turing kernel but no polynomial kernel\footnote{\label{footnote:nopolyker}Unless $NP \subseteq coNP/poly$.} (as we will prove later in Theorem~\ref{thm:nopolyker}). By considering the complement of graphs, it implies the even simpler following observation: \textsc{Maximum Clique} has a $O(k^2)$ Turing kernel on \emph{claw}-free graphs, but no polynomial kernel\footnoteref{footnote:nopolyker}.

\begin{theorem}\label{thm:two}
$\forall r \ge 3$, \shortindepset in $(K_r \setminus K_{1, 2})$-free graphs has a kernel with $O(k^{r-1})$ vertices.
\end{theorem}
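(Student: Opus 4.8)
The plan is to prove a polynomial kernel for \shortindepset in $(K_r \setminus K_{1,2})$-free graphs, noting that $K_r \setminus K_{1,2}$ is a clique of size $r-2$ joined to a $P_3$ (equivalently, two non-adjacent vertices plus an $(r-2)$-clique, all edges present except the one non-edge); for $r=3$ this is just $P_3$, and $P_3$-free graphs are disjoint unions of cliques, where the problem is trivial, so I would assume $r \ge 4$.

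First I would reduce to connected graphs: if $G$ is $(K_r \setminus K_{1,2})$-free with connected components $V_1, \dots, V_\ell$, I solve each component and sum the independent set sizes, so it suffices to kernelize on a connected graph $G$ while asking for an independent set of size $i$ for each $i \le k$. On a connected graph, I would invoke Lemma~\ref{lem:erdos-hajnal} (constructive Erdős–Hajnal) to obtain in polynomial time either an independent set of size $n^{1/(r-1)}$ — in which case, if $n^{1/(r-1)} \ge k$ we are already done and otherwise $n = O(k^{r-1})$ and we output $G$ itself — or a clique $C$ with $|C| \ge n^{1/(r-1)}$. Assuming the latter and that $|C|$ is larger than some constant depending on $r$ (else $n$ is already small), I would study $N(C)$ exactly as in the proof of Theorem~\ref{thm:r-minus-two}: for $u \in N(C)$, if $u$ misses too many vertices of $C$, then a missed pair in $C$ together with $u$ and an additional $(r-2)$-clique inside $C$ among common neighbors of $u$ and the pair would realize $K_r \setminus K_{1,2}$ — so each $u \in N(C)$ is adjacent to all but a bounded number of vertices of $C$. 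By the same kind of argument applied to an edge leaving $N(C)$, I get $V(G) = C \cup N(C)$.

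Next I would bound $|N(C)|$ in terms of $k$. The key structural claim is that $N(C)$ contains no induced clique of size $r-2$: any $(r-2)$-clique $Q \subseteq N(C)$ would, since $|C|$ is large, have a common neighbor $w \in C$; but then $Q \cup \{w\}$ is an $(r-1)$-clique inside $N[C]$ all adjacent to each other, and picking any further vertex $x \in C$ non-adjacent to some vertex of $Q$ (or two further vertices of $C$) yields $K_r \setminus K_{1,2}$. Hence $G[N(C)]$ is $K_{r-2}$-free, so by Ramsey's theorem it either contains an independent set of size $k$ (and we are done) or $|N(C)| = O(k^{r-3})$. The clique $C$ itself contributes at most one vertex to any independent set, so the only purpose of $C$ is to certify which vertices of $N(C)$ it forbids; I would output, for each choice of at most one vertex $v \in C$ in the solution and each vertex $u \in N(C)$ (to absorb a second solution vertex so we can delete its $C$-neighborhood — mirroring the case split in Theorem~\ref{thm:r-minus-two}), the instance induced on the bounded set $N(C) \setminus (N[u] \cup N[v])$, together with a single instance keeping a representative of $C$; reworking this, one can in fact produce a single kernel of size $O(k^{r-1})$: keep all of $N(C)$ (of size $O(k^{r-3})$), and keep a set of $O(k^2)$ vertices of $C$ that realizes all distinct non-neighborhood patterns that vertices of $C$ induce on $N(C)$ — since each $u \in N(C)$ has only $O(1)$ non-neighbors in $C$, there are at most $O(|N(C)|) = O(k^{r-3})$ such patterns, and a representative of each suffices, giving $O(k^{r-3})$ retained vertices of $C$ and hence $O(k^{r-3}) = O(k^{r-1})$ vertices overall. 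One then argues equivalence: any independent set of $G$ uses at most one vertex of $C$, and its interaction with $N(C)$ is faithfully preserved by the kept representatives.

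The main obstacle I expect is the bookkeeping around $C$: unlike in the Turing-kernel setting of Theorem~\ref{thm:r-minus-two}, here we must keep everything in one instance, so I must argue carefully that keeping one representative vertex of $C$ per non-neighborhood pattern on $N(C)$ does not change the maximum independent set — in particular that discarding the other vertices of $C$ cannot destroy a solution (immediate, as deleting vertices only shrinks solutions) and cannot create one (we must check that a solution in the kernel lifts back, which holds because a kept representative $v$ has the same non-neighbors in $N(C)$ as the vertex it represents, and independent sets meet $C$ in at most one vertex). Getting the exact constant in "$u$ misses at most $r-3$ vertices of $C$" and the precise threshold on $|C|$ right, and making sure the Ramsey bound gives $O(k^{r-3})$ (using $\Ramsey(r-2,k) = O(k^{r-3})$), are the routine but delicate points.
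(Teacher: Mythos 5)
Your proposal transplants the skeleton of the paper's Turing-kernel argument for Theorem~\ref{thm:r-minus-two} (the case $s=r-2$) to $s=2$, but the two structural claims you carry over are false here. First, the claim $V(G)=C\cup N(C)$: for $K_r\setminus K_{1,r-2}$ an edge $uv$ with $u\in N(C)$ and $v\notin N[C]$ instantly yields the forbidden graph because the special vertex of that pattern misses $r-2$ vertices of the copy; for $K_r\setminus K_{1,2}$ the special vertex must miss \emph{exactly two} vertices of the copy, so no contradiction arises once $r\ge 5$ --- for instance a large clique $C$, a vertex $u$ complete to $C$, and a pendant vertex $v$ attached to $u$ is connected and $(K_r\setminus K_{1,2})$-free, yet $v\notin N[C]$. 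Second, and more damaging, your ``key structural claim'' that $G[N(C)]$ is $K_{r-2}$-free is false: a complete multipartite graph with many parts of size $2$ is $(K_r\setminus K_{1,2})$-free, and taking $C$ to be a maximum clique, $N(C)$ (the partner vertices) contains arbitrarily large cliques. Your derivation only produces $K_r$ minus a \emph{single} edge unless the extra vertex $x\in C$ happens to have exactly two non-neighbors inside the copy, which you cannot guarantee (and for cliques meeting the low-degree part of $N(C)$ even the common neighbor $w\in C$ need not exist). Consequently the bound $|N(C)|=O(k^{r-3})$, and with it your final ``keep $N(C)$ plus representative vertices of $C$'' kernel, collapses; note the multipartite example has independence number $2$, so for $k\ge 3$ it is a NO-instance of unbounded size that your reduction never shrinks.

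This large complete multipartite blow-up around $C$ is exactly the obstruction the paper's proof is built to handle: it first shows that every $u\in N(C)$ has $|N_C(u)|\ge |C|-1$ or $|N_C(u)|\le r-4$, that $C\cup B$ (with $B$ the high-degree vertices) is a complete $|C|$-multipartite graph, and that every low-degree vertex is adjacent to at most $r-4$ of its parts; the reduction rule then deletes all but the $(k-1)(r-4)+1$ largest parts, and safeness is proved by an exchange argument (a solution meeting a deleted part can trade that part for an untouched part of at least the same size avoided by the solution's low-degree vertices). This part-deletion and exchange step, absent from your proposal, is the heart of the kernel; Ramsey alone cannot bound the instance. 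Finally, your opening step (``solve each component and sum'', producing instances for every $i\le k$) is Turing-kernel reasoning and does not yield a single reduced instance; the paper's argument in fact needs no connectivity assumption at all.
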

\begin{proof}
For $r = 3$, the problem is polynomial, so we assume $r \ge 4$.
The algorithm consists in constructing, by Lemma~\ref{lem:erdos-hajnal}, a clique $C$ of size at least $n^{\frac{1}{r-1}}$ in polynomial time. We present a reduction rule in the case $|C| > (k-1)(r-4)+1$. If this rule cannot apply, then it means that the number of vertices of the reduced instance is $O(k^{r-1})$.

First observe that for every $u \in N(C)$, then either $|N_C(u)|=|C|-1$, or $|N_C(u)| \le r-4$. Indeed, suppose that $r-3 \le |N_C(u)| \le |C|-2$. Then $u$ together with $r-3$ of its neighbors in $C$ and $2$ of its non-neighbors in $C$ induce a graph isomorphic to $K_r \setminus K_{1, 2}$, a contradiction.
Let $B = \{u \in N(C) : |N_C(u)| = |C|-1\}$ and $D = \{u \in N(C) : |N_C(u)| \le r-4\}$.

We claim that $C \cup B$ is a complete $|C|$-multipartite graph. To do so, we prove that for $u,v \in B$, $N_C(u)=N_C(v)$ implies $uv \notin E(G)$, and $N_C(u) \neq N_C(v)$ implies $uv \in E(G)$.
Suppose that $N_C(u) = N_C(v) = \{x\}$. If $uv \in E(G)$, then $u$, $v$, $x$ together with $r-3$ vertices of $C$ different from $x$ induce a graph isomorphic to $K_r \setminus K_{1, 2}$, which is impossible.
Suppose now that $N_C(u) = x_u \neq x_v = N_C(v)$. If $uv \notin E(G)$, then $u$, $v$, $x_u$ together with $r-3$ vertices of $C$ different from $x_u$ and $x_v$ induce a graph isomorphic to $K_r \setminus K_{1, 2}$, which is impossible.

Thus, we now write $C \cup B = S_1 \cup \dots \cup S_{|C|}$, 
where, for every $i,j \in \{1, \dots, |C|\}$, $i \neq j$, $S_i$ induces an independent set, and $S_i \cup S_j$ induces a complete bipartite graph. We assume $|S_1| \ge |S_2| \ge \dots \ge |S_{|C|}|$. Recall that $|C| > (k-1)(r-4)+1$.
Using the same arguments as previously, we can show that every vertex of $D$ is adjacent to at most $r-4$ different parts among $C \cup B$. More formally: for every $u \in D$, we have $|\{S_i : N(u) \cap S_i \neq \emptyset\}| \le r-4$. Let $q = (k-1)(r-4)+1$.
The reduction consists in removing $S_{q+1} \cup \dots \cup S_{|C|}$. Clearly it runs in polynomial time.

Let $G'$ denote the reduced instance. Obviously, if $G'$ has an independent set of size $k$, then $G$ does, since $G'$ is an induced subgraph of $G$. It remains to show that the converse is also true.
Let $X$ be an independent set of $G$ of size $k$. If $X \cap \left( \cup_{i=q+1}^{|C|} S_i \right) = \emptyset$, then $X$ is also an independent set of size $k$ in $G'$, thus we suppose $X \cap \left( \cup_{i=q+1}^{|C|} S_i \right) = X_r \neq \emptyset$. In particular, since $C \cup B$ is a multipartite graph, there is a unique $i \in \{1, \dots, |C|\}$ such that $X \cap S_i \neq \emptyset$, and $i \ge q+1$. Since every vertex of $D$ is adjacent to at most $r-4$ parts of $C \cup B$, and since $q =(k-1)(r-4)+1$, there must exist $j \in \{1, \dots, q\}$ such that $N(X \cap D) \cap S_j = \emptyset$. Moreover, $|S_j| \ge |S_i|$. Hence, $(X \setminus S_i) \cup S_j$ is an independent set of size at least $k$ in $G'$.
\end{proof}

Observe that a $(K_r \setminus K_{2})$-free graph is $(K_{r+1} \setminus K_{1, 2})$-free, hence we have the following, which answers a question of \cite{Dabrowski12}.
\begin{corollary}\label{cor:one}
$\forall r \ge 1$, \shortindepset in $(K_r \setminus K_{2})$-free graphs has a kernel with $O(k^{r-1})$ vertices.
\end{corollary}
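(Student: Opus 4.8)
The plan is to derive Corollary~\ref{cor:one} directly from Theorem~\ref{thm:two} by a reindexing observation, with no new construction or reduction rule needed. The key point is the identity between graph classes: a graph is $(K_r \setminus K_2)$-free if and only if it is $(K_{r+1} \setminus K_{1,2})$-free. To see this, note that $K_{r+1} \setminus K_{1,2}$ is obtained from $K_{r+1}$ by deleting the two edges of a path $P_3$ (the claw $K_{1,2}$ on three vertices has exactly two edges); equivalently, it consists of a clique $K_{r-1}$ (on the $r-1$ vertices outside the $P_3$) together with one extra vertex $x$ adjacent to all of $K_{r-1}$, and two more vertices $y,z$ each adjacent to all of $K_{r-1}\cup\{x\}$ but non-adjacent to each other. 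Meanwhile $K_r\setminus K_2$ is $K_r$ minus a single edge, i.e.\ a clique $K_{r-2}$ plus two non-adjacent vertices each complete to the $K_{r-2}$. I would first verify carefully that $K_{r+1}\setminus K_{1,2}$ contains $K_r\setminus K_2$ as an induced subgraph (delete vertex $x$ above, leaving $K_{r-1}$ complete to the two non-adjacent vertices $y,z$, which is exactly $K_{(r-1)+2}\setminus K_2 = K_{r+1}\setminus K_2$; more to the point $K_r \setminus K_2$ is an induced subgraph of $K_{r+1}\setminus K_{1,2}$, taking the $K_{r-2}$ to be any $r-2$ of the clique vertices). Conversely, and this is the direction actually used, one checks that any induced copy of $K_{r+1}\setminus K_{1,2}$ in a graph $G$ contains an induced $K_r \setminus K_2$; hence if $G$ is $(K_r\setminus K_2)$-free then it is $(K_{r+1}\setminus K_{1,2})$-free.

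Granting this, the proof of the corollary is one line: given $G$ that is $(K_r\setminus K_2)$-free, it is $(K_{r+1}\setminus K_{1,2})$-free, so Theorem~\ref{thm:two} (applied with $r+1$ in place of $r$, which is at least $4$ as required when $r \ge 3$; the cases $r=1,2$ are trivial since $K_1\setminus K_2$ and $K_2\setminus K_2$ make every graph edgeless or the class consists of edgeless graphs, so \shortindepset is polynomial and the kernel is trivial) yields a kernel with $O(k^{(r+1)-1}) = O(k^{r})$ vertices. Wait --- the stated bound is $O(k^{r-1})$, not $O(k^r)$; so I must be more careful with the reindexing. The resolution is that Theorem~\ref{thm:two}'s bound $O(k^{r-1})$ on the kernel size comes, in its proof, from the fact that after the reduction the clique $C$ has size $O((k-1)(r-4)+1) = O(k)$ and the rest of the graph ($B$ split into $\le |C|$ parts, plus the $K_{r-4}$-free-neighborhood structure of $D$) contributes the $K_{r-1}$-free-Ramsey bound; actually the final count in that proof is driven by the clique $C$ being reduced to $O(k)$ parts and each relevant $K_{r-2}$-free piece having $O(k^{r-3})$ vertices, giving $O(k^{r-1})$ overall. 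Applying it with parameter $r+1$ gives a $(K_{r+1}\setminus K_{1,2})$-free statement with kernel $O(k^{((r+1))-1})=O(k^{r})$. So a naive black-box application gives only $O(k^r)$.

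To actually get $O(k^{r-1})$ I would instead re-run the argument of Theorem~\ref{thm:two} directly for $(K_r\setminus K_2)$-free graphs, exploiting that $K_r\setminus K_2$ is a \emph{smaller} forbidden graph (on $r$ vertices, not $r+1$) than what the black-box substitution handles. Concretely: by Lemma~\ref{lem:erdos-hajnal} with the forbidden graph $K_r \setminus K_{1,2}$ --- note $(K_r\setminus K_2)$-free graphs are in particular $(K_r\setminus K_{1,2})$-free? no; rather I should use that $K_r\setminus K_2 = K_r \setminus K_{1,1}$ and Lemma~\ref{lem:erdos-hajnal} applies to $K_r\setminus K_{1,s}$ for $s<r$, with $s=1$ giving $K_r$ minus one edge, exactly $K_r\setminus K_2$; so Lemma~\ref{lem:erdos-hajnal} directly furnishes in polynomial time a clique or independent set of size $n^{1/(r-1)}$ in a connected $(K_r\setminus K_2)$-free graph. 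Then with a clique $C$ of size $> (k-1)(r-2)+1$ in hand, every $u\in N(C)$ has $|N_C(u)| = |C|-1$ (if $u$ missed two vertices of $C$, then $u$ plus those two non-neighbors plus $r-3$ further clique vertices would be $K_{r}\setminus K_{1,2}$ --- but we only forbid $K_r\setminus K_2$; so instead: if $u$ missed two non-adjacent... they are adjacent in $C$; $u$ plus one non-neighbor in $C$ plus $r-2$ neighbors... that's $K_{r}$ minus the single edge from $u$ to its non-neighbor, i.e.\ $K_r\setminus K_2$, contradiction). Hence $N_C(u)=|C|-1$ for all $u \in N(C)$, so $C\cup N(C)$ is complete multipartite with parts $S_1,\dots,S_{|C|}$; moreover $V(G)=C\cup N(C)$ (a vertex at distance $2$ from $C$ would create $K_r\setminus K_2$ as in Theorem~\ref{thm:two}); and every vertex outside... there is no $D$ here since $N_C(u)=|C|-1$ forces every neighbor of $C$ into $B$. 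So $G$ itself is complete multipartite, and \shortindepset on a complete multipartite graph is trivial: the answer is yes iff some part has size $\ge k$, or more precisely iff the number of parts is $\ge k$. This makes the kernel constant-size in that regime, and $O(k^{r-1})$ when the rule does not apply (bounded $|C| = O(k)$ and $K_{r-1}$-free-Ramsey on... actually once we know $G = C\cup B$ complete multipartite the whole analysis collapses and one needs only keep $\max(k,\text{something})$ vertices). I expect the main obstacle to be exactly this bookkeeping: pinning down precisely which regime of $|C|$ triggers which structural collapse so that the non-reducible case has $O(k^{r-1})$ vertices, matching the theorem's bound rather than losing a factor of $k$; and double-checking the small cases $r\le 3$ where the problem is polynomial and the claim is vacuous. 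The cleanest writeup is probably the two-line derivation from Theorem~\ref{thm:two} via the class inclusion, accepting that it nominally gives $O(k^{r})$, and then noting the proof of Theorem~\ref{thm:two} actually yields one power better in this specialized setting because the forbidden graph has one fewer vertex; I would state the corollary as in the paper and give the short inclusion argument, referring to the structure of the proof of Theorem~\ref{thm:two} for the improved exponent.
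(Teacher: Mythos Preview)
Your core approach is exactly the paper's: the entire proof in the paper is the single sentence ``Observe that a $(K_r \setminus K_2)$-free graph is $(K_{r+1} \setminus K_{1,2})$-free, hence we have the following,'' and you identify the same class inclusion as the engine.

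You are also right to flag the exponent. Applying Theorem~\ref{thm:two} black-box with $r+1$ in place of $r$ gives a kernel with $O(k^{(r+1)-1})=O(k^{r})$ vertices, not $O(k^{r-1})$; the paper does not comment on this, so either the stated bound is an off-by-one slip or the authors tacitly intend the reader to rerun the argument with the sharper Erd\H{o}s--Hajnal exponent available for $(K_r\setminus K_{1,1})$-free graphs. Either way, your observation is accurate and the paper's one-line derivation, taken literally, only yields $O(k^r)$.

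Where your proposal drifts is in the attempted direct argument. The claim that every $u\in N(C)$ satisfies $|N_C(u)|=|C|-1$ in a $(K_r\setminus K_2)$-free graph is not correct: having a single non-neighbour $a\in C$ together with $r-2$ neighbours in $C$ already produces $K_r\setminus K_2$, so what one actually gets is the dichotomy $|N_C(u)|=|C|$ or $|N_C(u)|\le r-3$, i.e.\ the same $B$/$D$ split as in Theorem~\ref{thm:two} (with indices shifted by one). Consequently $G$ need not be complete multipartite, and your ``collapse'' does not occur. If you want the sharper $O(k^{r-1})$ bound, the honest route is to rerun the proof of Theorem~\ref{thm:two} using Lemma~\ref{lem:erdos-hajnal} with $s=1$ (which already gives a clique of size $n^{1/(r-1)}$ in $(K_r\setminus K_2)$-free graphs) and redo the $B$/$D$ analysis with the shifted thresholds; the paper does not do this explicitly.
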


\subsection{Kernel lower bounds}\label{sec:kernel-lower-bound}
\begin{definition}
 Given the graphs $H$, $H_1$, $\dots$, $H_p$, we say that $(H_1, \dots, H_p)$ is a multipartite decomposition of $H$ if $H$ is isomorphic to $H_1 + \dots + H_p$. We say that $(H_1, \dots, H_p)$ is maximal if, for every multipartite decomposition $(H_1', \dots, H_q')$ of $H$, we have $p > q$.
\end{definition}

It can easily be seen that for every graph $H$, a maximal multipartite decomposition of $H$ is unique. We have the following:

\begin{theorem}\label{thm:nopolyker}
 Let $H$ be any fixed graph, and let $H = H_1 + \dots + H_p$ be the maximal multipartite decomposition of $H$. If, for some $i \in [p]$, \shortindepset is NP-hard in $H_i$-free graphs, then \shortindepset does not admit a polynomial kernel in $H$-free graphs unless NP $\subseteq$ coNP/poly.
\end{theorem}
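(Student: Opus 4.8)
The plan is to use the standard composition machinery for ruling out polynomial kernels (OR-composition / cross-composition), combined with the fact that the join operation is very well-behaved with respect to independent sets. First I would recall the key observation: if $H = H_1 + H_2$ (a join), then a graph $G$ built as a disjoint union $G_1 \uplus G_2 \uplus \cdots \uplus G_t$ of $H_i$-free graphs is automatically $H$-free, because any induced copy of $H$ in $G$ would have to be connected (the join of nonempty graphs is connected), hence contained in a single $G_j$, and since $G_j$ is $H_i$-free for the relevant $i$ it cannot even contain $H_i$, let alone $H$. More precisely, since $H = H_1 + \cdots + H_p$ with $p \geq 2$ (if $p = 1$ the hypothesis that \shortindepset is NP-hard in $H_1$-free $=$ $H$-free graphs trivially makes the statement vacuous or we simply note $H$ itself is connected and handle it directly), an induced subgraph isomorphic to $H$ is connected, so it lies inside one connected component of the disjoint union; thus a disjoint union of $H_i$-free graphs is $H$-free for the index $i$ witnessing NP-hardness.

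The second ingredient is that \shortindepset behaves additively under disjoint union: $\alpha(G_1 \uplus \cdots \uplus G_t) = \sum_j \alpha(G_j)$. This does not directly give an OR of the \emph{decision} instances, so the cleanest route is cross-composition into the \emph{decision} problem using the observation that $G_1 \uplus \cdots \uplus G_t$ has an independent set of size $k$ iff \emph{some} $G_j$ has one, \emph{provided} we first pad. Concretely, I would take $t$ instances $(G_1, k), \dots, (G_t, k)$ of \shortindepset on $H_i$-free graphs, all with the same parameter $k$ (a polynomial-equivalence relation grouping instances by $k$ and by $|V|$ ensures we may assume this, as in the cross-composition framework), and form $G := G_1 \uplus \cdots \uplus G_t$ together with parameter $k$. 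Then $G$ is $H$-free by the first ingredient, $(G,k)$ is a YES-instance iff some $(G_j,k)$ is, and the parameter $k$ is bounded polynomially in $\max_j |V(G_j)|$ — indeed $k \le \max_j |V(G_j)|$ for nontrivial instances. This is exactly an OR-cross-composition from the NP-hard problem ``\shortindepset in $H_i$-free graphs'' into ``\shortindepset in $H$-free graphs'', so by the standard framework (Bodlaender–Jansen–Kratsch; see \cite{CyFoKoLoMaPiPiSa15}) the latter admits no polynomial kernel unless $NP \subseteq coNP/poly$.

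The main obstacle, and the only place requiring care, is the bookkeeping in the cross-composition: one must ensure the parameter of the composed instance is polynomial in $\max_j |x_j|$ rather than in $\sum_j |x_j|$ (here this is automatic since the parameter is just $k$ and does not grow under disjoint union), and one must set up the polynomial equivalence relation so that instances fed to the composition share the same $k$ and roughly the same size — instances with differing $k$ can be separated into different equivalence classes, and within a class we pad all graphs with isolated vertices up to a common vertex count (adding isolated vertices preserves $H_i$-freeness for connected $H_i$ with at least one edge, and if some $H_i$ is edgeless the class is degenerate and handled trivially). A secondary subtlety is the degenerate case where $H_i$ itself has no edges: then $H_i$-free means bounded independence number and \shortindepset is not NP-hard there, so this case does not arise under the hypothesis. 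With these points addressed, the disjoint-union construction is a valid OR-cross-composition and the theorem follows.
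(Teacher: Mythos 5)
Your construction breaks at its central step: for \shortindepset, the disjoint union does \emph{not} realize an OR of the input instances. Independent sets add up across connected components, so $\alpha(G_1 \uplus \dots \uplus G_t) = \sum_{j} \alpha(G_j)$, and the composed instance $(G_1 \uplus \dots \uplus G_t, k)$ is a YES-instance as soon as this sum reaches $k$ --- in particular whenever $t \ge k$ and every $G_j$ is non-empty --- regardless of whether any single $(G_j,k)$ is a YES-instance. The claimed equivalence ``$(G,k)$ is YES iff some $(G_j,k)$ is'' is therefore false, and padding with isolated vertices only makes matters worse, since isolated vertices increase the independence number. (A sanity check: if disjoint union with unchanged parameter were a valid OR-cross-composition, it would also rule out polynomial kernels for \shortindepset on planar graphs, which in fact has a linear kernel.) You have in effect used the composition that works for \textsc{Clique}, where a clique must live in a single component, rather than the one for \shortindepset. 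A further warning sign is that your argument never uses the maximality of the multipartite decomposition, only that $H$ is connected and contains $H_i$.

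The paper composes in the complementary way: it takes the join $G' = G_1 + \dots + G_t$ of the $H_i$-free input graphs. Then an independent set of $G'$ cannot meet two distinct parts, so $\alpha(G') = \max_j \alpha(G_j)$ and $(G',k)$ is YES iff some $(G_j,k)$ is --- a genuine OR with parameter $k$ unchanged. The non-trivial point, and precisely where maximality enters, is the $H$-freeness of $G'$: if some $X \subseteq V(G')$ induced a copy of $H$, the non-empty traces $X \cap V(G_j)$ would yield a multipartite decomposition of $H$, and by maximality of $H = H_1 + \dots + H_p$ one of its parts must contain $H_i$ as an induced subgraph, contradicting the $H_i$-freeness of the corresponding $G_j$. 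To repair your proof you would have to switch to this join construction; your connectivity argument (which only applies to disjoint unions and needs $p \ge 2$) is then replaced by this maximality argument, and the rest of the cross-composition bookkeeping you describe goes through.
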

\begin{proof}
 We construct an OR-cross-composition from \shortindepset in $H_i$-free graphs. For more details about cross-compositions, see~\cite{BoJaKr14}. Let $G_1, \dots, G_t$ be a sequence of $H_i$-free graphs, and let $G' = G_1 + \dots + G_t$. Then we have the following:
 \begin{itemize}
  \item $\alpha(G') = \max_{i=1...t} \alpha(G_i)$, since, by construction of $G'$, any independent set cannot intersect the vertex set of two distinct graphs $G_i$ and $G_j$.
  \item $G'$ is $H$-free. Indeed, suppose that $X \subseteq V(G')$ induces a graph isomorphic to $H$, and let $X_j = X \cap V(G_j)$ for every $j \in [p]$. Then observe that the graphs induced by the non-empty sets $X_j$ form a multipartite decomposition of $H$, and thus there must exist $j \in [p]$ such that $G_j[X_j]$ contains $H_i$ as an induced subgraph, a contradiction.
 \end{itemize}
 These two arguments imply a cross-composition from \shortindepset in $H_i$-free graphs to \shortindepset in $H$-free graphs.
\end{proof}

The next results shows that the polynomial kernel obtained in the previous section for $(K_r \setminus K_{1, s})$-free graphs, $s \le 2$, is somehow tight.

\begin{corollary}\label{cor:nopolyker}
 For $r \ge 4$, and every $3 \le s \le r-1$, \shortindepset in $(K_r \setminus K_{1, s})$-free graphs does not admit a polynomial kernel unless NP $\subseteq$ coNP/poly.
\end{corollary}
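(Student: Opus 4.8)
The plan is to apply Theorem~\ref{thm:nopolyker}, so the task reduces to computing the maximal multipartite decomposition of $K_r \setminus K_{1, s}$ and checking that some factor $H_i$ yields an $NP$-hard \shortindepset. Recall that $K_r \setminus K_{1,s}$ has vertex set $\{1,\dots,r\}$ where the non-edges of $H=K_{1,s}$ (namely the $s$ edges from the center to the leaves) become the only non-edges: so $K_r \setminus K_{1,s}$ is obtained from a clique on $r$ vertices by deleting a star with $s$ edges. Equivalently, writing the center as $c$ and the $s$ leaves as $\ell_1,\dots,\ell_s$ and the remaining $r-1-s$ vertices as a clique $Q$, the vertex $c$ is non-adjacent exactly to $\ell_1,\dots,\ell_s$, the leaves $\ell_1,\dots,\ell_s$ are pairwise adjacent and adjacent to all of $Q$, and $Q$ is complete to everything except nothing. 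So $K_r \setminus K_{1,s} = \big(K_{r-1-s}\big) + \big(\,\{c\} \uplus K_{s}\,\text{-structure}\,\big)$; more precisely, the vertices $\ell_1,\dots,\ell_s$ together with $Q$ form a clique of size $r-1$, and $c$ is joined to $Q$ but not to the $\ell_i$. Thus $K_r \setminus K_{1,s}$ is the join of the clique $Q = K_{r-1-s}$ with the graph $F$ on $s+1$ vertices consisting of $K_s$ (the leaves) plus one isolated-from-the-$K_s$ vertex $c$ — that is, $F = K_{s-1} + (K_1 \uplus K_1)$ when $s\ge 2$ (since the leaves $\ell_1,\dots,\ell_s$ are still a clique, one of them is actually not needed as a separate factor: $F = K_s$-with-an-extra-vertex-adjacent-to-nothing, which is $K_{s-1} + \overline{K_2}$).

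Next I would verify this decomposition is maximal, i.e.\ that the factor $F = K_{s-1} + \overline{K_2}$ (a clique of size $s-1$ joined to two non-adjacent vertices) admits no further nontrivial join decomposition. A graph decomposes as a join iff its complement is disconnected; $\overline{F}$ consists of the single edge $\{c',\ell'\}$ (complement of $\overline{K_2}$ inside $F$ restricted appropriately) plus $s-1$ isolated vertices coming from the clique $K_{s-1}$ — wait, one must be careful: the complement of $K_{s-1}+\overline{K_2}$ is $\overline{K_{s-1}} \uplus K_2 = \overline{K_{s-1}} \uplus \overline{K_2}^c$. This has $s-1$ isolated vertices and one edge, hence it is disconnected, so $F$ itself further decomposes as $K_{s-1} + \overline{K_2}$ — which is exactly the decomposition I already wrote, and the factor $\overline{K_2} = K_1 \uplus K_1$ has connected complement ($K_2$), so it is indecomposable, as is each $K_1$. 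Therefore the maximal multipartite decomposition of $K_r\setminus K_{1,s}$ is $K_1 + K_1 + \cdots + K_1$ ($r-s-1$ times from $Q$, plus $s-1$ times from the leaf-clique) joined with $\overline{K_2}$; concretely the only non-complete factor is $\overline{K_2} = K_1\uplus K_1 = \overline{K_2}$, equivalently $2K_1$.

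The key obstruction — and the real content — is then to produce a single factor $H_i$ for which \shortindepset is $NP$-hard in $H_i$-free graphs. Here I would instead group the decomposition slightly more coarsely: since $s \ge 3$, the leaves contribute a clique of size $s \ge 3$, so among the factors we may take $K_{s-2}$ (part of the leaf clique, available because $s-2 \ge 1$) and separately $K_1 + K_1 = \overline{K_2}$ from the remaining leaf plus $c$... but $\overline{K_2}$-free graphs have trivial \shortindepset. The right move is to keep a larger indecomposable-enough factor: note $K_r \setminus K_{1,s}$ with $s \le r-1$ contains, as the factor obtained by peeling off only $r-s-1 \ge 0$ universal vertices of $Q$ and $s-2 \ge 1$ universal leaf-vertices, the graph $K_1 + (K_1 \uplus K_1)$, i.e.\ the path $P_3$ on three vertices — no: $K_1+\overline{K_2}$ is $P_3$. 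And \shortindepset is polynomial on $P_3$-free graphs, so that fails too. The resolution is that one does \emph{not} peel everything: the maximal decomposition leaves the single factor $\overline{K_2}$ and a bunch of $K_1$'s, but Theorem~\ref{thm:nopolyker} lets us pick \emph{any} factor $H_i$, and none of these individual factors is $NP$-hard. Hence the corollary must instead be obtained by observing that $(K_r \setminus K_{1,s})$-free graphs form a subclass of $(K_{r'} \setminus K_{1,s'})$-free graphs for suitable smaller parameters, or by a direct cross-composition. Concretely: for $s \ge 3$, the graph $K_r \setminus K_{1,s}$ has a maximal multipartite decomposition in which one factor is $K_r \setminus K_{1,s}$ restricted to a \emph{non-join-decomposable} core containing $K_1 \uplus K_1 \uplus K_1 = \overline{K_3}$ whenever $s \ge 3$ — indeed the three leaves are pairwise adjacent so that is wrong, but the center $c$ together with any \emph{two} leaves forms... still a $P_3$.

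Given this tension, the honest plan is: the maximal multipartite decomposition of $H = K_r \setminus K_{1,s}$ is $\underbrace{K_1 + \cdots + K_1}_{r-2 \text{ copies}} + \overline{K_2}$ only when $s = r-1$; for general $3 \le s \le r-1$ one checks directly that the decomposition has as its unique non-clique factor the graph $\overline{K_2}$ together with $r-2$ copies of $K_1$, and \textbf{this is the wrong factorization to use}. Instead, apply Theorem~\ref{thm:nopolyker} not to $H$ itself but exploit that $(K_r \setminus K_{1,s})$-free implies $H'$-free for $H' = K_1 + (\text{something NP-hard})$: since $s \ge 3$, $H$ contains $K_3 \setminus K_{1,?}$... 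The cleanest correct route, which I expect to be the intended one, is: $K_r \setminus K_{1,s}$ for $s \ge 3$ can be written as $K_{r-3} + (K_3 \setminus K_{1,3})$-type factor, and $K_3 \setminus K_{1,3} = \overline{K_3}$ plus... no, $K_3$ minus a claw with $3$ leaves on $3$ vertices is $\overline{K_3}$ minus nothing $= \overline{K_3}$? That has independent set $=3$ trivially. I therefore expect the main obstacle to be precisely pinning down which factor of the maximal decomposition is $NP$-hard; the resolution is that $K_r \setminus K_{1,s}$ with $s \ge 3$ and $r \ge 4$ contains $C_4$ as an \emph{induced} subgraph (take the center $c$, two leaves, and one vertex of $Q$ — but the two leaves are adjacent, giving a diamond, not $C_4$; take $c$, one leaf $\ell_1$, and note $c\not\sim\ell_1$...). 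Rather, $K_r\setminus K_{1,s}$ has the property that after removing the $r-1-s$ dominating vertices of $Q$ and all but one leaf, what remains is $P_3$; but after removing only $r-1-s-1$ of them (possible since $r - 1 - s \ge 1$... only when $s \le r-2$) one keeps one more universal vertex, i.e.\ a factor $K_1 + P_3 = $ ``paw'''s complement... I would therefore present the final proof as: the maximal multipartite decomposition of $K_r \setminus K_{1,s}$, $s \ge 3$, has among its factors a copy of $K_1 \uplus K_1 \uplus K_1 = \overline{K_3}$ — \emph{no}. Concede: the factor that works is $\overline{K_2}$ is impossible, so the actual mechanism must be that $s\geq 3$ forces the factor $K_1 \uplus K_s$-structure which is \emph{not} a join (complement $= K_1 + \overline{K_s}$ is connected for $s\ge1$... $K_1 + \overline{K_s}$ means one vertex joined to $s$ independent vertices, i.e.\ the star $K_{1,s}$, which is connected), hence $F' := \overline{K_1 \uplus K_s}^{\,c}$... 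I will stop here: \textbf{the main obstacle is identifying the NP-hard factor}, and I would resolve it by noting that for $s \ge 3$ the graph $K_r \setminus K_{1,s}$ has a factor in its maximal decomposition isomorphic to $K_{r-1-s} + \overline{K_2}$ joined appropriately so that, choosing the grouping to keep the factor $H_i = K_{s-2} + \overline{K_2} = K_s \setminus K_{1,2}$-on-$s+1$... and invoking that \shortindepset is $NP$-hard in $(K_4\setminus K_{1,3})$-free $= $ diamond-free... graphs via Theorem~\ref{thm:nph} applied to the connected factor (which is neither a path nor a claw-subdivision once $s\ge 3$), completing the cross-composition.
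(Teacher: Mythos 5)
Your starting point is correct---reduce to Theorem~\ref{thm:nopolyker} by computing the maximal multipartite decomposition of $K_r\setminus K_{1,s}$---but the computation of that decomposition goes wrong at its first step, and everything downstream unravels from there. After peeling off the $r-1-s$ universal vertices of $Q$, the remaining factor is the graph $F$ on $\{c,\ell_1,\dots,\ell_s\}$ in which the leaves form a clique and $c$ has \emph{no} neighbour at all, i.e.\ $F=K_s\uplus K_1$. You rewrite this as $K_{s-1}+\overline{K_2}$, which is a different graph: in $K_{s-1}+\overline{K_2}$ both nonadjacent vertices are complete to the $K_{s-1}$, whereas in $F$ the vertex $c$ is adjacent to none of the leaves. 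Moreover $F$ is join-indecomposable, since its complement is the star $K_{1,s}$, which is connected. Hence the maximal multipartite decomposition is $(K_s\uplus K_1)+K_1+\dots+K_1$ (with $r-1-s$ copies of $K_1$), not a collection of $K_1$'s together with one $\overline{K_2}$. This single mis-identification is exactly why you then cannot locate any NP-hard factor, and the remainder of the write-up is a chain of attempted repairs that never reaches a proof.

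The missing idea is that the nontrivial factor $H_i=K_s\uplus K_1$ is itself the NP-hard one once $s\ge 3$: every $K_s$-free graph---in particular every triangle-free graph---is $(K_s\uplus K_1)$-free, and \shortindepset is NP-hard on triangle-free graphs. (Theorem~\ref{thm:nph} cannot be invoked directly here, since $K_s\uplus K_1$ is disconnected; also, your final identification of $K_4\setminus K_{1,3}$ with the diamond is incorrect---that graph is a triangle plus an isolated vertex.) With the correct decomposition and this NP-hardness in hand, Theorem~\ref{thm:nopolyker} immediately yields the corollary, which is precisely the paper's short proof.
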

\begin{proof}
 In that case, observe that the maximal multipartite decomposition of $K_r \setminus K_{1, s}$ is $$\dot{K_{s}} + \overbrace{K_1 + \dots + K_1}^{r-1-s \text{ times }}$$
where $\dot{K_{s}}$ denotes the clique of size $s$ plus an isolated vertex. Moreover, \shortindepset is NP-hard in $\dot{K_{s}}$-free graphs for $s \ge 3$.
\end{proof}


We conjecture that Theorem~\ref{thm:nopolyker} actually captures all possible negative cases concerning the kernelization of the problem. Informally speaking, our intuition is the natural idea that the join operation between graphs seems the only way to obtain $\alpha(G) = O(\max_{i=1, \dots, t} \alpha(G_i))$, which is the main ingredient of OR-compositions.

\begin{conjecture}
Let $H$ be any fixed graph, and $H=H_1 + \dots + H_p$ be its maximal multipartite decomposition. Then, assuming that $NP \not \subseteq coNP/poly$, \shortindepset admits a polynomial kernel in $H$-free graphs if and only if it is polynomial in $H_i$-free graph, for every $i \in [p]$.
\end{conjecture}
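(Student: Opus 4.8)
\emph{Proof proposal.} I would treat the two implications separately. The implication ``a polynomial kernel in $H$-free graphs forces \shortindepset to be polynomial in every $H_i$-free class'' is, up to contraposition, Theorem~\ref{thm:nopolyker}: if \shortindepset is not polynomial in $H_i$-free graphs for some $i$, then it is in fact $NP$-hard there — by Alekseev's Theorem~\ref{thm:nph} when $H_i$ is connected and neither a path nor a subdivided claw, by the analogous classical reductions for the disconnected parts that actually occur (a clique plus an isolated vertex, a disjoint union of cliques, $\ldots$), and by the expected polynomial/$NP$-hard dichotomy for the few residual cases — and then Theorem~\ref{thm:nopolyker}, applied to $H = H_1 + \dots + H_p$, rules out a polynomial kernel in $H$-free graphs unless $NP \subseteq coNP/poly$. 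The converse is the real content, and one should note first that it is false as literally stated: for $H = C_4$, whose maximal multipartite decomposition is $\overline{K_2} + \overline{K_2}$, \shortindepset is trivially polynomial in $\overline{K_2}$-free graphs (these are the complete graphs), yet it is $W[1]$-hard in $C_4$-free graphs by Theorem~\ref{thm:Whard} and hence has no polynomial kernel there (a polynomial kernel entails membership in $FPT$). I would therefore prove the converse under the natural extra hypothesis that \shortindepset is $FPT$ in $H$-free graphs, i.e.\ locate the polynomial-kernel boundary \emph{within} the $FPT$ regime; even this amended claim has content, as it predicts a polynomial kernel for \emph{gem}-free graphs, which is not obtained here.

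For the amended converse the plan is an induction on $p$. For $p = 1$, $H$ is join-indecomposable and the hypothesis says \shortindepset is polynomial in $H$-free graphs, so a trivial kernel works. For $p \ge 2$, let $G$ be $H$-free; if $G$ is $H_1$-free, solve the problem in polynomial time and output a constant-size instance. Otherwise I would fix an induced copy $W$ of $H_1$ in $G$ and let $A$ be the set of vertices adjacent to all of $W$: then $G[A]$ is $(H_2 + \dots + H_p)$-free, since an induced $H_2 + \dots + H_p$ inside $A$ would, together with $W$, form an induced $H$. As $H_2 + \dots + H_p$ is an induced subgraph of $H$, the class of $(H_2 + \dots + H_p)$-free graphs is contained in that of $H$-free graphs, so \shortindepset is $FPT$ there and polynomial in $H_j$-free graphs for every $j \in \{2,\ldots,p\}$; by the induction hypothesis it then admits a polynomial kernel on $(H_2 + \dots + H_p)$-free graphs, which I would apply to $G[A]$. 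The leftover set $B := V(G) \setminus (W \cup A)$ consists of vertices each non-adjacent to some vertex of $W$, hence is covered by $|V(H_1)|$ many non-neighborhoods; I would try to shrink it by recursing the same extraction inside $G[B]$, or by branching on the $O(1)$ ways a size-$k$ independent set can meet $W$ and $B$ and pasting together the kernels obtained for the pieces.

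The step that resists this outline, and which I expect to be the main obstacle, is precisely this last one: upgrading the local, domination-based reduction around a single extracted copy of $H_1$ to a genuine \emph{many-one} polynomial kernel for all of $G$. The common-neighborhood trick only controls the part of $G$ completely joined to that copy; the non-neighborhoods forming $B$ are arbitrary $H$-free graphs and need not be smaller, so the recursion produces $n^{O(1)}$ sub-instances, i.e.\ at best a polynomial Turing kernel. Turning this into a true kernel seems to require a structural statement of the flavor hinted at after Theorem~\ref{thm:nopolyker}: that every $H$-free graph decomposes, along a bounded number of joins of bounded size, into blocks each of which is either $H_i$-free for some $i$ or of bounded independence number, so that $\alpha(G)$ and a kernel can be reassembled block by block. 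Proving such a decomposition uniformly over all $H$ — presumably not before the classical complexity of \shortindepset in $H$-free graphs is understood for the remaining open graphs $H$ — is where the real difficulty lies.
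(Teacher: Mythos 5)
The statement you are addressing is posed in the paper as an open conjecture; the authors give no proof of it, so there is nothing of theirs to compare your argument against, and the only question is whether your proposal settles the conjecture. It does not, as you yourself concede, and the gaps are real. For the direction ``polynomial kernel in $H$-free graphs $\Rightarrow$ \shortindepset polynomial in every $H_i$-free class'', Theorem~\ref{thm:nopolyker} needs \shortindepset to be \emph{NP-hard} in some $H_i$-free class, whereas the negation of the conjecture's right-hand side only gives ``not polynomial''. You bridge this with an ``expected polynomial/NP-hard dichotomy'', but no such dichotomy is known: join-indecomposable graphs such as long paths $P_t$, $S_{1,2,2}$ or $S_{1,1,3}$ can perfectly well occur as factors $H_i$, and the classical complexity of \shortindepset on those classes is open (the paper recalls this in the introduction). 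So even this direction is not a consequence of the paper's results. For the converse, your induction stalls exactly where you say it does: extracting one induced copy $W$ of $H_1$ only controls the common neighborhood $A$ of $W$, which is $(H_2+\dots+H_p)$-free; the leftover set $B$ of vertices missing a neighbor in $W$ is an arbitrary $H$-free graph of essentially full size, so the recursion yields $n^{O(1)}$ sub-instances, i.e.\ at best a Turing-style reduction (compare Theorem~\ref{thm:r-minus-two}, where the paper itself only achieves a Turing kernel by this kind of extraction), and no mechanism is offered for reassembling these into a single many-one kernel. The structural statement you correctly identify as the missing ingredient is, in essence, the conjecture itself.

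Two further remarks on your framing. Your $C_4$ observation is a genuinely useful catch: since $C_4 = \overline{K_2}+\overline{K_2}$ and \shortindepset is trivial on $\overline{K_2}$-free (complete) graphs, the literal ``if and only if'' would force a polynomial kernel, hence an FPT algorithm, on $C_4$-free graphs, clashing with Theorem~\ref{thm:Whard}. But note that this contradiction rests on $FPT \neq W[1]$, not on the conjecture's stated hypothesis $NP \not\subseteq coNP/poly$; so what you have shown is that the conjecture needs an additional proviso (e.g.\ restricting to graphs $H$ for which \shortindepset is FPT in $H$-free graphs, which is plausibly the authors' intent given that their stated motivation concerns OR-compositions only), not that it is outright refuted under its own assumption. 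Finally, be aware that your amended converse already implies a polynomial kernel for gem-free graphs, strictly stronger than the randomized FPT algorithm the paper obtains, which is a further sign that the remaining step is the hard core of the problem rather than a technicality.
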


\section{Conclusion and open problems}

We started to unravel the FPT/$W[1]$-hard dichotomy for \shortindepset in $H$-free graphs, for a fixed graph $H$.
At the cost of one reduction, we showed that it is $W[1]$-hard as soon as $H$ is not chordal, even if we simultaneously forbid induced $K_{1,4}$ and trees with at least two branching vertices.
Tuning this construction, it is also possible to show that if a connected $H$ is not roughly a "path of cliques" or a "subdivided claw of cliques", then \shortindepset is $W[1]$-hard.
More formally, with the definitions of Section~\ref{sec:harddecompositions}, the remaining connected open cases are when $H$ has an almost strong clique decomposition on a subdivided claw or a nearly strong clique decomposition on a path.
In this language, we showed that for every connected graph $H$ with a strong clique decomposition on a $P_3$, there is an FPT algorithm.
However, we also proved that for a very simple graph $H$ with a strong clique decomposition on the claw, \shortindepset is $W[1]$-hard.
This suggests that the FPT/$W[1]$-hard dichotomy will be somewhat subtle.
For instance, easy cases for the parameterized complexity do \emph{not} coincide with easy cases for the classical complexity where each vertex can be blown into a clique. 
For graphs $H$ with a clique decomposition on a path, the first unsolved cases are $H$ having:
\begin{itemize}
\item an almost strong clique decomposition on $P_3$;
\item a nearly strong clique decomposition on $P_3$;
\item a strong clique decomposition on $P_4$.
\end{itemize}
For graphs $H$ with a clique decomposition on the claw, an interesting open question is the case of \emph{cricket}-free graphs ($T_{1,1,2}$-free with our notation defined before Theorem~\ref{thm:T122}), and, more generally, in $T_{1,1,s}$-free graphs.

For disconnected graphs $H$, we obtained an FPT algorithm when $H$ is a cluster (\ie, a disjoint union of cliques).
We conjecture that, more generally, the disjoint union of two easy cases is an easy case; formally, \emph{if \shortindepset is FPT in $G$-free graphs and in $H$-free graphs, then it is FPT in $G \uplus H$-free graphs}.

A natural question regarding our two $FPT$ algorithms of Section~\ref{sec:positive-two} concerns the existence of polynomial kernels. In particular, we even do not know whether the problem admits a kernel for very simple cases, such as when $H=K_5 \setminus K_3$ or $H=K_5 \setminus K_{2, 2}$.

A more anecdotal conclusion is the fact that the parameterized complexity of the problem on $H$-free graphs is now complete for every graph $H$ on four vertices, including concerning the polynomial kernel question (see Figure~\ref{table:fourvertices}).

\begin{figure}
{\setlength{\extrarowheight}{15pt}
\begin{tabular}{|c|p{1.5cm}|p{1.5cm}|p{1.5cm}|p{1.5cm}|}
\hline
Graph & P & PK & PTK & FPT \\[5ex]
  \hline
 \begin{tikzpicture} [scale=0.5]
 \node[draw, circle, fill=black, inner sep=-0.05cm] at (0,0) {} ;
 \node[draw, circle, fill=black, inner sep=-0.05cm] at (0,1) {} ;
 \node[draw, circle, fill=black, inner sep=-0.05cm] at (1,0) {} ;
 \node[draw, circle, fill=black, inner sep=-0.05cm] at (1,1) {} ;
 \end{tikzpicture}  & \cellcolor{green!20} Obvious & \cellcolor{green!20} & \cellcolor{green!20} & \cellcolor{green!20} \\
  \hline
 \begin{tikzpicture} [scale=0.5]
 \node[draw, circle, fill=black, inner sep=-0.05cm] at (0,0) {} ;
 \node[draw, circle, fill=black, inner sep=-0.05cm] at (0,1) {} ;
 \node[draw, circle, fill=black, inner sep=-0.05cm] at (1,0) {} ;
 \node[draw, circle, fill=black, inner sep=-0.05cm] at (1,1) {} ;
 \draw (0,0) -- (1,0);
 \end{tikzpicture}  & \cellcolor{green!20} Obvious & \cellcolor{green!20} & \cellcolor{green!20} & \cellcolor{green!20} \\
  \hline
  
 \begin{tikzpicture} [scale=0.5]
 \node[draw, circle, fill=black, inner sep=-0.05cm] at (0,0) {} ;
 \node[draw, circle, fill=black, inner sep=-0.05cm] at (0,1) {} ;
 \node[draw, circle, fill=black, inner sep=-0.05cm] at (1,0) {} ;
 \node[draw, circle, fill=black, inner sep=-0.05cm] at (1,1) {} ;
 \draw (0,0) -- (1,0);
 \draw (0,0) -- (0,1);
 \end{tikzpicture}  & \cellcolor{green!20} Obvious & \cellcolor{green!20} & \cellcolor{green!20} & \cellcolor{green!20} \\
  \hline
  
 \begin{tikzpicture} [scale=0.5]
 \node[draw, circle, fill=black, inner sep=-0.05cm] at (0,0) {} ;
 \node[draw, circle, fill=black, inner sep=-0.05cm] at (0,1) {} ;
 \node[draw, circle, fill=black, inner sep=-0.05cm] at (1,0) {} ;
 \node[draw, circle, fill=black, inner sep=-0.05cm] at (1,1) {} ;
 \draw (0,0) -- (1,0);
 \draw (0,1) -- (1,1);
 \end{tikzpicture}  & \cellcolor{green!20} \cite{Ale91} & \cellcolor{green!20} & \cellcolor{green!20} & \cellcolor{green!20} \\
  \hline

 \begin{tikzpicture} [scale=0.5]
 \node[draw, circle, fill=black, inner sep=-0.05cm] at (0,0) {} ;
 \node[draw, circle, fill=black, inner sep=-0.05cm] at (0,1) {} ;
 \node[draw, circle, fill=black, inner sep=-0.05cm] at (1,0) {} ;
 \node[draw, circle, fill=black, inner sep=-0.05cm] at (1,1) {} ;
 \draw (0,0) -- (1,0);
 \draw (0,0) -- (0,1);
 \draw (0,0) -- (1,1);
 \end{tikzpicture}  & \cellcolor{green!20} \cite{Mi80}  & \cellcolor{green!20} & \cellcolor{green!20} & \cellcolor{green!20} \\
  \hline

 \begin{tikzpicture} [scale=0.5]
 \node[draw, circle, fill=black, inner sep=-0.05cm] at (0,0) {} ;
 \node[draw, circle, fill=black, inner sep=-0.05cm] at (0,1) {} ;
 \node[draw, circle, fill=black, inner sep=-0.05cm] at (1,0) {} ;
 \node[draw, circle, fill=black, inner sep=-0.05cm] at (1,1) {} ;
 \draw (0,0) -- (1,0);
 \draw (0,0) -- (0,1);
 \draw (0,1) -- (1,1);
 \end{tikzpicture}  & \cellcolor{green!20} \cite{CoPeSt85} & \cellcolor{green!20} & \cellcolor{green!20} & \cellcolor{green!20} \\
  \hline
 \begin{tikzpicture} [scale=0.5]
 \node[draw, circle, fill=black, inner sep=-0.05cm] at (0,0) {} ;
 \node[draw, circle, fill=black, inner sep=-0.05cm] at (0,1) {} ;
 \node[draw, circle, fill=black, inner sep=-0.05cm] at (1,0) {} ;
 \node[draw, circle, fill=black, inner sep=-0.05cm] at (1,1) {} ;
 \draw (0,0) -- (1,0);
 \draw (0,0) -- (1,1);
 \draw (0,0) -- (0,1);
 \draw (1,0) -- (0,1);
 \draw (1,0) -- (1,1);
 \draw (0,1) -- (1,1);
 \end{tikzpicture}  & \cellcolor{red!20} Thm.~\ref{thm:nph} & \cellcolor{green!20} Ramsey & \cellcolor{green!20}  & \cellcolor{green!20} \\
  \hline
 \begin{tikzpicture} [scale=0.5]
 \node[draw, circle, fill=black, inner sep=-0.05cm] at (0,0) {} ;
 \node[draw, circle, fill=black, inner sep=-0.05cm] at (0,1) {} ;
 \node[draw, circle, fill=black, inner sep=-0.05cm] at (1,0) {} ;
 \node[draw, circle, fill=black, inner sep=-0.05cm] at (1,1) {} ;
 \draw (0,0) -- (1,0);
 \draw (0,0) -- (1,1);
 \draw (0,0) -- (0,1);
 \draw (1,0) -- (1,1);
 \draw (0,1) -- (1,1);
 \end{tikzpicture}  & \cellcolor{red!20} Thm.~\ref{thm:nph}  & \cellcolor{green!20} Cor.~\ref{cor:one} & \cellcolor{green!20} & \cellcolor{green!20} \\
  \hline
 \begin{tikzpicture} [scale=0.5]
 \node[draw, circle, fill=black, inner sep=-0.05cm] at (0,0) {} ;
 \node[draw, circle, fill=black, inner sep=-0.05cm] at (0,1) {} ;
 \node[draw, circle, fill=black, inner sep=-0.05cm] at (1,0) {} ;
 \node[draw, circle, fill=black, inner sep=-0.05cm] at (1,1) {} ;
 \draw (0,0) -- (1,0);
 \draw (0,0) -- (1,1);
 \draw (0,0) -- (0,1);
 \draw (0,1) -- (1,1);
 \end{tikzpicture}  & \cellcolor{red!20} Thm.~\ref{thm:nph}  & \cellcolor{green!20} Thm.~\ref{thm:two} & \cellcolor{green!20} & \cellcolor{green!20} \\
  \hline
 \begin{tikzpicture} [scale=0.5]
 \node[draw, circle, fill=black, inner sep=-0.05cm] at (0,0) {} ;
 \node[draw, circle, fill=black, inner sep=-0.05cm] at (0,1) {} ;
 \node[draw, circle, fill=black, inner sep=-0.05cm] at (1,0) {} ;
 \node[draw, circle, fill=black, inner sep=-0.05cm] at (1,1) {} ;
 \draw (0,0) -- (1,0);
 \draw (0,0) -- (0,1);
 \draw (1,0) -- (0,1);
 \end{tikzpicture}  & \cellcolor{red!20} & \cellcolor{red!20} Cor.~ \ref{cor:nopolyker} & \cellcolor{green!20} Cor.~\ref{cor:r-minus-one} & \cellcolor{green!20} \\
  \hline
 \begin{tikzpicture} [scale=0.5]
 \node[draw, circle, fill=black, inner sep=-0.05cm] at (0,0) {} ;
 \node[draw, circle, fill=black, inner sep=-0.05cm] at (0,1) {} ;
 \node[draw, circle, fill=black, inner sep=-0.05cm] at (1,0) {} ;
 \node[draw, circle, fill=black, inner sep=-0.05cm] at (1,1) {} ;
 \draw (0,0) -- (1,0);
 \draw (1,0) -- (1,1);
 \draw (1,1) -- (0,1);
 \draw (0,1) -- (0,0);

 \end{tikzpicture}  & \cellcolor{red!20} & \cellcolor{red!20} & \cellcolor{red!20} & \cellcolor{red!20} Thm.~\ref{thm:Whard} \\
  \hline
\end{tabular}
}
\caption{Status of the problem for graphs $H$ on four vertices. $P$, $PK$, $PTK$ respectively stand for \emph{Polynomial}, \emph{$NP$-hard but admits a polynomial kernel}, and \emph{no polynomial kernel unless $NP \subseteq coNP/poly$ but admits a polynomial Turing kernel}.}
\label{table:fourvertices}
\end{figure}

\bibliography{biblio}

\end{document}